\documentclass[12pt]{article}
\usepackage[margin=1in]{geometry}
\usepackage{times}
\usepackage{soul}
\usepackage{url}
\usepackage[hidelinks]{hyperref}
\usepackage[utf8]{inputenc}
\usepackage[T1]{fontenc}
\usepackage{mathptmx}
\usepackage[small]{caption}
\usepackage{graphicx}
\usepackage{amsmath}
\usepackage{amssymb}
\usepackage{amsthm}
\usepackage{booktabs}
\usepackage{multirow}
\usepackage{algorithm}
\usepackage{microtype}
\usepackage{algpseudocode}
\usepackage{enumitem}
\usepackage{thm-restate}
\usepackage{thmtools}
\usepackage{tcolorbox}
\usepackage{cleveref}

\newcommand{\Z}{\mathbb{Z}}

\newcommand{\cT}{\mathcal{T}}

\newtheorem{observation}{Observation}
\newtheorem{corollary}{Corollary}
\newtheorem{lemma}{Lemma}
\newtheorem{definition}{Definition}
\newtheorem{proposition}{Proposition}
\newtheorem{claim}{Claim}
\newtheorem{rr}{Reduction Rule}

\usepackage{tikz}
\usetikzlibrary{fit, arrows.meta, positioning, backgrounds}
\usetikzlibrary{decorations.pathreplacing, calligraphy, arrows.meta}

\newcommand{\SR}[1]{\textcolor{red}{}}

\date{}

\title{Complexity of Eliminating (Majority) Illusion in Directed Networks}

\author{
Sougata Jana,
Sanjukta Roy\\
Indian Statistical Institute Kolkata\\
sougatajana\_r@isical.ac.in, sanjukta@isical.ac.in
}

\begin{document}
\maketitle
\begin{abstract}
We study illusion elimination problems on directed social networks where each vertex is colored either red or blue. A vertex is under \textit{majority illusion} if it has more red out-neighbors than blue out-neighbors when there are more blue vertices than red ones in the network. In a more general phenomenon of $p$-illusion, a vertex under illusion has at least $p$ fraction of the out-neighbors (as opposed to $1/2$ for majority) of a vertex is red. In the directed illusion elimination problem, we recolor minimum number of vertices so that no vertex is under $p$-illusion, for $p\in (0,1)$. Unfortunately, the problem is NP-hard for $p =1/2$ even when the network is  a grid. Moreover, the problem is NP-hard and W[2]-hard when parameterized by the number of recoloring for each $p \in (0,1)$ even on bipartite DAGs. Thus, we can neither get a polynomial time algorithm on DAGs, unless P=NP, nor we can get a FPT algorithm even by combining solution size and directed graph parameters that measure distance from acyclicity, unless FPT=W[2]. Furthermore, the problem is NP-complete in graphs of degree at most three for $p\in (0,\frac{2}{3}]$. We show that the problem can be solved in polynomial time in structured, sparse networks such as outerplanar networks, outward grids, trees, and cycles. Finally, we show tractable algorithms parameterized by treewidth of the underlying undirected graph and maximum deficiency of the graph combined, and by the number of vertices under illusion.
\end{abstract}

\section{Introduction}

    % Many fundamental problems in combinatorial optimization can be expressed as selecting a small subset of vertices to satisfy local constraints. Classic examples include vertex cover, where each edge must be covered by at least one selected endpoint, and hitting set, where each set must contain at least one selected element. Such problems play a central role in complexity theory, approximation algorithms, and parameterized complexity. 
    
    Social networks influence opinions in both practice and theory \cite{castiglioni2020election,ferrara2022twitter,FujiwaraMS24}. Our perceptions are heavily filtered by those we follow rather than the objective global truth.
    % It can enforce existing bias, create pluralistic ignorance, information bubble, and political polarization.
    This vulnerability is often exploited by political parties or marketing brands through \textit{majority illusion} \cite{lerman2016majority}, a social phenomenon where a rare opinion appears mainstream because a few highly connected agents adopt it. 
    
    The implications of majority illusion extend significantly into strategic communication, particularly within the realms of political campaigning, digital advertising, and the optimization of influence \cite{zhou2021maximizing,baker1994approximation}. Furthermore, the perceptual distortions due to illusion impact voting behavior \cite{wilder2018controlling,doucette2019inferring} and various collective decision-making processes \cite{bara2021predicting,castiglioni2020election}, as well as the equitable distribution of resources through participatory budgeting \cite{kempe2020inducing}. Johnson et al.~\cite{Johnson2020} showed the effects of majority illusion in social network in the decision for getting vaccinated. %, highlighting that AI methods for network analysis and information diffusion are crucial in understanding and countering the spread of misinformation in social networks. %The study shows that smaller but well-connected adversarial communities (anti-vaccine) can dominate larger but peripheral communities (pro-vaccine). 
Schrama et al.~\cite{schrama2025majority} showed that majority illusion can push the entire population from an abundant, high-welfare state to a depleted, low-welfare one. Both works highlight the need for network analysis and network-based policies to prevent misconception driven convergence to a undesirable state.

We study the complexity of eliminating illusion in social networks by modifying minimum number of agents' opinions.  The computational complexity of illusion elimination depends heavily on network topology.
Fioravantes et al.~\cite{fioravantes2025eliminating} studied networks as undirected graphs and considered  the problem of eliminating majority illusion. 
However, in social networks, usually, the influence between the agents are not symmetric, i.e., the edge relations are not symmetric. Thus, we model a network as a directed graph $G$ where the vertex set $V$ are the set of agents and an \textit{edge} from agent $i$ to agent $j$ implies that $i$ can be \textit{influenced by} $j$. Moreover, in directed networks, it is natural to assume that agents are influenced by their out-neighbors and may change opinion due to lobbying or bribery \cite{bredereck2017manipulating,faliszewski2022opinion}.
An agent's opinion is represented by a coloring function $f : V \to \{B,R\}$, where $B$ and $R$ denotes blue and red, respectively. For example, the two colors can represent a two-party election.\footnote{It does not restrict the model to the two party case. When there are multiple parties and one party is majority, then the majority can be represented by blue and all the other parties by red.} If there are more blue voters than red, then we say that blue is the majority. A voter is under \textit{majority illusion} if it has strictly more red out-neighbors than blue out-neighbors. 
    % This formulation is motivated by situations where agents make binary decisions\footnote{The two colors can represent a two-party election. However, it does not restrict the model to the two party case. When there are multiple parties and one party is majority, then the majority can be represented by blue and all the other parties by red.}, and edges encode asymmetric influence between the agents. 
    There can be instances where some or all vertices are under majority illusion (see \Cref{fig:1}(a)). 

\begin{figure}[t]
    \centering
    % First diagram - changed [b] to [t]
    \begin{minipage}[t]{0.49\linewidth}
        \centering
        \begin{tikzpicture}[
    % Styles for the nodes
    variable_node/.style={circle, draw, minimum size=0.3cm, inner sep=0pt, thick},
    rednode/.style={circle, draw=red, fill=red!5, minimum size=0.3cm, thick},
    bluenode/.style={circle, draw=blue, fill=blue!5, minimum size=0.3cm, thick}, 
    font=\small,
    node distance=0.4cm % Set a standard horizontal gap
]
        % \begin{tikzpicture}[->, >=Stealth,scale=1, every node/.style={circle, draw, minimum size=4mm, inner sep=0.2pt, font=\tiny}]
            \node[bluenode, label=above:{$v_1$}] (v1) at (7,0.75) {} ;
            \node[bluenode, label=above:{$v_2$}] (v2) at (10,0.75) {} ;
            \node[bluenode, label=below:{$v_3$}] (v3) at (10,0) {} ;
            \node[bluenode, label=right:{$v_4$}] (v4) at (8.69,0.45) {} ;
            \node[bluenode, label=below:{$v_5$}] (v5) at (7,0) {} ;
            \node[rednode, label=above:{$v_6$}] (v6) at (8,0.75) {}  ;
            \node[rednode, label=above:{$v_7$}] (v7) at (9.3,0.75) {}  ;
            \node[rednode, label=below:{$v_8$}] (v8) at (9.3,0) {} ;
            \node[rednode, label=below:{$v_9$}] (v9) at (8,0) {} ;

            \draw[-Stealth] (v1) -- (v6);
            \draw[-Stealth] (v2) -- (v7);
            \draw[-Stealth] (v3) -- (v7);
            \draw[-Stealth] (v3) -- (v8);
            \draw[-Stealth] (v4) -- (v8);
            \draw[-Stealth] (v4) -- (v9);
            \draw[-Stealth] (v5) -- (v9);
            \draw[-Stealth] (v5) -- (v6);
            \draw[-Stealth] (v6) -- (v7);
            \draw[-Stealth] (v7) -- (v8);
            \draw[-Stealth] (v8) -- (v9);
            \draw[-Stealth] (v9) -- (v6);
        \end{tikzpicture}
        \caption*{(a) Every vertex is under majority illusion: has more red out-neighbors than blue}
    \end{minipage}
    \hfill
    % Second diagram - changed [b] to [t]
    \begin{minipage}[t]{0.49\linewidth}
        \centering
        \begin{tikzpicture}[
    % Styles for the nodes
    variable_node/.style={circle, draw, minimum size=0.3cm, inner sep=0pt, thick},
    rednode/.style={circle, draw=red, fill=red!5, minimum size=0.3cm, thick},
    bluenode/.style={circle, draw=blue, fill=blue!5, minimum size=0.3cm, thick}, 
    font=\small,
    node distance=0.4cm % Set a standard horizontal gap
]
            \node[bluenode, label=below:{$v_1$}] (v1) at (0,0) {};
            \node[bluenode, label=below:{$v_2$}] (v2) at (1,0) {};
            \node[rednode, label=above:{$v_3$}] (v3) at (0,1) {};
            \node[rednode, label=above:{$v_4$}] (v4) at (1,1) {};

            \draw[-Stealth] (v1) -- (v3);
            \draw[-Stealth] (v2) -- (v3);
            \draw[-Stealth] (v1) -- (v4);
            \draw[-Stealth] (v2) -- (v4);
        \end{tikzpicture}
        \caption*{(b) Blue vertices are under $p$-illusion for every $p\in (0, 1)$ as they have only red out-neighbors}
    \end{minipage}   
    \caption{Illusions in different networks}
    \label{fig:1}
\end{figure}

In this paper, we initiate the study of the computational question of eliminating illusion in directed graphs by recoloring minimum number of vertices. We assume that blue is the majority. A graph is \emph{majority illusion-free} if none of its vertices is under majority illusion.
 % We study the \textsc{Directed  Illusion-free Recoloring({Difr})} problem where the objective is to recolor a minimum number of red vertices to eliminate illusion.
 We ask, can $G$ be illusion-free 
% \SR{define illusion-free in the previous para} 
by recoloring at most $k$ vertices?
 In a \textit{minimum illusion-free recoloring}, a minimum number of vertices is recolored such that the graph $G$ is illusion-free.
\begin{tcolorbox}[colback=gray!10!white,    % background color (light grey)
                  colframe=black,           % border color
                  boxrule=0.4pt,            % border thickness
                  arc=1mm,                  % rounded corners
                  width=\linewidth]         % full text width
{\textsc{Directed Illusion-Free Recoloring (Difr)}:}\\
\textbf{Input:} \( G = (V, E) \), a directed graph, a coloring function \( f : V \rightarrow \{ B, R \} \), and a nonnegative integer \( k\).\\
\textbf{Question:} Does there exist a (recoloring) function \( f' : V \rightarrow \{ B, R \} \) of $f$ such that $G$ is illusion-free under the coloring $f'$, where $| \{ v \mid f(v) \neq f'(v) \} | \leq k?$
\end{tcolorbox}

% {\color{red} CHANGE}An agent's perception may change not only based on the majority (i.e., $1/2$ of its out-neighbors), but also due to a $p$ fraction of the out-neighbors for some $p \in [0,1]$, as suggested by diffusion models like linear threshold \cite{granovetter1978threshold,schelling2006micromotives}. 
Majority illusion is often introduced in the context of elections. However, applications like vaccination provides a more compelling motivation to look beyond majority.
Consider a contact network where each agent is colored according to whether they are vaccinated. For many contagious diseases, achieving herd immunity requires a high global vaccination rate (often well above \(50\%\), and in some cases exceeding \(90\%\)) \cite{fine2011herd,anderson1992infectious}. This corresponds to a super-majority i.e., we would require almost $90\%$ of a neighborhood should be vaccinated. Conversely, when the goal is accurate perception of underrepresented groups, it is relevant to consider the status of a subset of neighborhood that is strictly smaller than majority. Network structure can systematically distort individuals’ beliefs about the prevalence of minority attributes (e.g., race, sexual orientation, disability status), leading to misperceptions that are well documented in studies of social networks and the majority illusion \cite{lerman2016majority,feld1991your}. These applications require a definition of illusion more general than majority illusion. We define illusion based on a fraction $p$ that takes value $>1/2$ for the former example or a value $<1/2$ for the later example.

We define a vertex to be \textit{under $p$-illusion} if strictly less than $p$ fraction of its out neighbors are blue (see \Cref{fig:1}(b)). We study elimination of $p$-illusion by editing the vertex colors in directed graphs where 
 % Thus, we generalize our \textsc{Difr} problem and consider the problem of $p$-\textsc{Difr}.
 we do not impose the assumption that the blue vertices form majority. We ask, for a given rational number $p \in [0, 1]$, is it possible to ensure that every voter has at least $p$-fraction of blue out-neighbors?  %A vertex $v$ is $p$-\textit{illusion-free} if it does not have $p$-illusion. %the number of blue out-neighbors of $v$ is at least $p$-fraction of its total out-neighbors. We define the problem $p$-\textsc{Difr} as follows. 
A graph $G$ is \textit{$p$-illusion-free} if each vertex in the graph is $p$-illusion-free. Given a rational $p \in [0,1]$, we define the following problem.
 \begin{tcolorbox}[colback=gray!10!white,    % background color (light grey)
                  colframe=black,           % border color
                  boxrule=0.4pt,            % border thickness
                  arc=1mm,                 % rounded corners
                  width=\linewidth]         % full text width
{\textsc{Directed $p$-Illusion-Free Recoloring} ($p$-\textsc{Difr}):}\\
\textbf{Input:} \( G = (V, E) \), a directed graph, a coloring function \( f : V \rightarrow \{ B, R \} \), and a nonnegative integer \( k\).\\ %, and a rational number $p \in [0, 1]$.\\
\textbf{Question:} Does there exist a (recoloring) function \( f' : V \rightarrow \{ B, R \} \) of $f$ such that $G$ is $p$-illusion-free under the coloring $f'$, where $| \{ v \mid f(v) \neq f'(v) \} | \leq k?$
\end{tcolorbox}

Note that $p=\frac{1}{2}$ with blue as the strict majority is the \textsc{Difr} problem. The $p$-illusion elimination problem has been studied for undirected graphs where a certain number of edges can be added and/or deleted~\cite{dippel2025eliminating}. Eliminating illusions by editing edges in directed graphs is trivial in the sense that it does not utilize the topology of the network and depends only on deficiency of the vertices. 

\subsection{Our Contribution.}
We initiate the study of eliminating $p$-illusion by recoloring vertices. Furthermore, we initiate the analysis of computational complexity of eliminating majority illusion and $p$-illusion in directed graphs.
 % (i) we generalize the question eliminating majority illusion to eliminating $p$-illusion and analyze their computational complexity in directed graphs. 
% We begin by formulating the problem of eliminating majority illusion and, in general, eliminating $p$-illusion by recoloring vertices in directed graphs. This is a natural model of a social network.
% It is not clear apriori whether it would require more recolorings to eliminate $p$-illusion, as opposed to the majority illusion in undirected version, due to the non-symmetric edges.
In directed graphs, when we recolor a vertex, only its in-neighbors can see it, the out-neighbors remain unaffected, unlike the undirected case. Thus, it may feel that a minimum illusion-free recoloring can be found easily. Indeed, that is the case when we add and/or delete edges of the graph to eliminate majority illusion.\footnote{For example, when we are adding edges to eliminate illusion, it does not depend on the graph, the number of edges we need to add is total `deficiency' of the vertices under illusion, where deficiency of a vertex is the number of extra red out-neighbors.}  However, we show that the recoloring problem is harder.
We show that the complexity of the problem depend on three aspects: graph class, graph parameters, and the value of $p$.  We systematically and extensively study the
algorithmic complexity through a combination of them. 

\textbf{Graph Classes.} We primarily focus on cycles, grids, planar graphs, trees, and tree-like graphs. Note that as the influence between the agents can come from their residential neighborhood, or can be constrained by spatial or local interactions, it is natural to assume that the network of social interactions to be a planar graph or simply a grid. Similarly, a directed tree or out-tree (also known as arborescence) models hierarchical influence, such as mentorship, organizational chains, or follower cascades, where influence flows outwards from central sources. 

\textbf{Parameters.} We specifically look at two structural parameters: maximum degree and tree-width of the graph; and the problem specific parameters: solution size $k$, number of vertices under illusion $|X|$, and maximum deficiency of any vertex $D$. Deficiency indicates how far a vertex is from being illusion-free. Parameter maximum degree measures the maximum number of agents that are known to an agent and is motivated by the observation that each agent typically only knows a few other agents, and thus, if people only follow who they know, then it yields a bounded degree graph~\cite{bachrach2013optimal,CsehIrvingManlove2019}. As we show that the problem is polynomial time solvable for trees, the parameter tree-width measures a distance from easy instances. Also, it is a standard parameter to understand the structure of the graph. Furthermore, as the brute-force method would give an XP algorithm parameterized by the solution size, it is natural to ask if that is the best we can do. Deficiency $D$ measures the gap in perception that caused the illusion. Thus, we ask if the problem is easy when there are small number of vertices under illusion or their illusion is caused by small gap in perception.

\textbf{Values of $p$.} As we focus on the computational question of eliminating $p$-illusion, unlike previous works on this problem, we need to understand the role of $p$ in the complexity of it. It can be easily observed that for large values of $p$, e.g., in a graph with maximum degree $\Delta$, when $p>\frac{\Delta-1}{\Delta}$, all the out-neighbors of each vertex must be blue, consequently, the problem is trivial. However, small values of $p$, does not create such easy instances.

% \begin{itemize}[wide, listparindent=\parindent, noitemsep,topsep=0pt,label=--]%[wide, listparindent=\parindent, parsep=\parskip,label=--]

We begin by showing that  \textsc{Difr} is NP-complete on directed grids. %That is, it remains NP-hard on simple graphs like grids where each vertex has at most four neighbors.
 Then, we show that solving $p$-\textsc{Difr} is NP-hard even for graphs of degree at most $3$ for $p\in (0,\frac{2}{3}]$. Towards this we first show that it is NP-hard on graphs of maximum degree $3$ when $p\in(1/3,2/3]$. %Then we prove the following more general result.
% that the problem of eliminating $p$-illusion is NP-complete and W[2]-hard with respect to $k$, the number of recolorings, even on  directed acyclic bipartite graphs (DAGs) for each rational value of $p \in (0,1)$. 

% To establish \Cref{thm:pdifrreduction}, we provide a parameter preserving reduction from \textsc{Hitting Set} to our problem. \textsc{Hitting Set} is  a well-known NP-complete~\cite{garey1990computers} problem and it is  W[2]-hard paramatereized by solution size~\cite{cygan2015parameterized}. We impose a lower bound of $\frac{1}{p}$ on the sizes of the sets in the input family. The problem remains W[2]-hard under this restriction. 
% In the construction, the out-neighborhoods of the vertices are structurally designed so that the resulting $p$-deficiency never exceeds one, thereby ensuring the reduction works as intended.
% When $3$-\textsc{Hitting Set} is used as the source problem, a slight modification of the reduction shows that $p$-\textsc{Difr} is NP-hard for graphs of maximum degree $3$ for $p \in (0,1/3]$ (\Cref{prop:deg3reductionfor p zero to one-third}).

Furthermore, we prove a more general result. We show that
    $p$-\textsc{Difr} is NP-complete and W[2]-hard for any $p\in (0,1)$ even on acyclic bipartite graphs. 
This implies that we cannot hope to obtain a fixed parameter tractable (FPT) algorithm parameterized by $k$, i.e., it cannot be solved in time $f(k)|I|^{\mathcal{O}(1)}$, where $f$ is any computable function of $k$, and $|I|$ denotes the input size, unless W[2] = FPT. 
  
 In the construction of all of our hardness reductions, each vertex under illusion needs exactly one additional blue out-neighbor to be illusion-free, i.e., we say that the maximum deficiency of any vertex is at most one. Thus, $p$-\textsc{Difr} is paraNP-hard \footnote{The problem is NP-hard even when the parameter is constant, i.e., we cannot expect to get even an XP algorithm with respect to this parameter, unless P= NP} with respect to maximum deficiency of any vertex. Moreover, since it is hard on DAGs, it is paraNP-hard with respect to distance from acyclicity, including many directed graph parameters such as directed feedback vertex set, feedback arc set, directed treewidth, directed cutwidth, digraph  bandwidth, or degreewidth, to name a few.

To understand the true causes of the intractability results, we explore the line between easy and hard graph topologies.
% \begin{itemize}[wide, listparindent=\parindent, noitemsep,topsep=0pt,label=--]
 We show a dichotomy result as we prove that the $p$-\textsc{Difr} is polynomial time solvable when the maximum degree $ \Delta \leq 2$.
  We identifying orientations of grid graphs where eliminating $p$-illusion is solvable in polynomial time.
We call them as \emph{outward grid graphs}, where the edges are oriented from left to right and from top to bottom. 
 Outward grids naturally model hierarchical or layered information flow, where influence propagates directionally from central sources to peripheral nodes. We establish that $p$-\textsc{Difr} on $G$  can be solved in time $\mathcal{O}(mn\sqrt{mn})$ on an $m\times n$ outward grid.

 % To prove \Cref{thm:outgrid}, we first recolor the out-neighbors of the vertices under illusion that have out-degree one. Then, we construct  an auxiliary graph $H$ which is undirected, acyclic, and bipartite. Finally, we show that an optimal solution can be obtained by recoloring a minimum size vertex cover of $H$.
 
 % $H=(R(V),E')$, where $R(V)$ is the set of red vertices, and
 %    $E'=\{ \{a,b\} :N^+(x)=\{a,b\}\subseteq R(V) \text{ for some vertex}\ x\ \text{under illusion} \}$. That is, we get $E'$ by joining the red out-neighbors of the vertices under illusion.
 %    We show that $H$ is acyclic by contradiction. Finally, we recolor a minimum size vertex cover of $H$ to obtain a minimum illusion-free recoloring  of $G$.

We show that $p$-\textsc{Difr}
 is polynomial time solvable not only on out-trees but also when the underlying undirected graph is a tree.
 In search of tractable cases, we further explore structural parameters. Strengthening our results on trees, we obtain fixed-parameter tractable (FPT) algorithm for $p$-\textsc{Difr} parameterized by treewidth of the graph  when the maximum deficiency of any vertex is bounded. We show that $p$-\textsc{Difr} is solvable in time $\mathcal{O}((2 D)^{\textit{tw}})\cdot n^{\mathcal{O}(1)}$  on an $n$-vertex directed graph $G$ with deficiency at most $D$. %Our result can be stated as follows.
 
 Recall that the problem is NP-hard even when $D=1$. To eliminate illusion for a vertex we need to bring down the deficiency to zero. % To prove \Cref{thm:treewidthfpt}, we design a dynamic-programming algorithm, using a nice tree decomposition of the underlying undirected graph, that tracks recoloring decisions for  each bag of the decomposition and the deficiency of each vertex in the bag.
For definitions of treewidth and tree decomposition see \cite{cygan2015parameterized}.
 Finally, we use a simple  ILP based formulation to show that $p$-\textsc{Difr} is FPT parameterized by the number of vertices under illusion, also, paramaterized by the treedepth of the primal as well as the dual LP.

 Our results demonstrate that illusion elimination can be done efficiently when the graphs are sparse; however, even though the flow of information might be more structured in acyclic graphs, it does not help in identifying the vertices to be recolored to eliminate illusion. 
% Due to the paucity of space, all omitted proofs are deferred to the full version of the paper \url{https://arxiv.org/pdf/2604.02395}.

 \subsection{Related Works.}
Different variants of the majority illusion elimination problem have been studied from the graph-theoretic perspective \cite{venema2023graph}.
Grandi et al. and Dipple et al. \cite{grandi2023identifying,dippel2025eliminating} studied the computational questions of eliminating the majority illusion, in \textit{undirected graphs} by editing edges (i.e., adding and/or deleting edges). Grandi et al.~\cite{grandi2023identifying} showed that the problem of eliminating illusion for $q$ fraction of vertices is NP-complete in general and W[1]-hard when parameterized by the number of modified edges (i.e., the solution size). 
Dipple et al.~\cite{dippel2025eliminating} showed that eliminating majority illusion for all agents is solvable in polynomial time. Unlike undirected graphs, eliminating illusions by editing edges in directed graphs is trivial in the sense that it does not utilize the topology of the network and depends only on deficiency of the vertices. 

% The work of Fioravantes et al. \cite{fioravantes2025eliminating} is
% most relevant to us as it addresses the question of eliminating the majority illusion by recoloring the vertices, although in undirected graphs. They show W[2]-hardness by solution size, FPT with respect to treewidth plus the solution size, NP-hardness on planar graphs, and obtain PTAS for planar graphs, analogous to our results. For directed graphs, we show a stronger NP-hardness on grids and explore graph topologies, and parameters to design efficient algorithms.

Conditions for the existence of majority illusion in directed networks and how they differ from the undirected case have been studied \cite{kreutzer2017majority,anastos2021majority,broersma2025extending}.
Majority illusion also affects information diffusion, an extremely well studied problem, where a target set is elected to maximize influence \cite{kempe2005influential}. %The target set selection problem  finds a minimum set of vertices to assign a color to maximize the spread of the given color. The color spreads as vertices get colored based on a threshold of colored vertices in their neighborhood. 
Bredereck and Elkind \cite{bredereck2017manipulating} considered the problem of maximizing opinion diffusion under majority dynamics with the help of bribery (or influence) and established a connection between their model and the target set problem.  Furthermore, controlling the election can be easier by influencing agents' opinion in a social network \cite{faliszewski2022opinion,wu2022manipulating}.
Our problem connects naturally to graph modification problems, where the goal is to identify a minimum size vertex subset such that deleting or modifying it satisfies a desired graph property \cite{cygan2015parameterized}.

\section{Preliminaries}
This section introduces our social network model and establishes the notations used throughout the paper. We also present several key observations that provide the foundation for the subsequent analysis and results.

A social network is represented by a directed graph $G = (V,E)$ where vertices are the agents and each vertex has a labeling (or color) $f : V (G) \to \{B, R\}$, where $B$ ($R$ resp.) denotes blue (red resp.). %As stated earlier, for the \textsc{Difr} problem we assume that there are strictly more vertices with the label $B$ than $R$ in $G$, whereas for $p$-\textsc{Difr} there is no such restrictions.

\textbf{Notation.} We use $[n]$ to denote the set $\{1,2, \dots,n\}$. Given a directed graph $G = (V,E)$, we use $(v,u)$ to denote the edge from $v$ to $u$, for $v, u \in V$.
Let $R(V)=f^{-1}(R)$ and $B(V)=V\setminus R(V)$ denote the set of red and blue vertices, respectively. For any vertex $v\in V$, we denote its sets of out-neighbors and in-neighbors by $N^+(v)=\{u\in V:(v,u)\in E\}$ and $N^-(v)=\{u\in V:(u,v)\in E\}$, respectively.
Let $r_v$ and $b_v$ be the number of red and blue out-neighbors of $v$, respectively, defined as:
$r_v=|N^+(v)\cap R(V)|$ and $b_v=|N^+(v)\cap B(V)|$.\SR{used?}

A vertex $v$ is considered to be \emph{under illusion} if it possesses a surplus of red out-neighbors; that is, if $r_v>b_v$. We define the \emph{deficiency} of a vertex $v$, denoted by \textit{def}$(v)$, as:
\textit{def}$(v)=\text{max}\ \{0, r_v-b_v\}$. %=\text{max}\ \{0, \left\lceil \frac{1}{2}\cdot |N^+(v)|\right\rceil- b_v \}$.
Consequently, the set of illusion vertices, denoted by $X$, consists of all vertices with positive deficiency: $X=\{v\in V:\text{\textit{def}} (v)>0\}$.

The $p$-\textit{deficiency} of a vertex $v$, denoted by \textit{def}$_p(v)$, is defined by \textit{def}$_p(v)=\text{max}\{0, \left\lceil p\cdot |N^+(v)|\right\rceil- b_v\}$. A vertex $v$ is under $p$-\textit{illusion} if \textit{def}$_p(v)>0$. Consequently, a vertex $v$ is $p$-\textit{illusion free} if $b_v\geq \left\lceil p \cdot |N^+(v)| \right\rceil$. Let $X_p$ denote the set of vertices under $p$-illusion, called as the $p$-illusion vertices.

 We say that a \emph{recoloring $f'$ of $f$ produces a solution} to the \textsc{Difr} ($p$-\textsc{Difr}, resp.) problem on $G$ if  $G$ has no illusion ($p$-illusion, resp.) vertex under the coloring $f'$; the solution is the subset of vertices that are recolored, i.e., $\{v \in V(G) : f(v) \neq f'(v)\}$, and solution size $k$ is the size of this set, that is, the number of recolorings.

\subsection{Structural Properties.}
We begin with some basic observations. 
First, we make the trivial observation that it is never of benefit to recolor a blue vertex.
\begin{restatable}{observation}{obsbluerecolor}\label{obs:bluerecolor}
    To eliminate both illusion and $p$-illusion, we always recolor a red vertex and never recolor a blue vertex.
\end{restatable}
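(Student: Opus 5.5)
The plan is an exchange argument. Take any recoloring $f'$ of $f$ that produces a solution to $p$-\textsc{Difr} (with \textsc{Difr} as the case $p=\frac{1}{2}$). From it we build another solution $f''$ that recolors no blue vertex and whose solution size is no larger. Define $f''(v)=B$ whenever $f(v)=B$, and $f''(v)=f'(v)$ whenever $f(v)=R$. In words, we undo every blue-to-red change and keep every red-to-blue change. The vertices recolored by $f''$ then form a subset of those recolored by $f'$, so the size bound $k$ is preserved. Applying this to an optimal solution shows that some optimal solution recolors only red vertices. Moreover, a red vertex recolored in any solution can only become blue, since the color set is $\{B,R\}$. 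This establishes both halves of the statement.

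The key step is to check that $f''$ is still $p$-illusion-free. The argument is a monotonicity property. Whether a vertex $v$ is under $p$-illusion depends only on its out-degree $|N^+(v)|$, which is fixed by the graph and unaffected by recoloring, and on $b_v$, the number of its blue out-neighbors. Recoloring never changes $\lceil p\cdot|N^+(v)|\rceil$. By construction $f''^{-1}(B)\supseteq f'^{-1}(B)$, so every vertex $v$ satisfies $b_v^{f''}\ge b_v^{f'}$. Hence
\[
\textit{def}_p^{f''}(v)=\max\{0,\lceil p|N^+(v)|\rceil-b_v^{f''}\}\le \textit{def}_p^{f'}(v)=0 .
\]
For majority illusion the same reasoning applies to $r_v-b_v$. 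Writing $r_v=|N^+(v)|-b_v$, we get $r_v-b_v=|N^+(v)|-2b_v$, which does not increase when $b_v$ grows.

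No step here is a real obstacle. The only point needing care is the \textsc{Difr} setting, which carries the side assumption that blue is the strict majority. Our change only increases the number of blue vertices, so $f''$ preserves this assumption. Also, the vertex's own color never enters its own deficiency, since only out-neighbors are counted. So the claim reduces to the monotonicity of the deficiency in the set of blue vertices.
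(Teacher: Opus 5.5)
Your proof is correct and uses the same fact as the paper's one-line argument: whether a vertex is under ($p$-)illusion depends only on how many of its out-neighbors are blue, so only red-to-blue recolorings can help. Your exchange argument makes this rigorous and also covers a point the paper leaves implicit, namely that any blue-to-red recoloring can be undone without creating illusion; and since undoing one strictly shrinks the recolored set, you in fact get that \emph{every} minimum solution recolors only red vertices, not merely some.
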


\begin{proof}
If a vertex is under illusion or $p$-illusion, then it has a deficit of blue out-neighbors. So to remove illusion or $p$-illusion of the vertex, we need more blue vertex in its out-neighbor, which is possible only by recoloring red out-neighbors to blue.
\end{proof}

\begin{restatable}{observation}{obsrecolordef}\label{obs:recolordef}
    Let $v$ be a vertex under majority illusion in $G$. Then, illusion of $v$ can be eliminated by recoloring at least $t_v=\left\lceil \frac{\textit{def}(v)}{2} \right\rceil$ red out-neighbors of $v$.
\end{restatable}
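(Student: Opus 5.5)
Recoloring a single red out-neighbor $u$ of $v$ to blue changes the pair $(r_v,b_v)$ to $(r_v-1,b_v+1)$, so it lowers $r_v-b_v$ by exactly $2$. Recolorings of vertices outside $N^+(v)$ leave $r_v$ and $b_v$ unchanged. By \Cref{obs:bluerecolor} we never recolor blue vertices, so every useful recoloring for $v$ is of this type. The plan is to count how many such steps are needed to bring $r_v-b_v$ from $\textit{def}(v)=r_v-b_v>0$ down to a value $\le 0$, and then check that this many red out-neighbors actually exist.

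\textbf{Lower bound.} Suppose $s$ red out-neighbors of $v$ are recolored. Under the new coloring, $r'_v-b'_v=\textit{def}(v)-2s$. Illusion-freeness of $v$ requires $r'_v\le b'_v$, so $s\ge \textit{def}(v)/2$. Since $s$ is an integer, $s\ge t_v=\lceil \textit{def}(v)/2\rceil$. Hence at least $t_v$ of $v$'s red out-neighbors must be recolored.

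\textbf{Sufficiency.} Recoloring any $t_v$ red out-neighbors gives
\[
r'_v-b'_v=\textit{def}(v)-2\lceil \textit{def}(v)/2\rceil\in\{0,-1\}\le 0,
\]
so $v$ is no longer under illusion. The only point to check is availability, i.e.\ that $r_v\ge t_v$. This holds because $\textit{def}(v)=r_v-b_v\le r_v$, and hence $t_v=\lceil \textit{def}(v)/2\rceil\le \textit{def}(v)\le r_v$ when $\textit{def}(v)\ge 1$.

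The argument is elementary, and the only mildly delicate step is this availability check. Combining the two parts shows that exactly $t_v$ recolorings among $v$'s red out-neighbors are both necessary and sufficient, which is the content of the observation.
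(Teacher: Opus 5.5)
Your proposal is correct and follows the paper's argument: each recolored red out-neighbor lowers $r_v-b_v$ by exactly $2$, so $\textit{def}(v)-2s\le 0$ forces $s\ge\lceil \textit{def}(v)/2\rceil$. You also verify sufficiency ($t_v$ recolorings leave $r'_v-b'_v\in\{0,-1\}$) and availability ($t_v\le \textit{def}(v)\le r_v$), which the paper leaves implicit, so your version is slightly more complete.
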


\begin{proof}
     If we recolor one red vertex in \( N^+(v) \), the red out-neighbors of $v$ changes from $r_v $ to $r_v - 1$ and blue out-neighbors of $v$ changes from $ b_v$ to $b_v + 1$. Thus, modified deficiency of $v$ is
\[updated\_\textit{def}(v) = (r_v - 1) - (b_v + 1) = \textit{def}(v) - 2. \]  
Thus, by recoloring one element in \( N^+(v) \), the deficiency decreases by $2$.
Let $t_v$ denote the number of red vertices of $N^+(v)$ that we need to recolor for $v$ to be illusion-free. 
Then
%\centering
$ \textit{def}(v) - 2 \cdot t_v \leq 0$. 
Therefore, $t_v \geq \frac{\textit{def}(v)}{2} $. Thus, we have 
 $ t_v = \left\lceil \frac{\textit{def}(v)}{2} \right\rceil$.
Therefore, to make $v$ illusion free, we need to look into only $t_v$ elements of $N^+(v)$ and recolor at least \( t_v=\left\lceil \frac{\textit{def}(v)}{2} \right\rceil \) (red) members of \( N^+(v) \).
\end{proof}

    \SR{state it for $p$-illusion?}

 Each recolored vertex in $N^+(v)\cap R(V)$ increases $b_v$ by $1$, hence, decreases \textit{def}$_p(v)$ by $1$. Thus we get the following.
\begin{observation}\label{obs:pdef}
   Let $v$ be a vertex under $p$-illusion in $G$. Then, $p$-illusion of $v$ can be eliminated by recoloring at least \textit{def}$_p(v)$ red out-neighbors of $v$. 
\end{observation}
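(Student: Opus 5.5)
The claim is that eliminating the $p$-illusion of a vertex $v$ requires recoloring at least $\textit{def}_p(v)$ of its red out-neighbors, and that recoloring that many of them suffices. I would argue this directly from the definition of $p$-deficiency, in the same way as \Cref{obs:recolordef}.

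\textbf{Step 1: the threshold is fixed.} Recoloring never changes the edge set. So $|N^+(v)|$, and with it the threshold $\lceil p\cdot |N^+(v)|\rceil$, stays the same under any recoloring. Only $b_v$ can change.

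\textbf{Step 2: effect of each recoloring on $b_v$.}
\begin{itemize}
\item Recoloring a red out-neighbor of $v$ to blue increases $b_v$ by exactly one.
\item Recoloring a vertex outside $N^+(v)$ leaves $b_v$ unchanged.
\item By \Cref{obs:bluerecolor}, we may assume no blue vertex is recolored, since that could only decrease $b_v$.
\end{itemize}
Hence, if $t$ red out-neighbors of $v$ are recolored, the new blue count is $b_v+t$, and the new $p$-deficiency is
\[
\max\{0,\lceil p|N^+(v)|\rceil-b_v-t\}=\max\{0,\textit{def}_p(v)-t\}.
\]
The equality uses $\textit{def}_p(v)=\lceil p|N^+(v)|\rceil-b_v>0$, which holds because $v$ is under $p$-illusion.

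\textbf{Step 3: both directions.} Vertex $v$ is $p$-illusion free exactly when $\textit{def}_p(v)-t\le 0$, that is, when $t\ge \textit{def}_p(v)$.
\begin{itemize}
\item \emph{Necessity.} Any recoloring that removes the illusion must recolor at least $\textit{def}_p(v)$ red out-neighbors of $v$.
\item \emph{Sufficiency.} We need $\textit{def}_p(v)$ red out-neighbors to exist. They do: $\textit{def}_p(v)\le \lceil p|N^+(v)|\rceil-b_v\le |N^+(v)|-b_v=r_v$, using $\lceil p d\rceil\le d$ for an integer $d$ and $p\le 1$. Recoloring any $\textit{def}_p(v)$ of them therefore eliminates the illusion.
\end{itemize}

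The only step needing care is this feasibility bound $r_v\ge \textit{def}_p(v)$, which is the inequality just shown; the rest is routine bookkeeping.
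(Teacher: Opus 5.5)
Your proof is correct and follows the same route as the paper, which justifies the observation with the one-line remark that each recolored red out-neighbor raises $b_v$ by one and therefore lowers \textit{def}$_p(v)$ by one. Your feasibility check that $r_v \geq$ \textit{def}$_p(v)$, which guarantees enough red out-neighbors exist, makes explicit a point the paper leaves implicit.
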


\section{Hardness Reductions}
 We provide four hardness reductions in this section for \textsc{Difr} and $p$-\textsc{Difr}. Observe that given a recoloring, the non-existence of majority illusion or $p$-illusion can be verified in polynomial time. Thus, our problems are in NP. We prove that \textsc{Difr} remains NP-complete when restricted to grid graphs and $p$-\textsc{Difr} is hard even on bipartite DAGs. Thus, we cannot design polynomial time algorithms for acyclic or planar graphs, unless P=NP. Furthermore, we show that $p$-\textsc{Difr} is NP-complete on directed graphs of degree at most three for $p\in (0,\frac{2}{3}]$. We start with the reduction on grid graphs. 

\subsection{NP-completeness of \textsc{Difr} on Grid Graphs}

We study the \textsc{Difr} problem on planar graphs, especially, on directed grids. Directed grid is a  undirected grid with arbitrary orientations of the edges.
% \SR{this is not done yet. Change such lines.}We have a polynomial time algorithm for outward grids \Cref{prop:outgrid}.\SR{CanNOT start sentences with but, because, and} 
The \textsc{Difr} problem is NP-complete for directed grids. The hardness reduction is from \textsc{Planar Monotone Rectilinear $3$-SAT} problem, a restricted variant of $3$-SAT. In \textsc{Monotone $3$-SAT}, each clause is monotone, that is, it consists exclusively of either positive or negative literals. An \textit{incidence graph} of a $3$-SAT formula  is a bipartite graph. Variables and clauses form the two distinct vertex sets, with an edge connecting a variable to a clause if that variable (as a positive or negative literal) appears in the clause. In a standard \textit{rectilinear embedding} of the incidence graph of \textsc{Planar Monotone Rectilinear 3-SAT} \cite{de2012optimal}, variable vertices are arranged along a central horizontal axis; positive clauses are positioned above this axis, while negative clauses are placed below (see \Cref{fig:gridpicturefinal}(a)).  Each clause is depicted as a horizontal segment connected to its corresponding variables via three vertical segments, referred to as legs.
The clauses can be nested so that the graph is still planar. 

\smallskip
\noindent
\textsc{Planar Monotone Rectilinear $3$-SAT.}\\
\textit{Input.}
A $3$-CNF formula $\mathcal{\varphi}$, such that every clause
of $\mathcal{\varphi}$ is monotone and consists of three literals and the incidence graph of $\mathcal{\varphi}$ has a rectilinear embedding.  \\
\textit{Output.} Does there exist a satisfying assignment for $\mathcal{\varphi}$?

\smallskip
\noindent
This problem is well known to be NP-complete \cite{de2012optimal}. 
We also assume that formula $\mathcal{\varphi}$ consists of distinct clauses.

\begin{restatable}{theorem}{thmgridreduc}\label{thm:gridreduction}
 \textsc{Difr} is NP-complete on directed grids.
\end{restatable}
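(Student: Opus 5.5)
Membership in NP was noted at the start of this section, so I only need hardness. I will reduce from \textsc{Planar Monotone Rectilinear $3$-SAT}: embed the rectilinear drawing of the incidence graph of $\varphi$ into a sufficiently fine directed grid, so that every vertex, leg and clause segment becomes a path of grid vertices with suitably oriented edges and colors. The budget will be $k$ equal to the number of variables $n$ plus the number of recolorings forced in the wires (a quantity fixed by the construction). Throughout I will keep every deficiency equal to one, so that by \Cref{obs:recolordef} each vertex under illusion needs exactly one of its red out-neighbors recolored. In a grid, every vertex has out-degree at most four, so a vertex under illusion with deficiency one can be arranged to have, for instance, two red and one blue out-neighbor, or one red and no blue out-neighbor (a forced recoloring).

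The gadgets are as follows. For each variable $x_i$, I place a horizontal ``variable cycle'' of alternating red vertices and illusion vertices on the central axis. Each illusion vertex points to two consecutive red vertices, so that these constraints form an even cycle in the underlying conflict structure. Covering an even cycle with the minimum number of recolorings leaves exactly two options: recolor all odd-position reds (\emph{true}) or all even-position reds (\emph{false}). I will scale the cycles so that their total cost is fixed and deviating costs strictly more. Legs are wires that propagate the choice: a chain in which illusion vertex $w_j$ points to reds $a_j, a_{j+1}$ behaves like a vertex-cover path. With a fixed budget per wire, the choice at the variable end determines whether the end red adjacent to the clause can be recolored ``for free.'' Positive clauses attach legs to the parity corresponding to \emph{true}, negative clauses to the parity corresponding to \emph{false}; monotonicity means each clause sits on one side, which matches the rectilinear embedding. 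A clause gadget is a single vertex $c$ whose out-neighbors are the three leg-end reds together with blues, arranged so that $\textit{def}(c)=1$. Here $c$ is illusion-free exactly when at least one leg end is recolored, and I must check $r_c-b_c=1$ is realizable within degree four, e.g.\ with two red leg ends, one red leg end routed via a neighbor, and one blue. Since a grid vertex has only four neighbors, I will likely make $c$ see the three legs through its four grid neighbors, using one blue neighbor to tune the deficiency.

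Correctness then runs in both directions. A satisfying assignment yields the canonical recoloring of size exactly $k$. Conversely, a recoloring of size at most $k$ must spend the minimum in every cycle and wire, by disjointness of the lower-bound witnesses (a packing of illusion vertices with disjoint red out-neighborhoods). It is therefore canonical on every gadget, and clause satisfaction can be read off directly. All remaining grid vertices will be padded blue with out-edges only to blue vertices, or with no illusion, so that they are inert. The global blue majority for \textsc{Difr} will be ensured by adding many blue filler vertices.

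The main obstacle is the geometric embedding in a grid with degree at most four: bending wires around corners, crossing nothing, and especially realizing the clause vertex and the variable-to-leg branch points. At those points a vertex must simultaneously belong to a cycle and start a leg, while keeping every deficiency exactly one and every recoloring affecting only the intended in-neighbors. I would first design the gadgets on a grid refined by a constant factor (say $5\times$), verify corner and branch tiles locally, and only then count the wire costs to fix $k$.
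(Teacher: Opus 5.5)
Your route differs from the paper's but can be made to work. As written, though, three steps are wrong or missing.

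\textbf{How the routes differ.} You share the paper's core: a variable cycle of reds alternating with illusion vertices that each see two consecutive reds (so exactly two minimum covers), one clause vertex pointing to three literal reds, and a tight budget. The difference is that you hang separate leg wires off a short cycle at branch reds. The paper has no wires or branch points at all. Its cycle $V_x$ is itself routed along every leg of $x$ in the rectilinear drawing, so that a literal vertex $v^j_x$ or $v^j_{\overline{x}}$ of the right sign is grid-adjacent to each clause vertex $v_c$. That clause vertex points to three such reds and a private blue $v_c^1$. The budget is then $k'=\sum_x \ell_x$, and the lower bound is simply that one recoloring helps at most two controllers of one gadget. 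The only geometric invariant needed is that no two reds are grid-adjacent, so that missing grid edges can be oriented into blue vertices. This design is what removes the branch-point obstacle you flag.

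\textbf{Deficiency bookkeeping.} A vertex with two red and no blue out-neighbors has $\textit{def}=2$, not $1$. Moreover, $r_c-b_c=1$ cannot be realized for a clause vertex with three red leg-end out-neighbors in a grid: it forces $b_c\ge 2$ and hence out-degree at least $5$. You do not need it. By Observation~\ref{obs:recolordef} what matters is $\lceil \textit{def}/2\rceil=1$, and three reds plus one blue already works, since after one recoloring it is $2$ red versus $2$ blue.

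\textbf{Wire parity and budget.} Wires need an even number of illusion vertices. Take $w_1,\dots,w_L$ with $N^+(w_j)=\{a_{j-1},a_j\}$ and branch red $a_0$; the disjoint witnesses $w_2,w_4,\dots$ give the bound $\lfloor L/2\rfloor$.
\begin{itemize}
\item For $L=2m$ the bound is attained whether or not $a_0$ is recolored, and a tight solution that recolors $a_L$ must recolor $a_0$, which is the transmission you want.
\item For $L=2m+1$, leaving $a_0$ unrecolored costs $m+1$, so the tight budget would force the literal.
\end{itemize}
Correspondingly, the budget must be $\sum_x \ell_x+\sum_{\text{wires}} L/2$, not the number of variables plus wire costs.

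\textbf{Grid completion.} Making filler vertices inert is not enough. A controller that acquires two blue out-neighbors loses its illusion. An undesigned edge from a blue vertex into a red one can create new illusion or raise a gadget vertex's deficiency. The fix is to put all reds on one color class of the grid's bipartition. Your alternating paths already force this within each connected piece, as does the clause vertex being adjacent to all three leg ends. Then orient every undesigned edge away from reds and into illusion vertices. These rules never conflict, because reds and illusion vertices lie on opposite classes, and afterwards only the designed out-neighborhoods matter.
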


\begin{proof}
       Let $\mathcal{\varphi}$ be an instance of the \textsc{Planar Monotone Rectilinear 3-SAT} problem. Let $k$ be the number of variables in $\mathcal{\varphi}$. We construct an instance $(\hat{G},f,k')$ of \textsc{Difr}. We will first construct a graph $G$ that is a subgraph of a grid graph. The vertex set of $G$ is constructed  as follows:

    \begin{enumerate}[wide, noitemsep, topsep=0pt, label=(\roman*)]
        \item For every variable $x$, we add two sets of variable vertices, namely the positive variable vertices $\{v^i_x | i \in [\ell_x]\}$ and the negative variable vertices $\{v^i_{\overline{x}}| i \in [\ell_x]\}$, 
        where $\ell_x$ is a positive integer bounded by a polynomial in the size of the formula $\mathcal{\varphi}$ for every $x$. We fix the value of $\ell_x$ when the grid embedding is done.

        \item For every variable $x$, we add another set of vertices $v^{xi}$, called the \textit{controllers} of $x$ where $i \in [2\ell_x]$.
        \item For every clause $c\in \mathcal{\varphi}$, we add a clause-vertex $v_c$ and an additional vertex $v_c^1$.  This is the clause gadget for clause $c$.
    \end{enumerate}

   \smallskip
   \noindent
   The initial color assignment $f$ is as follows: for each variable $x$, all the variable vertices $v^i_x$ and $v^i_{\bar x}$ are colored red, for each $i\in [\ell_x]$, and all the other vertices are colored blue.

    We add the following directed edges to the graph $G$:

   \begin{enumerate}[wide, noitemsep,topsep=0pt,label=(\roman*)]
    \item \begin{sloppypar} For each variable $x$, we construct a gadget by connecting the edges  $(v^{x(2i-1)},v^i_x),(v^{x(2i-1)},v^i_{\overline{x}}),(v^{x(2i)},v^i_{\overline{x}}),(v^{x(2i)},v^{i+1}_{x})$ for each $i\in [\ell_x]$ where the index $\ell_x+1$ is taken modulo $\ell_x$ (see \Cref{fig:gridpath}). Note that the gadget is a cycle in the undirected sense.\end{sloppypar} 

We call this undirected cycle so constructed the \textit{variable gadget} associated with the variable $x$ and is denoted by $V_x$. In this variable gadget, each controller vertex $v^{xi}$ is under illusion. 
    % Picture of the variable gadget 
\begin{figure}[t]
    \centering
    % \includegraphics[width=0.5\linewidth]{}
%   \resizebox{\linewidth}{!}{ 
   \begin{tikzpicture}[
    % Styles for the nodes
    variable_node/.style={circle, draw, minimum size=0.3cm, inner sep=0pt, thick},
    rednode/.style={circle, draw=red, fill=red!5, minimum size=0.3cm, thick},
    bluenode/.style={circle, draw=blue, fill=blue!5, minimum size=0.3cm, thick}, 
    font=\small,
    node distance=0.4cm % Set a standard horizontal gap
]

    % --- 1. First Positive Variable ---
    \node[rednode, label=above:$v^1_x$] (vpx1) {};

    % --- 2. First Controller (Points LEFT to v^1_x and RIGHT to v^1_nx) ---
    \node[bluenode, label=above:$v^{x1}$] (vc1) [right=of vpx1] {};
    \draw[-Stealth] (vc1) -- (vpx1); 

    % --- 3. First Negative Variable ---
    \node[rednode, label=above:$v^1_{\overline{x}}$] (vnx1) [right=of vc1] {};
    \draw[-Stealth] (vc1) -- (vnx1); 

    % --- 4. Second Controller (Points LEFT to v^1_nx and RIGHT to v^2_x) ---
    \node[bluenode, label=above:$v^{x2}$] (vc2) [right=of vnx1] {};
    \draw[-Stealth] (vc2) -- (vnx1); 

    % --- 5. Second Positive Variable ---
    \node[rednode, label=above:$v^2_x$] (vpx2) [right=of vc2] {};
    \draw[-Stealth] (vc2) -- (vpx2); 

    % --- 6. Transition to Dots ---
    \node[bluenode, label=above:$v^{x3}$] (vc3) [right=of vpx2] {};
    \draw[-Stealth] (vc3) -- (vpx2); 

    \node[rednode, label=above:$v^2_{\overline x}$] (vnx2) [right=of vc3] {};
    \draw[-Stealth] (vc3) -- (vnx2);
    
    \node[draw=none] (dots) [right=0.5cm of vnx2] {$\dots$};
    \draw[-Stealth, shorten <=2pt] (dots) -- (vnx2); 

    % --- 7. Final Segment ---
    \node[rednode, label=above:$~~v^{\!\ell_x}_{\!\bar{x}}$] (vlast) [right=1.5cm of dots] {};
    % The last controller
    \node[bluenode, label=92:$v^{x(2\ell_{\!x}\!\!-1\!)}$] (vclast) [left=of vlast] {};
    
    \draw[-Stealth] (vclast) -- (vlast); % Points right to variable
    \draw[-Stealth, shorten <=4pt] (vclast) -- (dots); % Points left to dots

    % --- Closing controller for circular structure ---
    \node[bluenode, label=87:$v^{x(2\ell_x)}$] (vcclose)[right =of vlast] {};

    % Edge to wrap into a cycle
    \draw[-Stealth] (vcclose) -- (vlast);

    % --- Added Low-Profile Curved Feedback Edge ---
    % Matching the thickness, brightness, and arrow type of the straight paths
    \draw[->, >=Stealth, thick] (vcclose) to [out=-155, in=-20, looseness=0.4] (vpx1);
    
\end{tikzpicture}%}
\caption{Variable gadget $V_x$ for variable $x$}
    \label{fig:gridpath}
\end{figure}

        \item \label{itm:clause-variable connection} For every clause-vertex $v_c$, if literal $x$ (or $\overline{x}$) is in $c$, then we add the edge $(v_c,v^j_x)$  (or $(v_c,v^j_{\overline{x}})$ resp.) for a unique $j\in [\ell_x]$.
        % \SR{note that we are not connecting the first and the last vertices on the variable gadget to any clause. Fix the index if it wrong somewhere else as well.}
     The value of $j$ depends on the grid embedding; note that $v^j_x$ or $v^j_{\overline{x}}$ are in the variable gadget associated to the variable $x$.  
        
        \item Add edge $v_c$ to $v_c^1$, for every clause-vertex $v_c$.  In clause gadget, %( the connection of clause vertices to the corresponding variable vertices contained in the clause as in \ref{itm:clause-variable connection}), 
        the clause-vertex $v_c$ is under illusion.
    \end{enumerate}

\begin{figure}[t]
\captionsetup{font=footnotesize}
\centering
% Ensure you have these loaded in your preamble or right before the figure environment:
% \usetikzlibrary{backgrounds, fit, arrows.meta}

% --- LEFT MINIPAGE ---
\begin{minipage}[b]{0.45\columnwidth} 
\centering
% \resizebox{\linewidth}{!}{% 
\begin{tikzpicture}[
    box/.style={draw, rectangle, minimum width=0.3cm, minimum height=0.3cm},
    clause/.style={circle, fill=black, inner sep=1.5pt},
    font=\small,
    line/.style={thick}
]
\node[box] (x1) at (0,0) {$x_1$};
\node[box] (x2) at (1,0) {$x$};
\node[box] (x3) at (2,0) {$x_3$};
\node[box] (x4) at (3,0) {$x_4$};
\node[clause,label=above:{$c_1$}] (c1) at (1,0.8) {};
\node[clause,label=below:{$c_2$}] (c2) at (2,-0.8) {};
\node[clause,label=below:{$c_3$}] (c3) at (1,-1.3) {};
\draw[line,thin] (x1.east) -- (x2.west); 
\draw[line,thin] (x2.east) -- (x3.west); 
\draw[line,thin] (x3.east) -- (x4.west);
\draw[line] (x1.north) -- (0,0.8) -- (3,0.8) -- (x4.north); 
\draw[line] (1,0.8) -- (x2.north);
\draw[line] (x2.south east) -- (1.21,-0.8) -- (2.998,-0.8) -- (x4.south); 
\draw[line] (x3.south) -- (2,-0.8);
\draw[line] (x1.south) -- (0,-1.28) -- (3.27,-1.28) -- (x4.south east); 
\draw[line] (x2.south) -- (1,-1.28);
\end{tikzpicture}%}
\caption*{(a) Rectilinear embedding: variable vertex $x$ belongs to a positive clasuse $c_1$ and two negative clauses $c_2,\ c_3$.}
\end{minipage}% 
\hfill 
% --- RIGHT MINIPAGE ---
\begin{minipage}[b]{0.48\columnwidth}
\centering
% \resizebox{\linewidth}{!}{% 
\begin{tikzpicture}[
    rednode/.style={circle, draw=red!70!black, fill=red!30, minimum size=0.15cm, inner sep=0pt, thick},
    bluenode/.style={circle, draw=blue!70, fill=blue!15, minimum size=0.15cm, inner sep=0pt, thick},
    edge/.style={->, thin}, >=Stealth,
    edgee/.style= {draw=black!17,thin},
    gadgetbox/.style={draw=green!60!black, fill=green!10, dashed, rounded corners, inner sep=2pt}
]
\node[bluenode, label=right: $v_{c_1}$] (vc1)  at (2,1.5) {};
\node[rednode, label=right:$v_x^1$]  (vx1p) at (2,1) {};
\draw[edge] (vc1) -- (vx1p);

\node[bluenode,label=right:{$v^{x1}$}] (b1) at (2,0.5) {};
\draw[edge] (b1) -- (vx1p);

\node[rednode, label=right:$v_{\overline{x}}^1$]  (vx2p) at (2,0) {};
\draw[edge] (b1) -- (vx2p);
\node[bluenode] (b2) at (2,-0.5) {};
\draw[edge] (b2) -- (vx2p);

\node[rednode, label=above:$v_{\overline{x}}^j$] (vxjp) at (3.2,-0.5) {};
\draw[dashed] (b2) -- (vxjp);
\node[bluenode, label=below: {$v_{c_2}$}] (vc2) at (3.7,-0.5) {};
\draw[edge] (vc2)--(vxjp);
\node[bluenode] (b3) at (3.2,-1.2) {};
\draw[dashed] (b3)--(vxjp);
\node[rednode] (r1) at (2,-1.2) {};
\draw[dashed] (b3)--(r1);
\node[bluenode] (b4) at (1.5,-1.2) {};
\node[rednode, label=60: $v^q_{\overline{x}}$] (vxpq) at (1,-1.2) {};
\draw[edge] (b4)--(vxpq);
\draw[edge] (b4)--(r1);
\node[bluenode,label=below:$v_{c_3}$] (vc3) at (1,-1.7) {};
\draw[edge] (vc3)--(vxpq);

\node[bluenode] (b7) at (1,-0.5) {};
\draw[dashed] (vxpq)--(b7);

\node[rednode] (r2) at (1,0) {};
\draw[edge] (b7)--(r2);
\node[bluenode] (b8) at (1,0.5) {};
\draw[edge] (b8)--(r2);
\node[rednode, label={[label distance=-4pt]left:{\footnotesize $v^{\ell _x}_{\overline{x}}$}}] (vxlx) at (1, 1) {};
\draw[edge] (b8)--(vxlx);

\node[bluenode,label={[label distance=0pt]above:{\footnotesize$v^{x(2\ell _x\!)}$}}] (vlast) at (1.5,1) {};
\draw[edge] (vlast) -- (vxlx); 
\draw[edge] (vlast) -- (vx1p);
% \draw[dashed] (vxpq)--(vlast);

\begin{scope}[on background layer]
    \node[gadgetbox, fit=(vx1p) (vx2p) (b1) (b3) (b8) (r1) (r2)  (b2) (vxjp) (vxpq) (b4) (vxlx) (vlast)] {};
    \draw[edgee](0.7,1.5)--(4,1.5);
    \draw[edgee](0.7,1)--(4,1);
    \draw[edgee](0.7,0.5)--(4,0.5);
    \draw[edgee](0.7,0)--(4,0);
    \draw[edgee](0.7,-0.5)--(4,-0.5);
    \draw[edgee](0.7,-1.2)--(4,-1.2);
    \draw[edgee](0.7,-1.7)--(4,-1.7);
    \draw[edgee](1,1.6)--(1,-2);
    \draw[edgee](1.5,1.6)--(1.5,-2);
    \draw[edgee](2,1.6)--(2,-2);
    \draw[edgee](3.2,1.6)--(3.2,-2);
    \draw[edgee](3.7,1.6)--(3.7,-2);
\end{scope}

\end{tikzpicture}%}
\caption*{(b) Grid embedding of gadget $V_x$ for variable $x$ (in the left figure). The dashed lines indicate subpaths of the gadget.}
\end{minipage}
\vspace{0.3cm}
\caption{Grid reduction excerpt from \Cref{thm:gridreduction}.}
\label{fig:gridpicturefinal}
\end{figure}

We show that the constructed graph $G$ is planar by providing the following embedding.

\noindent
\textbf{Embedding of $G$.} To find a grid embedding of the graph constructed so far, we adapt the planar rectilinear embedding of the incidence graph $G_{\varphi}$ of $\mathcal{\varphi}$. We embed the variable gadget for a variable $x$ along the horizontal and vertical legs connected to $x$ in  the rectilinear embedding (see \Cref{fig:gridpicturefinal}). Recall that each clause is monotone, i.e., it is either positive or negative clause.

\smallskip
\noindent
\textbf{Embedding the clause gadget.} Let $c$ be a clause of size three in formula $\mathcal{\varphi}$. In the embedding of $G$, we position the clause vertex $v_c$ at the intersection of the clause's horizontal segment and its central vertical leg in the rectilinear embedding. For example, in \Cref{fig:gridpicturefinal}(b) vertex $v_{c_1}$ is placed at a coordinate corresponding to the position marked as $c_1$ in \Cref{fig:gridpicturefinal}(a).
For positive clauses, an auxiliary vertex $v^1_c$ is placed above $v_c$ (not shown in the figure), while for negative clauses, it is placed below. 

\smallskip
\noindent
\textbf{Embedding the variable gadget.} We embed the variable gadget so that it maintains the connection to its clause vertices. 
The variable gadget $V_x$ for each variable $x$ is routed along the existing connections in the rectilinear embedding to reach each of its clauses. In particular, if the clause is connected to the variable by the central leg, then the variable gadget reaches the same column as the clause vertex and the row immediately below (resp. above) the clause vertex for a positive (resp. a negative) clause. On the other hand, if the clause is connected to the variable by a leg that is left  (resp. right) to the variable, then the variable gadget reaches the same row as the clause vertex and the column immediately left (resp. right) to the clause vertex.   %To accommodate this, we introduce auxiliary horizontal and vertical lines parallel to the original rectilinear segments.
For example, suppose that variable $x$ appears in one positive clause $c_1$ and two negative clauses $c_2$ and $c_3$ (Fig.~\ref{fig:gridpicturefinal}(a)).  The construction proceeds as follows (see Fig.~\ref{fig:gridpicturefinal}(b)):

   % \smallskip
   % \noindent
   \textit{Connecting the Gadget for $x$ to the clauses containing $x$:} Without loss of generality, suppose that the vertex $v^1_x$ is placed at the same column as $v_{c_1}$ and the row immediately below $v_{c_1}$. Then, the positive literal vertex $v_x^1$ is connected to the positive clause vertex $v_{c_1}$, with the variable gadget following the vertical leg from clause $c_1$ to the variable's horizontal axis.
   
 The path continues along the vertical and horizontal segments to reach the row of the vertex $v_{c_2}$. Observe that $c_2$ is connected to $x$ by its left leg. Thus, the path reaches the column immediately left to the column of $v_{c_2}$.  A directed edge $(v_{c_2}, v^j_{\overline{x}})$ is added from the clause vertex $v_{c_2}$ to the corresponding negative variable literal $v^j_{\overline{x}}$.

 The gadget then traverses back along the horizontal lines to reach the column of the vertex $v_{c_3}$ as $x$ and $c_3$ are in the same column in the rectilinear embedding (in \Cref{fig:gridpicturefinal}(a)). We add the edge $(v_{c_3}, v^q_{\overline{x}})$ for some index $q\in [\ell _x]$.

    \textit{Completing the variable gadget of $x$}: Finally, the variable gadget returns via the vertical leg of $c_3$, running parallel to its initial trajectory along the leg of $c_1$. The cycle is closed by placing a controller vertex $v^{x(2\ell_x)}$ between $v^{\ell _x}_{\overline{x}}$ and $v^1_x$, with a directed edge to $v^1_x$, completing the cycle of the variable gadget of $ x$ (see the highlighted part in Fig.~\ref{fig:gridpicturefinal}(b)).

Let us denote the graph constructed using the clause-variable connections as defined above by $G$.

\textbf{Properties of $G$.} In the construction of $G$ and $f$ we maintain the following properties. We use these properties to prove structural lemmas for $G$. 
\begin{enumerate}[label= (P\arabic*)]
\item\label{it:gridreddeficiency} Each clause vertex and each controller vertex is under majority illusion with deficiency one.
    \item\label{it:gridrednointersection1} By construction, the variable gadget for each variable forms a simple cycle (in undirected sense) along the grid lines.
    \item Each clause vertex is connected to a red vertex corresponding to each literal that appear in it.
    \item\label{it:gridrednointersection2} The subpath of a variable gadget that connects two of its clauses is always on odd number of vertices. Moreover, the distance between the clause vertices are large enough to avoid any overlap or intersection between two paths from variable gadgets of two distinct variables. A distance bounded by polynomial in the size of $\varphi$ is sufficient to maintain this property.
    \item\label{it:gridredpolylx} Given an embedding of $G$, we define $\ell_x$ as the number of negative variable vertices i.e., $v^i_{\overline x}$ (and equivalently, the number of positive variable vertices i.e., $v^i_x$) on the variable gadget for variable $x$. Thus, $\ell_x$ is bounded by a polynomial in size of $\varphi$.
    \item\label{it:gridredpropv} Each two consecutive columns (and rows) start with different colors, i.e, if one starts with red, the next one starts with blue.  
\end{enumerate}

 We begin by showing that we are close to constructing a grid. 

    \begin{lemma}\label{lem:gridminorporp}
        The graph $G$ is a subgraph of a grid; specifically, $G$ can be obtained by deleting a set of vertices and edges from a grid graph. Moreover, in $G$ there are no two consecutive red vertices in a row or in a column.
        % Moreover, in $G$ there are no two consecutive rows or columns that have two consecutive red vertices. 
    \end{lemma}

    \begin{proof}
   
By the above construction, the graph $G$ is embedded into the plane using only orthogonal segments (horizontal and vertical lines). Since all vertices are positioned at the intersections of a discrete rectangular coordinate system and no two edges cross except at these vertices, by \ref{it:gridrednointersection1} and \ref{it:gridrednointersection2}, the graph $G$ is isomorphic to a subgraph of a sufficiently large rectangular grid. 
% It follows immediately that $G$ is a subgraph of a grid.

    Now we prove the second part of the statement of the lemma. According to the construction of the variable gadget, every red variable vertex $v^i_x$ or $v^i_{\overline{x}}$ is preceded and followed by a blue controller vertex $v^{xj}$ for each $i \in [\ell_x], j \in [2\ell_x]$. Since we are not allowed to cross two variable gadgets, it is impossible for two red vertices to be next to each other. Additionally, by \ref{it:gridredpropv} of the construction, we know that no two columns (or rows) have red vertex in the same row (resp. column). This completes the proof of Lemma~\ref{lem:gridminorporp}.
 \end{proof}

Therefore, $G$ is a subgraph of a grid. We now construct a grid graph $\hat{G}$ from $G$. 

\noindent
    \textbf{Construction of the grid $\hat{G}$}: In the graph $G$, it may be possible that some vertices as well as some edges are missing. 
To complete the grid, we add a dummy blue vertex (say, $d$) at an empty position and set $d$ as the out-neighbor of the vertices around it in the grid. Due to second part of \Cref{lem:gridminorporp}, we can add each missing edge with blue vertex as the head (that is, receives the edge). Thus, we ensure that no dummy vertex is under illusion.
    
 We set $k'=
     \sum_{x \in Var(\varphi)} \ell_x$ where $Var(\varphi)$ denotes the set of variables in the formula $\varphi$. %The sum is over total number of variables in the formula $\mathcal{\varphi}$. 
     Then $k'$ is polynomially bounded by $|\mathcal{\varphi}|$, the size of the formula $\mathcal{\varphi}$ using \ref{it:gridredpolylx}. So, we have the required constructions for the reduction.  Thus, the construction can be done in polynomial time. Let us now prove the equivalence of the \textsc{Difr} problem and the \textsc{Planar Monotone Rectilinear 3-SAT} problem.

  Note that the clause vertices $v_c$ (with three red and one blue out-neighbors), and the controller vertices  $v^{xi}$ (two red out-neighbors) for $i \in [2\ell_x]$ are under illusion, for every variable $x$. We claim that, $\mathcal{\varphi}$ is satisfiable if and only if there exists a illusion-free recoloring for $\hat{G}$ of size at most $k'$. 

\textit{Forward Direction.} If $\varphi$ is satisfiable, there exists a truth assignment $t$ from the variables to \{\textsc{True, False}\}. For every variable $x$, for each $i\in [\ell_x]$, we set 
      \begin{center}
      $f'(v^i_x)=B, f'(v^i_{\overline{x}})=R$ if $t(x)=$\textsc{True},\\ \vspace{2mm}
      $f'(v^i_x)=R, f'(v^i_{\overline{x}})=B$ if $t(x)=$\textsc{False}.
      \end{center} 

      For all other vertices $v$ in $\hat{G}$, we set $f'(v)=f(v)$. In this way, $f$ and $f'$ differ on precisely $k'$ vertices. For every variable $x$ and every $i\in [2\ell_x]$, the controller vertices $v^{xi}$ in the variable gadget associated to $x$ are not under illusion in $f'$ since the label of precisely all the $v^i_x$ or all the $v^i_{\overline{x}}$ was changed. For every clause $c$, we know that $c$ was satisfied by $t$. Hence, there is at least one variable-vertex (positive or negative) that in-neighbors $v_c$ whose label was changed and $v_c$ is not under illusion in $f'$. These were the only vertices under illusion in $f$, thus $\hat{G}$ is illusion-free under the coloring function $f'$.

 \textit{Backward Direction.} Let us assume that there exists an illusion-free recoloring $f'$ for $\hat{G}$ of size $k'$. Since no vertex is under illusion under $f'$, we have the following lemma:

 \begin{lemma}\label{lem:gridback}
      \begin{enumerate}[label=(\roman*)]
     \item\label{lem:gridback_1} For every variable $x$, the function $f'$ recolors exactly $\ell_x$ vertices in the variable gadget for $x$.
      \item\label{lem:gridback_2} The function $f'$ recolors either all the positive variable vertices ($v^i_x$'s), or all the negative variable vertices ($v^i_{\overline{x}}$'s) in the variable gadget $V_x$ for $x$.
      \end{enumerate}
 \end{lemma}
\begin{proof}

     We first prove \ref{lem:gridback_1}. The variable gadget for $x$ has $2\ell_x$ red vertices and $2\ell_x$ blue vertices. Note that at most two blue vertices become illusion-free by one recoloring. So,  we need at least $\left\lceil\frac{2\ell_x}{2}\right\rceil=\ell_x$ recolorings for $2\ell_x$
     blue vertices. Furthermore, if $f'$ recolors more than $\ell_x$ vertices for some $x$, then the total number of recolorings is greater than 
         $\sum_{x}\ell_x=k'$
     , a contradiction to the fact that the solution size is $k'$.

     Now we prove \ref{lem:gridback_2}. Suppose both positive and negative variable vertices are recolored in a variable gadget for some variable $x$. Then we will show a contradiction to \Cref{lem:gridback}.\ref{lem:gridback_1}. 
     Observe that there exist a positive and a negative variable vertex that are not recolored as $\ell_x$ vertices are recolored from the variable gadget and there are $\ell_x$ positive and $\ell_x$ negative vertices in the cycle. 
     Let $v^s_x$ and $v^t_{\overline{x}}$ be the two nearest positive and negative variable vertices, respectively, that are not recolored and there are $\ell$ positive variable vertices in this subpath between $v^s_x$ and $v^t_{\overline{x}}$, including the vertices $v^s_x$ and $v^t_{\overline{x}}$, in the variable gadget for $x$.
     Let us call this subpath $P_{s,{\overline{t}}}$. We have the following observation.

     \begin{observation}\label{obs:grid_subpath}
      \begin{enumerate}[label=(\roman*)]
          \item\label{it:ell1} Number of vertices on the path $P_{s,{\overline{t}}}$ is strictly more than one. 
          \item\label{it:tworecolored} The vertices $v^{s-1}_{\overline{x}}$ and $v^{t+1}_x$ are recolored.
          \item\label{it:subpathlength} total number of vertices recolored in the path $P_{s,{\overline{t}}}$ is $2\ell-2$.
      \end{enumerate} 
     \end{observation}
     \begin{proof}

        (i)That is, first, we prove that $\ell>1$, otherwise the unique controller vertex between $v^s_x$ and $v^t_{{\overline{x}}}$ is under illusion, a contradiction.

       (ii) In path $P_{s,{\overline{t}}}$, every positive and negative variable vertex other than $v^s_x$ and $v^t_{\overline{x}}$ have been recolored by $f'$, as $v^s_x$ and $v^t_{\overline{x}}$ are the two nearest vertices that are not recolored. Consider the in-neighbors of $v^s_x$ and $v^t_{{\overline{x}}}$ that are outside the subpath $P_{s,{\overline{t}}}$ and let these in-neighbors be $v^{xs}$ and $v^{xt}$, respectively (see \Cref{fig:explanation obs. 4}). 
     Since we have started with an illusion-free recoloring, and $v^s_x$ and $v^t_{\overline{x}}$ are not recolored, so the other out-neighbors of $v^{xs}$ and $v^{xt}$, namely, $v^{s-1}_{\overline{x}}$ and $v^{t+1}_x$, respectively, have been recolored.

\begin{figure}[t]
\captionsetup{font=footnotesize}
    \centering
    \begin{tikzpicture}[
        variable_node/.style={circle, draw, minimum size=0.3cm, inner sep=0pt, thick},
        rednode/.style={circle, draw=red, fill=red!5, minimum size=0.3cm, thick},
        bluenode/.style={circle, draw=blue, fill=blue!5, minimum size=0.3cm, thick},
        font=\small,
        node distance=0.4cm 
    ]

        % --- Nodes ---
        \node[rednode, label=above:$v^1_x$] (vpx1) {};

        \node[draw=none] (dots1) [right=0.5cm of vpx1] {$\dots$};
        
        \draw[-Stealth, shorten <=2pt] (dots1)-- (vpx1) ;

        \node[bluenode, label=above:{$v^{s-1}_{\overline{x}}$}] (vpxs-1) [right= of dots1 ] {};
        \draw[-Stealth] (dots1) -- (vpxs-1);

        \node[bluenode, label=above:{$v^{xs}$}] (vcs) [right=of vpxs-1] {};
        \draw[-Stealth] (vcs)--(vpxs-1);

        \node[rednode, label=above:{$v^{s}_x$}] (vpxs) [right= of vcs ] {};
        \draw[-Stealth] (vcs) -- (vpxs);

        \node[draw=none] (dots2) [right=0.5cm of vpxs] {$\dots$};

        \node[rednode, label=above:{$v^{t}_{\overline{x}}$}] (vpxt) [right= of dots2 ] {};

        \node[bluenode, label=above:{$v^{xt}$}] (vct) [right=of vpxt] {};
        \draw[-Stealth] (vct)--(vpxt);

        \node[bluenode, label=above:{$v^{t+1}_x$}] (vpxt+1) [right= of vct ] {};
        \draw[-Stealth] (vct) -- (vpxt+1);

        \node[draw=none] (dots3) [right=0.5cm of vpxt+1] {$\dots$};
        \draw[-Stealth] (dots3) -- (vpxt+1);

        \node[bluenode, label=above:{$v^{x(2\ell_x)}$}] [right=of dots3](vclast) {};
        \draw[-Stealth] (vclast)--(dots3);

        % --- Fixed Feedback Edge ---
        % Uses -Stealth tip, thick weight, and native black color to match the layout arrows
        \draw[-Stealth, thick] (vclast) to [out=-160, in=-20, looseness=0.4] (vpx1);
        
        % --- Compact Horizontal Curly Brace ---
        \draw[decorate, decoration={calligraphic brace, mirror, amplitude=4pt}, thick] 
            ([yshift=-0.45cm]vpxs.west) -- ([yshift=-0.45cm]vpxt.east)
            node[pos=0.5, yshift=-0.35cm] {$P_{s,\overline{t}}$};
        \draw[decorate, decoration={calligraphic brace, amplitude=4pt}, thick] 
            ([yshift=0.8cm]vpxs-1.west) -- ([yshift=0.8cm]vpxt+1.east)
            node[pos=0.5, yshift=0.35cm] {$P_{\overline{s-1},t+1}$};
        
        \end{tikzpicture}
    \caption{Explanation of Observation~\ref{obs:grid_subpath}: after recoloring $v^{s-1}_{\overline{x}}$ and $v^{t+1}_x$ become blue; subpath $P_{s,\overline{t}}$ has $\ell$ positive variables with $2\ell-2$ recoloring; subpath $P_{\overline{s-1},t+1}$ has $2\ell$ recoloring}
    \label{fig:explanation obs. 4}
\end{figure}

     (iii) Since $v^s_x$ and $v^t_{\overline{x}}$ are the nearest non-recolored vertices, so the number of vertices recolored in the subpath $P_{s,{\overline{t}}}$ is $2\ell-2$, as there are $\ell$ positive variable vertices in this subpath between $v^s_x$ and $v^t_{\overline{x}}$, including the end vertices. Moreover, in the subpath $P_{\overline{s-1},t+1}$ (the subpath from $v^{s-1}_{\overline{x}}$ to $v^{t+1}_x$), number of vertices recolored is $(2\ell-2)+2=2\ell$. There are equal number of positive and negative variable vertices in the subpatha $P_{\overline{s-1},t+1}$ and that number is $\ell+1$. This completes the proof of the observation. 
\end{proof}

To get a contradiction, we need the total number of recolorings on the cycle. Thus, we next count the number of recolored vertices in the remaining variable gadget.

First, observe that since the controller vertex $v^{x(2i)}$ is illusion-free, one of the two vertices $v^i_x$ and $v^{i}_{\overline{x}}$ is recolored.
     Let us consider the remaining part of the variable gadget $V_x$, that is, the subpaths: path $P_{1,s-1}$ (the path from $v^1_x$ to $v^{s-1}_x$) and path $P_{\overline{t+1},{2{\ell_x}}}$ (the path from $v^{t+1}_{\overline{x}}$ to $v^{x2\ell_x}$). Then there are $\ell_x-(\ell+1)$ positive variable vertices in these two subpaths. Since we have an illusion-free recoloring, so at least $\ell_x-(\ell+1)$ variable vertices have been recolored in the two subpaths. %combining both the subpaths $P_{1,s-1}$ and $P_{\overline{t+1},{\overline{\ell_x}}}$. 

     Finally,  total number of recolored vertices in the variable gadget $V_x$ is $2+ (2\ell -2) + \ell_x-(\ell+1)$ as from Observation ~\ref{obs:grid_subpath}\ref{it:tworecolored} and \ref{it:subpathlength}, we have that the two vertices $v^{s-1}_{\overline{x}}$ and $v^{t+1}_x$ are recolored,  $2\ell-2$ vertices are recolored in the path $P_{s,{\overline{t}}}$, and $V_x= P_{1,s-1} \cup \{v^{s-1}_{\overline{x}}\}\cup P_{s,{\overline{t}}} \cup \{v^{t+1}_x\} \cup P_{\overline{t+1},2\ell_x}$. 
     Consequently, $\ell_x+\ell-1$ vertices are colored that is strictly more than $\ell_x$ as $\ell>1$ (Observation ~\ref{obs:grid_subpath}\ref{it:ell1}), a contradiction to the fact that $f'$ colored $\ell_x$ vertices in the path (\Cref{lem:gridback}.\ref{lem:gridback_1}).
This completes the proof of \ref{lem:gridback_2}, the second part of Lemma~\ref{lem:gridback}.
 \end{proof}
 
   We construct the truth assignment by setting $x$ to \textsc{True} (\textsc{False}, resp.) for each variable $x$ such that $f'(v^i_x)=B$ ($f'(v^i_{\overline{x}})=B$, resp.) for all $i\in [\ell_x]$. This assignment is well defined due to \Cref{lem:gridback}\ref{lem:gridback_2} and it satisfies  $\mathcal{\varphi}$.
  
   Finally, we show that it yields a satisfying truth assignment for $\mathcal{\varphi}$. Since no clause vertex remains under illusion, the recoloring function $f'$ have recolored at least one variable vertex adjacent to each clause vertex $v_c$. This implies that every clause contains at least one literal set as \textsc{True}, confirming that $\mathcal{\varphi}$ is satisfiable. This completes the proof.
\end{proof}

    We observe that \Cref{thm:gridreduction} additionally shows that the undirected version of the problem (as studied in \cite{fioravantes2025eliminating}) is NP-hard for subgraphs of grids. However, it is not clear if the undirected version of the problem remains hard on grid graphs. This is because in the construction of \Cref{thm:gridreduction}, the graph $G$ is a subgraph of grid and removing the orientation of the edges in $G$ does not change if a vertex is under illusion. However, the orientation of edges in the graph $\hat{G}$ is crucial to keep the red vertices illusion free and the controller vertices under illusion. 

\begin{corollary}
Given an undirected graph $G$ and a coloring function $f: V(G) \rightarrow \{B,R\}$, deciding if recoloring $k$ vertices eliminates majority illusion is NP-hard even when $G$ is a subgraph of a grid.
\end{corollary}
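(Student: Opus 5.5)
We reuse the reduction of \Cref{thm:gridreduction} and stop at the graph $G$, before the dummy vertices that turn it into the full grid $\hat{G}$ are added. Let $U$ be the underlying undirected graph of $G$ with the same coloring $f$ and budget $k'=\sum_{x}\ell_x$. By \Cref{lem:gridminorporp}, $U$ is a subgraph of a grid, and it is built in polynomial time. It remains to show that $(U,f,k')$ is a yes-instance of the undirected problem if and only if $\varphi$ is satisfiable. The intended argument is that dropping orientations changes no illusion status, so the argument for $G$ carries over verbatim.

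\textbf{Step 1: illusion status is the same in $U$ and $G$.} We check this vertex by vertex in the original coloring and under the relevant recolorings.
\begin{itemize}
\item A controller $v^{xi}$ has only out-edges in $G$, namely to its two variable vertices. Its undirected neighborhood is therefore the same two vertices.
\item A clause vertex $v_c$ also has only out-edges: to its three literal vertices and to $v^1_c$. Its neighborhood in $U$ is again unchanged.
\item A variable vertex $v^i_x$ or $v^i_{\overline{x}}$ is a sink in $G$. In $U$ its neighbors are two blue controllers plus possibly one blue clause vertex, so it has at least two blue neighbors and no red ones.
\item The auxiliary vertex $v^1_c$ has the single blue neighbor $v_c$.
\end{itemize}
So in $U$ exactly the controllers and clause vertices are under illusion, each with deficiency one, just as in $G$.

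\textbf{Step 2: recolorings keep variable and auxiliary vertices safe.} By \Cref{obs:bluerecolor}, we may assume only red vertices, i.e., variable vertices, are recolored; we expect the analogous observation to hold in the undirected setting by the same argument. A variable vertex then has only blue neighbors and cannot fall under illusion. Neither can $v^1_c$, whose only neighbor $v_c$ stays blue.

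\textbf{Step 3: transfer the equivalence.} A recoloring of variable vertices is illusion-free for $U$ exactly when it is illusion-free for $G$. The forward direction and \Cref{lem:gridback} (with its counting argument on the cycle $V_x$) therefore apply unchanged, giving the equivalence with satisfiability of $\varphi$.

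The main obstacle is the remark that the analogue of \Cref{obs:bluerecolor} must hold for undirected graphs: recoloring red to blue might seem able to hurt a neighbor. It cannot, because recoloring red to blue only increases blue counts. Recoloring blue to red never helps, since it only increases red counts everywhere. So the observation holds here as well.
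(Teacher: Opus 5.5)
Your proposal is correct and follows the paper's own justification: the paper's one-line remark is that $G$, taken before the dummy vertices of $\hat{G}$ are added, is a subgraph of a grid, and that dropping the orientations in $G$ changes no vertex's illusion status. Your Steps 1--3, including the check that recoloring blue vertices never helps in the undirected setting, spell out exactly this. One cosmetic point: under the formal definition $\textit{def}(v)=\max\{0,r_v-b_v\}$, controllers and clause vertices have deficiency $2$ rather than $1$ (each needs exactly one recolored neighbor), but this does not affect the argument.
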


\subsection{Hardness of p-\textsc{Difr} on Bipartite DAGs}

Now we study the computational difficulty of the $p$-\textsc{Difr} problem. We present a reduction from the classical NP-complete problem \textsc{Hitting Set}. 
In $p$-\textsc{Difr}, each vertex under $p$-illusion requires a certain fraction of its out-neighbors be recolored blue in order to eliminate the $p$-illusion. These requirements can be viewed as demanding that at least a prescribed number of vertices be chosen from specific neighborhoods, precisely mirroring the hitting requirement in set systems.
We show this by providing  a polynomial time reduction from  \textsc{Hitting Set} to $p$-\textsc{Difr} that preserves the parameter, solution size.

\noindent
\textsc{Hitting Set}.\\
\textbf{Input:} A universe $U$, a family $\mathcal{A}$ of sets over $U$, and a nonnegative integer $k$. \\
\textbf{Question:} Does there exist a subset $U' \subseteq U$ of size at most $k$ that has a nonempty intersection with all elements of $\mathcal{A}$?

It is known that \textsc{Hitting Set} is NP-complete \cite{garey1990computers} and W[2]-hard when parameterized by solution size \cite{cygan2015parameterized}. We constraint the minimum size of the sets  in the family by $1/p$. Note that this version of \textsc{Hitting Set} remains W[2]-hard as the sizes of the sets are unbounded.
 
\begin{restatable}{theorem}{thmpdifrreduction}\label{thm:pdifrreduction}\label{thm:W[2]hard}
    $p$-\textsc{Difr} is NP-complete and W[2]-hard for any $p\in (0,1)$ even on acyclic bipartite graphs. 
\end{restatable}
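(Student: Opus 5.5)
We reduce from \textsc{Hitting Set} restricted to families in which every set has size at least $1/p$, which the paper has already noted remains NP-complete and W[2]-hard parameterized by $k$. The reduction keeps $k$ unchanged, so both hardness claims follow at once. The graph will be bipartite: all arcs go from a layer of set vertices, which are sources, to a layer of sinks. Such a graph is trivially acyclic. Membership in NP was already observed at the start of this section.

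\textbf{Construction.} Let $(U,\mathcal{A},k)$ be an instance with $|S|\ge 1/p$ for all $S\in\mathcal{A}$.
\begin{enumerate}[noitemsep]
\item For each $u\in U$, create a red sink $w_u$.
\item For each $S\in\mathcal{A}$ with $s=|S|$, create a blue set vertex $v_S$ with arcs $(v_S,w_u)$ for every $u\in S$.
\item Also give $v_S$ arcs to $b_S$ fresh blue sinks.
\end{enumerate}
The number $b_S$ is chosen so that $\textit{def}_p(v_S)=1$, i.e.\ $\lceil p(s+b_S)\rceil - b_S = 1$.

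\textbf{Existence of $b_S$.} Consider $g(b)=\lceil p(s+b)\rceil-b$. We have $g(0)=\lceil ps\rceil\ge 1$ because $ps\ge 1$. Increasing $b$ by one raises $p(s+b)$ by $p<1$, so the ceiling grows by $0$ or $1$ and $g$ changes by $0$ or $-1$. Moreover $g(b)\le p(s+b)+1-b\le 1$ once $b\ge ps/(1-p)$. By this discrete intermediate value argument, some $b_S\le \lceil ps/(1-p)\rceil$ satisfies $g(b_S)=1$. Since $p$ is a fixed constant, this bound is polynomial and the construction runs in polynomial time.

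Every sink has out-degree $0$, so its $p$-deficiency is $\max\{0,\lceil 0\rceil-0\}=0$. Hence the vertices under $p$-illusion are exactly the $v_S$, each with $p$-deficiency exactly one. This also gives the maximum-deficiency-one property mentioned in the introduction.

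\textbf{Equivalence.} Let the target budget also be $k$.
\begin{itemize}[noitemsep]
\item If $H\subseteq U$ is a hitting set with $|H|\le k$, recolor $\{w_u:u\in H\}$ blue. Each $v_S$ gains at least one blue out-neighbor, so its deficiency drops to $0$ by \Cref{obs:pdef}.
\item Conversely, take any solution of size at most $k$. By \Cref{obs:bluerecolor} we may assume it recolors only red vertices, i.e.\ only vertices $w_u$. Each $v_S$ needs at least $\textit{def}_p(v_S)=1$ of its red out-neighbors recolored, so the set $\{u: w_u \text{ recolored}\}$ meets every $S$. It is therefore a hitting set of size at most $k$.
\end{itemize}

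The only delicate point is forcing the deficiency to be exactly one for an arbitrary rational $p\in(0,1)$ using only blue padding. This is handled by the monotone step argument for $g$ above, and it is precisely where the lower bound $|S|\ge 1/p$ is needed, to guarantee $g(0)\ge 1$. Everything else is routine.
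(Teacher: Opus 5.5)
Your proposal is correct and is essentially the paper's reduction from \textsc{Hitting Set}: red element sinks, blue set vertices padded with blue sinks so that each has $p$-deficiency exactly one, and the budget $k$ unchanged. The differences are minor. You use private padding vertices where the paper uses a shared pool. You also choose the padding size by your step argument on $g$, which is cleaner than the paper's explicit interval for $x_j$; that interval's lower end should be $\lceil (pd_j-1)/(1-p)\rceil$ rather than a floor for one recoloring to suffice. As a small aside, $g(0)=\lceil ps\rceil\ge 1$ already holds for every nonempty set, so your argument does not actually need the restriction $|S|\ge 1/p$.
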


\begin{proof}

We show this by providing  a polynomial time reduction from  \textsc{Hitting Set} to $p$-\textsc{Difr} that preserves the parameter, solution size.   Let $U=\{u_1, u_2, \ldots, u_n\}$ be the universe and $\mathcal{A}=\{A_1, A_2, \ldots, A_m\}$ be the family of sets of sizes $d_1,d_2, \dots,d_m$ where $\min_{j \in [m]} d_j> \left\lceil \frac{1}{p}\right\rceil$. Let $(U, A, k)$ be an instance of the $d$-\textsc{Hitting Set}. We will now construct the graph for the $p$-\textsc{Difr} problem (see \Cref{fig:difrreduction}). 

\begin{enumerate}[noitemsep,label=(\roman*)]
    \item For each element $u_i$, create a red vertex $r_i$ and for each set $A_j$, create a blue vertex $v_j$. Create $x$ dummy blue vertices $b_1, b_2, \ldots, b_x$ where $x = \left\lceil \frac{pd}{1-p}\right\rceil -1$ and $ d = \max_{j \in [m]} d_j$.
\item  For each $u_i\in A_j$, add edge $(v_j, r_i)$, and add edges $(v_j, b _{\ell})$, for $\ell=1, 2, \ldots, x_j$, for each $j\in[m]$, where $x_j$ is a natural number such that $\left\lfloor \frac{pd_j-1}{1-p}\right\rfloor \leq x_j<\left\lceil \frac{pd_j}{1-p}\right\rceil$. Note that the number $x_j$ exists since $p<1$ and $d_j> \left\lceil \frac{1}{p}\right\rceil$ (i.e., $pd_j-1>0$).
\end{enumerate}
        Let us call the graph so created as $G=(V, E)$. This completes the construction. It can be done in polynomial time.

        \begin{figure}[t]
  \centering
\begin{tikzpicture}[
    % Styles for the nodes
    variable_node/.style={circle, draw, minimum size=0.3cm, inner sep=0pt, thick},
    rednode/.style={circle, draw=red, fill=red!5, minimum size=0.3cm, thick},
    bluenode/.style={circle, draw=blue, fill=blue!5, minimum size=0.3cm, thick}, 
    font=\small,
    node distance=0.4cm % Set a standard horizontal gap
]

% --- Bottom row (v_1 ... v_m)
\node[bluenode, label={[label distance=-0.2pt]120:{$v_1$}}] (v1) {};
\node[bluenode, right=of v1, label={$v_2$}] (v2) {};
\node[bluenode, right=of v2, label={[label distance=-0.2pt]20:{$v_3$}}] (v3) {};
\node[right=0.1cm of v3] (vdots) {$\dots$};
\node[bluenode, right=0.2cm of vdots, label={$v_m$}] (vm) {};

% --- Top row (u_1 ... u_n)
% Reduced from 1.2cm to 0.6cm to shorten the vertical length
\node[rednode, above=0.6cm of v1, label={$r_1$}] (u1) {}; 
\node[rednode, right=of u1, label={$r_2$}] (u2) {};
\node[rednode, right=of u2, label={$r_3$}] (u3) {};
\node[right=0.1cm of u3] (udots) {$\dots$};
\node[rednode, right=0.2cm of udots, label={$r_n$}] (un) {};

% --- Rightmost nodes (p_1 ... p_r)
\node[bluenode, below=0.6cm of v1, label={[label distance=-0.2pt]120:{$b_1$}}] (p1) {};
\node[bluenode, below=0.6cm of v2, label={[label distance=-0.5pt]20:{$b_2$}}] (p2) {};
\node[bluenode, below=0.6cm of v3, label={[label distance=-0.3pt]20:{$b_3$}}] (p3) {};
\node[right=0.1cm of p3] (pdots) {$\dots$};
\node[bluenode, below=0.6cm of vm, label={$b_x$}] (pr) {};

% --- Edges
\draw[-stealth] (v1) -- (u1);
\draw[-stealth] (v1) -- (u2);
\draw[-stealth] (v1) -- (p1);
\draw[-stealth] (v2) -- (u1);
\draw[-stealth] (v2) -- (u3);
\draw[-stealth] (v2) -- (un);
\draw[-stealth] (v2) -- (p1);
\draw[-stealth] (v2) -- (p2);

\end{tikzpicture}
\caption{Construction in \Cref{thm:pdifrreduction}, with  $A_1=\{u_1,u_2\}$, $A_2=\{u_1,u_3,u_n\}$}
\label{fig:difrreduction}
\end{figure}

    In our construction, each blue vertex $v_j$ is characterized by $d_j$ red out-neighbors and $x_j$ blue out-neighbors. Our objective is to determine a value for $x_j$ such that the $p$-deficiency of every $v_j$ is exactly $1$. Under this condition, the recoloring of a single red out-neighbor is sufficient to eliminate the $p$-illusion for all such vertices.
    
    \begin{observation} Let $v_j$ for $j \in [m]$ denote a blue vertex in $G$.
        \begin{enumerate}[wide, noitemsep,topsep=0pt,label=(\roman*)]
            \item Vertex $v_j$ is under $p$-illusion.
            \item Recoloring of a single red out-neighbor of $v_j$ eliminates its $p$-illusion.
        \end{enumerate}
    \end{observation}
    \begin{proof}[Proof of the above Observation]
        Each blue vertex $v_j$, for $j \in [m]$, initially has $d_j$ red out-neighbors and the total out-neighbors of $v_j$ is $|N^+(v_j)|= x_j+d_j$.  Since $x_j<\left\lceil \frac{pd_j}{1-p}\right\rceil$, it can be checked that $v_j$ is under $p$-illusion, we have the first statement.

        Furthermore, since $x_j \geq \left\lfloor \frac{pd_j-1}{1-p}\right\rfloor$, a simple calculation shows that $x_j+1\geq p(x_j+d_j)$. As recoloring one red vertex increases the number of blue vertices to $x_j+1$, we have the second statement.
    \end{proof}

    In light of the above observation, we go into the rest of the proof. Let $(G, f, k)$ be the constructed instance of the $p$-\textsc{Difr} problem.  We claim the following
    \textit{$(U, \mathcal{A}, k)$ is a yes-instance for $d$-\textsc{Hitting Set} if and only if $(G, f, k)$ is a yes-instance for $p$-\textsc{Difr}.}

    Suppose $(U, \mathcal{A}, k)$ is a yes-instance of the \textsc{Hitting Set}. Let $X\subseteq U$ with $|X|\leq k$ be such that $X\cap A_j\neq\emptyset$ for all $A_j\in \mathcal{A}$. We recolor the red vertices $R_X$ correspond to $X$ in $G$. Note that $|R_X|=|X|\leq k$. By the construction of the graph we need to recolor exactly one red out-neighbor for every vertex under $p$-illusion(which are the blue vertex) 
    % of each blue vertex 
    to remove its $p$-illusion. Since every $A_j\in \mathcal{A}$ has non empty intersection with $X$, so every blue vertex $v_j$ has at least one red out-neighbor which has been recolored. Hence, $(G, f, k)$ is a yes-instance for $p$-\textsc{Difr}.

    Conversely, Suppose $(G, f, k)$ is a yes-instance for $p$-\textsc{Difr}. Let $R\subseteq R(V)$ be a solution for $p$-\textsc{Difr} problem. Consider the subset $X\subseteq U$ that corresponds to set $R$. Then $|X|\leq k$. Note that $X\cap A_j\neq \emptyset$ for each $A_j\in \mathcal{A}$, because $X\cap A_j=\emptyset$ implies that $v_j$ is under $p$-illusion by the construction of the graph. Hence, $(U, \mathcal{A}, k)$ is a yes-instance for the \textsc{Hitting Set}.
\end{proof}

\subsection{p-\textsc{Difr} for Degree Three Directed Graphs}

In \Cref{thm:gridreduction}, we showed that \textsc{Difr} is NP-hard for graphs of degree at most four. We strengthen the result and  show that $p$-\textsc{Difr} is NP-complete for $p\in (0,\frac{1}{3}]\cup (\frac{1}{3},\frac{2}{3}]$, and is solvable in polynomial time for $p\in (\frac{2}{3},1)$, when the degree of each vertex in the graph is at most three. Note that, for $p\in (\frac{2}{3},1)$, we need every out-neighbor of a vertex to be blue in order to be $p$-illusion-free, so $p$-\textsc{Difr} is trivially polynomial time solvable in this case. We prove the NP-completeness  for $p\in (\frac{1}{3},\frac{2}{3}]$ by a reduction from the \textsc{$3$-Regular Vertex Cover} Problem~\cite{garey1990computers}.

We provide a reduction from a $3$-regular \textsc{Vertex Cover} instance $H$, where in $G$, we create a red vertex for each vertex in $H$ and a blue edge-vertex for each edge in $H$. The out-neighbors of an edge-vertex are its endpoints in $H$ and  it has a personal blue out-neighbor. We show that a subset of $k$ vertices cover all the edges in $H$ if and only if recoloring the corresponding red vertices in $G$ eliminates the $p$-illusion of the edge-vertices for $p \in (1/3,2/3]$.

\smallskip
\noindent
\textsc{$3$-Regular Vertex Cover}:

\textit{Input}: A $3$-regular undirected graph  $H=(U,F)$ and a nonnegative integer $k$.

\textit{Question}: Does there exists a subset $C \subseteq U$ with $|C|\le k$ such that every edge has at least one end point in $C$?

\begin{restatable}{theorem}{thmdegthreereduction}\label{thm:deg3reduction}
 $p$-\textsc{Difr} is NP-complete on graphs of degree at most $3$ for $p\in (\frac{1}{3},\frac{2}{3}]$.
\end{restatable}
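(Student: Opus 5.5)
We reduce from \textsc{$3$-Regular Vertex Cover}, following the sketch given before the statement. Membership in NP was already noted at the start of this section, so only hardness needs proof. Let $(H=(U,F),k)$ be an instance. We build a digraph $G$ as follows.
\begin{enumerate}[noitemsep,label=(\roman*)]
    \item For every $u\in U$, add a red vertex $r_u$.
    \item For every edge $e=\{u,w\}\in F$, add a blue edge-vertex $v_e$ and a blue personal vertex $b_e$.
    \item For every such $e$, add the edges $(v_e,r_u)$, $(v_e,r_w)$ and $(v_e,b_e)$.
\end{enumerate}
We keep the same budget $k$. The construction is clearly polynomial.

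The first step is to check the degree bound. Since $H$ is $3$-regular, each $r_u$ has in-degree $3$ and out-degree $0$. Each $v_e$ has out-degree $3$ and in-degree $0$. Each $b_e$ has in-degree $1$. So every vertex has total degree at most $3$.

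The second step is to compute the $p$-deficiencies. Vertices with empty out-neighborhood need $\lceil p\cdot 0\rceil=0$ blue out-neighbors, so they are never under $p$-illusion, whatever the coloring. For an edge-vertex, $|N^+(v_e)|=3$ and $b_{v_e}=1$. For $p\in(\frac13,\frac23]$ we have $3p\in(1,2]$, hence $\lceil 3p\rceil=2$ and $\textit{def}_p(v_e)=1$. This is the only place where the range of $p$ is used: the lower bound $p>\frac13$ makes every $v_e$ actually under illusion, and the upper bound $p\le\frac23$ ensures one recoloring suffices. Consequently, $v_e$ is $p$-illusion-free under a recoloring $f'$ if and only if at least one of $r_u,r_w$ is recolored blue.

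The third step is the equivalence. If $C$ is a vertex cover of size at most $k$, recolor $\{r_u : u\in C\}$; by the previous step every $v_e$ becomes illusion-free, and no other vertex is ever under illusion. Conversely, take a solution of size at most $k$. By \Cref{obs:bluerecolor} we may assume it recolors only red vertices; recoloring a blue vertex to red never helps, so discarding such recolorings keeps it a solution. Then the solution is $\{r_u: u\in C\}$ for some $C\subseteq U$ with $|C|\le k$. Since every $v_e$ is illusion-free, each edge $e$ has an endpoint in $C$, so $C$ is a vertex cover.

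I expect no real obstacle here. The only delicate points are the ceiling computation pinning down the interval $(\frac13,\frac23]$, and making sure "degree" counts in- plus out-degree, which the construction respects.
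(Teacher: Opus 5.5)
Your proposal is correct and follows essentially the same reduction as the paper: one red vertex per vertex of $H$, and for each edge a blue edge-vertex with its two endpoints and a private blue vertex as out-neighbors. Your degree count, the check that $\lceil 3p\rceil=2$ on $(\frac13,\frac23]$, and the remark that recolorings of blue vertices can be undone are a bit more explicit than the paper, which states these points without proof, but the argument is the same.
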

\begin{proof}
    We provide a polynomial time reduction from \textsc{$3$-Regular Vertex Cover} to $p$-\textsc{Difr}. Let $(H,k)$ be an instance of \textsc{$3$-Regular Vertex Cover}. We will construct an instance $(G,f,k)$ of $p$-\textsc{Difr} (see \Cref{fig:deg3reduction}). 
    The vertex set of $G$ consists of the following vertices with the initial coloring $f$ as assigned:
\begin{enumerate}[noitemsep,label=(\roman*)]
    \item For every vertex $u\in U$, create a vertex $x_u$ and color it red.
    \item For every edge $e\in F$, create two vertices $x_e,b_e$ and color them blue. 
\end{enumerate}
The edge set of $G$ consists of the following edges:
\begin{enumerate}[noitemsep,label=(\roman*)]
    \item For every edge $e=\{u,v\}\in F$, create the directed edges $(x_e,x_u)$ and $(x_e,x_v)$.
    \item Create the directed edge $(x_e,b_e)$, for every $e\in F$.
\end{enumerate}
This completes the construction of $G$ for an instance of $p$-\textsc{Difr}. Then, note that $x_e$ is under $p$-illusion for every $e\in F$, with two red and one blue  out-neighbors. To remove the $p$-illusion we need to recolor at least one red out-neighbor to blue for every $p\in (\frac{1}{3},\frac{2}{3}]$.

Let $(G, f, k)$ be the constructed instance of the $p$-\textsc{Difr} problem.  We claim the following
    \textit{$(H, k)$ is a yes-instance for \textsc{$3$-Regular Vertex Cover} if and only if $(G, f, k)$ is a yes-instance for $p$-\textsc{Difr}.}

    Suppose $(H,k)$ is a yes-instance for \textsc{$3$-Regular Vertex Cover}. Let $C\subseteq U$ be a vertex cover of size at most $k$. Let us consider the set $V'=\{x_u\in V:u\in C\}$. Then note that $|V'|\le k$ and recoloring each element of $V'$ to blue turns $G$ into a $p$-illusion-free graph, because every edge in $H$ has an end point in $C$. 

    Conversely, suppose $(G,f,k)$ is a yes-instance for $p$-\textsc{Difr}. Let $R'\subseteq R(V)$ be a solution for $p$-\textsc{Difr} of size at most $k$. Then, consider the set $U'=\{u\in U:x_u\in R'\}$. Then $U'$ is a vertex cover of $H$ of size at most $k$ as every illusion vertex $x_e$ for $e=\{u,v\}\in F$ has at least one out-neighbor $x_u$ (or $x_v$) that is recolored. This completes the proof of the theorem. 

\begin{figure}[t]
  \centering
\begin{tikzpicture}[
    % Styles for the nodes
    variable_node/.style={circle, draw, minimum size=0.3cm, inner sep=0pt, thick},
    rednode/.style={circle, draw=red, fill=red!5, minimum size=0.3cm, thick},
    bluenode/.style={circle, draw=blue, fill=blue!5, minimum size=0.3cm, thick}, 
    font=\small,
    node distance=0.4cm % Set a standard horizontal gap
]

% --- Bottom row (v_1 ... v_m)
\node[bluenode, label={[label distance=-0.2pt]120:{$x_e$}}] (xe) {};
\node[bluenode, right=of xe, label={[label distance=-0.2pt]-10:{$x_g$}}] (xg) {};
\node[right=0.1cm of xg] (xdots) {$\dots$};
\node[bluenode, right= 0.2cm of xdots, label={[label distance=-0.2pt]20:{$x_h$}}] (xh) {};

% \node[bluenode, right=0.2cm of vdots, label={$v_m$}] (vm) {};

% --- Top row (u_1 ... u_n)
% Reduced from 1.2cm to 0.6cm to shorten the vertical length
\node[rednode, above=0.6cm of xe, label={$x_u$}] (xu) {}; 
\node[rednode, right=of xu, label={$x_v$}] (xv) {};
\node[right=0.1cm of xv] (vdots) {$\dots$};
% \node[rednode, right=of u2, label={$r_3$}] (u3) {};

\node[rednode, right=0.2cm of vdots, label={$x_w$}] (xw) {};

% --- Rightmost nodes (p_1 ... p_r)
\node[bluenode, below=0.6cm of xe, label={[label distance=-0.2pt]120:{$b_e$}}] (be) {};
\node[bluenode, below=0.6cm of xg, label={[label distance=-0.5pt]20:{$b_g$}}] (bg) {};
% \node[bluenode, below=0.6cm of v3, label={[label distance=-0.3pt]20:{$b_3$}}] (p3) {};
\node[right=0.1cm of bg] (bdots) {$\dots$};
\node[bluenode, below=0.6cm of xh, label={[label distance=-0.2pt]20:{$b_h$}}] (bh) {};

% --- Edges
\draw[-stealth] (xe) -- (xu);
\draw[-stealth] (xe) -- (xv);
\draw[-stealth] (xe) -- (be);
\draw[-stealth] (xg) -- (xv);
\draw[-stealth] (xg) -- (xw);
\draw[-stealth] (xg) -- (bg);
\draw[-stealth] (xh) -- (vdots);
\draw[-stealth] (xh) -- (xw);
\draw[-stealth] (xh) -- (bh);

\end{tikzpicture}
\caption{Construction in \Cref{thm:deg3reduction}, two edges   $e=\{u,v\}$, $f=\{v,w\}$, and $h=\{w,z\}$ in $F=E(H)$, corresponding $x_z$ is not shown}
\label{fig:deg3reduction}
\end{figure}
\end{proof}
It remains to show that it is NP-complete when $p\in (0,\frac{1}{3}]$. Note that for $p\in (0,\frac{1}{3}]$, a vertex under $p$-illusion needs one blue out-neighbor to be $p$-illusion-free, that is, a vertex is under $p$-illusion if all its out-neighbors are red. We start from a $3$-\textsc{Hitting Set} instance and modify the reduction in \Cref{thm:pdifrreduction} by omitting the vertices $\{b_1, \dots, b_x\}$ in the construction  to obtain the following result.

\begin{restatable}{proposition}{degthreereductionforpzerotoonethird}\label{prop:deg3reductionfor p zero to one-third}
 $p$-\textsc{Difr} is NP-complete on graphs of degree at most $3$ for $p\in (0,\frac{1}{3}]$.
\end{restatable}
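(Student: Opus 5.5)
The plan is a polynomial-time reduction that reuses the construction of \Cref{thm:pdifrreduction}, with the dummy blue vertices $b_1,\dots,b_x$ removed. For $p\in(0,\frac13]$ and any vertex $v$ with $1\le|N^+(v)|\le 3$ we have $\lceil p\,|N^+(v)|\rceil = 1$. So such a vertex is under $p$-illusion exactly when all its out-neighbors are red, and recoloring any one of them eliminates the illusion. This turns the problem directly into a hitting requirement, and no padding is needed to tune the $p$-deficiency. Membership in NP is as noted at the start of the hardness section.

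\textbf{Source problem.} The main obstacle is degree. In the construction of \Cref{thm:pdifrreduction}, the vertex $r_i$ has in-degree equal to the number of sets containing $u_i$, which is unbounded in general \textsc{Hitting Set}. I will therefore reduce from \textsc{$3$-Hitting Set} restricted so that every set has size at most $3$ \emph{and} every element lies in at most $3$ sets. This restriction is still NP-hard. The cleanest witness is \textsc{$3$-Regular Vertex Cover}, already used in \Cref{thm:deg3reduction}: the universe is $U$, the family is $F$, and every set is an edge of size $2$ while every element has frequency exactly $3$. Should a family with sets of size exactly $3$ be preferred, one can instead start from \textsc{Vertex Cover} on cubic graphs and pass through the standard bounded-occurrence reductions. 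I will, however, commit to the vertex-cover instance.

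\textbf{Construction and correctness.} Given $(H=(U,F),k)$, create a red vertex $x_u$ for each $u\in U$ and a blue vertex $x_e$ for each $e\in F$. For each edge $e=\{u,v\}$, add the arcs $(x_e,x_u)$ and $(x_e,x_v)$, and set $k'=k$. This is exactly the construction of \Cref{thm:deg3reduction} without the vertices $b_e$. The degree bounds are then immediate: each $x_e$ has total degree $2$, and each $x_u$ has in-degree $\deg_H(u)=3$ and out-degree $0$. The graph is also bipartite and acyclic.

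Each $x_u$ has no out-neighbors, so it is never under $p$-illusion. Each $x_e$ has both out-neighbors red, so $b_{x_e}=0<1=\lceil 2p\rceil$, and $x_e$ is under $p$-illusion. After a recoloring, $x_e$ is $p$-illusion-free if and only if at least one of $x_u,x_v$ has been recolored blue.

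By \Cref{obs:bluerecolor} we may assume a solution recolors only red vertices, i.e.\ only vertices $x_u$. Hence a set $S\subseteq U$ is a vertex cover of $H$ exactly when recoloring $\{x_u:u\in S\}$ makes $G$ $p$-illusion-free. Both directions then follow verbatim as in \Cref{thm:deg3reduction}.
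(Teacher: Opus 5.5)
Your proof is correct. It follows the paper's template: delete the padding blue vertices so that, because $\lceil p\,d\rceil=1$ for $1\le d\le 3$ and $p\in(0,\frac13]$, a vertex is under $p$-illusion exactly when its whole out-neighborhood is red, and the instance becomes a pure hitting requirement.

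The difference is the source problem, and it matters. The paper reduces from unrestricted $3$-\textsc{Hitting Set}, where sets have size at most $3$. It builds a red $r_i$ per element and a blue $v_j$ per set, and then asserts that every vertex has degree at most three. But, as you point out, $r_i$ has in-degree equal to the number of sets containing $u_i$, and that source problem does not bound this number. The paper's degree claim therefore holds only under an extra, unstated restriction to element frequency at most $3$; that restriction is still NP-hard, but the paper never imposes it.

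By committing to \textsc{$3$-Regular Vertex Cover}, you supply the frequency bound explicitly. Your construction is that of \Cref{thm:deg3reduction} with the private vertices $b_e$ deleted. Each $x_e$ has total degree $2$, each $x_u$ has in-degree $3$ and out-degree $0$, and the graph is still a bipartite DAG. Deleting $b_e$ is also necessary, not just convenient: with $b_e$ present, $x_e$ has one blue out-neighbor and $\lceil 3p\rceil=1$, so it would not be under $p$-illusion at all.

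As a small bonus, $\lceil 2p\rceil=1$ for every $p\le\frac12$, so your construction gives NP-hardness for all $p\in(0,\frac12]$ on graphs with out-degree at most $2$ and in-degree at most $3$. Your side remark about passing to sets of size exactly $3$ is unnecessary and can be dropped.
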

\begin{proof}
    We show this by providing  a polynomial time reduction from  $3$-\textsc{Hitting Set} to $p$-\textsc{Difr}.   Let $U=\{u_1, u_2, \ldots, u_n\}$ be the universe and $\mathcal{A}=\{A_1, A_2, \ldots, A_m\}$ be the family of sets of sizes at most $3$. We construct an instance $(G,f,k)$ of the $p$-\textsc{Difr}.

    The vertex set of $G$ and the initial colors to them is as follow (see \Cref{fig:deg3reductionfor p zero to one-third}): for every $u_i\in U$, create a red vertex $r_i$, and for every set $A_j\in \mathcal{A}$, create a blue vertex $v_j$.

    The edge set consists of the following directed edges: Create the edge $(v_j,r_i)$ if $u_i\in A_j$. 

    \begin{figure}[t]
  \centering
\begin{tikzpicture}[
    % Styles for the nodes
    variable_node/.style={circle, draw, minimum size=0.3cm, inner sep=0pt, thick},
    rednode/.style={circle, draw=red, fill=red!5, minimum size=0.3cm, thick},
    bluenode/.style={circle, draw=blue, fill=blue!5, minimum size=0.3cm, thick}, 
    font=\small,
    node distance=0.4cm % Set a standard horizontal gap
]

% --- Bottom row (v_1 ... v_m)
\node[bluenode, label={[label distance=-0.2pt]120:{$v_1$}}] (v1) {};
\node[bluenode, right=of v1, label={$v_2$}] (v2) {};
\node[bluenode, right=of v2, label={[label distance=-0.2pt]20:{$v_3$}}] (v3) {};
\node[right=0.1cm of v3] (vdots) {$\dots$};
\node[bluenode, right=0.2cm of vdots, label={$v_m$}] (vm) {};

% --- Top row (u_1 ... u_n)
% Reduced from 1.2cm to 0.6cm to shorten the vertical length
\node[rednode, above=0.6cm of v1, label={$r_1$}] (u1) {}; 
\node[rednode, right=of u1, label={$r_2$}] (u2) {};
\node[rednode, right=of u2, label={$r_3$}] (u3) {};
\node[right=0.1cm of u3] (udots) {$\dots$};
\node[rednode, right=0.2cm of udots, label={$r_n$}] (un) {};

% --- Edges
\draw[-stealth] (v1) -- (u1);
\draw[-stealth] (v1) -- (u2);
\draw[-stealth] (v1) -- (u3);
\draw[-stealth] (v2) -- (u1);
\draw[-stealth] (v2) -- (u3);
\draw[-stealth] (v2) -- (un);
% \draw[-stealth] (v2) -- (p1);
% \draw[-stealth] (v2) -- (p2);

\end{tikzpicture}
\caption{Construction in \Cref{prop:deg3reductionfor p zero to one-third}, with  $A_1=\{u_1,u_2,u_3\}$, $A_2=\{u_1,u_3,u_n\}$}
\label{fig:deg3reductionfor p zero to one-third}
\end{figure}    
This completes the construction of $G$. Note that every vertex has degree at most three and the blue vertices are under illusion with at most three red out-neighbors. Since $p\in (0,\frac{1}{3}]$, so we need one blue out-neighbor for each blue vertex to be $p$-illusion-free. We claim the following
    \textit{$(U, \mathcal{A}, k)$ is a yes-instance for $3$-\textsc{Hitting Set} if and only if $(G, f, k)$ is a yes-instance for $p$-\textsc{Difr}.}

    Suppose \textit{$(U, \mathcal{A}, k)$} is a yes-instance for $3$-\textsc{Hitting Set}. Let $U'\subseteq U$ be a solution with $|U'|\le k$. Consider the set $V'=\{r_i\in R(G):u_i\in U'\}$. Then, $V'$ is a solution of $p$-\textsc{Difr} with size at most $k$, as $U'$ hits $A_j$ for every $j$, so we have at least one out-neighbor of $v_j$ that is recolored.

    Conversely, suppose $(G, f, k)$ is a yes-instance for $p$-\textsc{Difr}. Let $V'\subseteq R(G)$ be a solution of size at most $k$. Let us consider the set $U'=\{u_i\in U:r_i\in V'\}$. Then note that $U'$ hits every set $A_j$, as every $v_j$ is $p$-illusion-free, so $v_j$ has at least one out-neighbor which is recolored to blue and that out-neighbor is in $V'$. This completes the proof.
    \end{proof}

\section{Polynomial time Solvable Topologies}
Although the problem is NP-complete on DAGs, we present a dichotomy by showing that the problem is polynomial time solvable on trees. Moreover, we identify orientations of grid where the problem can be solved in polynomial time even though it is hard on grids.
%In this section, we present polynomial-time algorithms for several graph classes.
We begin by establishing a basic observation for directed cycles, showing that eliminating $p$-illusions in these structures is straightforward. We then extend our analysis to cycles and outward grids before concluding with a specialized algorithm for trees.

\begin{restatable}{observation}{propdrcycle}\label{obs:drcycle}
    The $p$-\textsc{Difr} problem on directed cycles can be solved by recoloring all red vertices.
\end{restatable}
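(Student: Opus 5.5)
In a directed cycle every vertex $v$ has exactly one out-neighbor, its successor $s(v)$. So for any $p\in(0,1)$ we have $\lceil p\cdot|N^+(v)|\rceil=\lceil p\rceil=1$. Hence $v$ is $p$-illusion-free if and only if $s(v)$ is blue, and this condition does not depend on $p$. The plan has two parts: first show that recoloring every red vertex gives a feasible solution, and then show that in the typical situation it is also forced.

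\textbf{Feasibility.} Recolor every red vertex blue. Then every vertex has a blue successor, so $\textit{def}_p(v)=0$ for all $v$ and the cycle is $p$-illusion-free.

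\textbf{Forcedness.} Every vertex $u$ of the cycle is the successor of exactly one vertex, namely its predecessor $s^{-1}(u)$. If $u$ is red, then $s^{-1}(u)$ is under $p$-illusion with $\textit{def}_p(s^{-1}(u))=1$. The only out-neighbor of $s^{-1}(u)$ is $u$ itself. By Observation~\ref{obs:pdef} the only way to eliminate this illusion is therefore to recolor $u$. Every red vertex thus lies in every solution. By Observation~\ref{obs:bluerecolor} we never need to recolor blue vertices, so the set of all red vertices is the unique minimum solution. The algorithm answers yes exactly when $|R(V)|\le k$, which runs in linear time.

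\textbf{Degenerate cases.} For $p=0$ there is never any illusion, so the empty recoloring is optimal; the statement should be read for $p\in(0,1)$, or with this trivial exception noted. For $p=1$ the same argument as above applies verbatim. I do not expect a real obstacle here. The only care needed is to check that $\lceil p\cdot 1\rceil=1$ for every $p>0$, which makes the threshold independent of $p$.
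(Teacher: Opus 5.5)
Your proof is correct and follows essentially the argument the paper intends. The paper states this observation without a separate proof; its implicit justification is the out-degree-one reasoning of Reduction Rule~\ref{rr:outdegone}, which in a directed cycle applies to the predecessor of every red vertex, so every red vertex is forced and recoloring all of them suffices (your check that $\lceil p\cdot 1\rceil=1$ for every $p>0$ is the only computation needed).
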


Before we explain the algorithm for cycles, we state a simple reduction rule that will be used in multiple algorithms.

  \begin{rr}\label{rr:outdegone}\label{rr:grid}
        If a vertex in the graph $G$ has only one out-neighbor $v$ that is red, then recolor $v$.
    \end{rr}

    \Cref{rr:outdegone} is \textit{safe}, as the vertex has only one red out-neighbor, so we need to recolor it to remove the $p$-illusion of the vertex.

We extend the above proposition to consider a directed graph whose underlying undirected version is a cycle, and then consider \textit{outward grids}, a directed grid such that the out-neighbors of a vertex $v$ are the vertices to the right of $v$ and below $v$. Notably, in both graph classes, every vertex $v$ satisfies $|N^+(v)|\leq 2$. This inherent sparsity simplifies the elimination of illusion and $p$-illusion, leading to the following observation.

\begin{restatable}{observation}{obspgreaterhalf}\label{obs:pgreaterhalf}
    Suppose $G=(V, E)$ is a directed graph with $|N^+(v)|\leq 2$ for every $v\in V$. Then,  
\begin{enumerate}[label=(\roman*)]
    \item For $p>\frac{1}{2}$, the problem becomes trivial, as resolving all illusions merely requires recoloring every red vertex except for the red vertices that have no in-neighbors.
    
    \item For $p\leq \frac{1}{2}$, the notions of illusion and $p$-illusion of a vertex $v$ are equivalent, i.e., a solution for \textsc{Difr} yields a solution for $p$-\textsc{Difr} and vice-versa.
\end{enumerate}
\end{restatable}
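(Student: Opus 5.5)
The plan is a case analysis on the out-degree $|N^+(v)|\in\{0,1,2\}$ of each vertex, computing the threshold $\lceil p\cdot|N^+(v)|\rceil$ directly from the definition of $p$-deficiency in the two ranges of $p$.

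\emph{Part (i).} Take $p>\frac{1}{2}$ (and $p<1$). For $|N^+(v)|=1$ the threshold is $\lceil p\rceil=1$, and for $|N^+(v)|=2$ it is $\lceil 2p\rceil=2$ because $2p\in(1,2)$. So a vertex with nonempty out-neighborhood is $p$-illusion-free exactly when all of its out-neighbors are blue. A vertex with out-degree zero has threshold $0$ and is never under $p$-illusion.

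Hence a recoloring is feasible if and only if every red vertex that is an out-neighbor of some vertex becomes blue, that is, every red vertex with at least one in-neighbor. By \Cref{obs:bluerecolor} we recolor only red vertices. The set of red vertices with in-neighbors is therefore forced, and recoloring exactly this set is both necessary and sufficient, which gives the unique minimum solution. Red vertices with no in-neighbors are seen by nobody and can be left untouched.

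\emph{Part (ii).} Take $p\le\frac{1}{2}$, so $p\in(0,\frac12]$. Out-degree $0$ gives illusion under neither notion. For out-degree $1$ we have $\lceil p\rceil=1$, so $v$ is under $p$-illusion iff its single out-neighbor is red, i.e.\ $r_v=1>0=b_v$, which is exactly majority illusion. For out-degree $2$ we have $2p\in(0,1]$, so $\lceil 2p\rceil=1$, and $v$ is under $p$-illusion iff $b_v=0$, i.e.\ $r_v=2>0=b_v$. Conversely, $r_v>b_v$ with $r_v+b_v=2$ forces $b_v=0$.

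So under every coloring the set of illusion vertices equals the set of $p$-illusion vertices. A recoloring $f'$ therefore eliminates majority illusion iff it eliminates $p$-illusion, which gives the claimed equivalence of solutions for \textsc{Difr} and $p$-\textsc{Difr}.

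The one point needing care is that the equivalence in (ii) must hold for every coloring, not only for the initial one. This is immediate here because the argument above is pointwise in $v$ and in the coloring. The only computation to check is $\lceil 2p\rceil$ at the boundary $p=\frac12$, where $\lceil 1\rceil=1$, so the boundary case behaves like the rest of the range.
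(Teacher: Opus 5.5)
Your proof is correct and matches the paper's argument: both do a case split on out-degree $1$ versus $2$ and compute $\lceil p\rceil$ and $\lceil 2p\rceil$. For $p>\frac12$ this forces every out-neighbor to be blue, and for $p\le\frac12$ both notions reduce to ``all out-neighbors are red.'' Your explicit restriction to $p\in(0,\frac12]$ in part (ii) is a precision the paper leaves implicit, since at $p=0$ no vertex is ever under $0$-illusion and the equivalence would fail.
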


\begin{proof}
    \begin{enumerate}[label=(\roman*)]
    \item We have $p>\frac{1}{2}$. note that $|N^+(v)|=1$ implies that $b_v\geq\left\lceil p\cdot |N^+(v)|\right \rceil=1$ and $|N^+(v)|=2$ implies that $b_v\geq \left\lceil p\cdot |N^+(v)|\right \rceil=2$. So in both cases, we have $b_v=|N^+(v)|$. Thus, to eliminate $p$-illusion, every red vertex with an in-neighbor must be recolored.

    \item Here $p\leq \frac{1}{2}$. We will show that a vertex $v$ is under illusion if and only if it is under $p$-illusion.

    A vertex $v$ that is under illusion implies that either of the following is true. Either the vertex has exactly one out-neighbor and that is red, or has two red out-neighbors. Furthermore, as $p\leq \frac{1}{2}$, these two conditions are both necessary and sufficient for the vertex $v$ to be under $p$-illusion. In both the cases, to eliminate illusion and $p$-illusion, we need to recolor exactly one red out-neighbor, and the same recolored out-neighbor will work for illusion and $p$-illusion. 
\end{enumerate}
\end{proof}

Our next two results are for eliminating illusion in cycles and outward grids. We will see that in both of these structures the number of out-neighbors are bounded by two.
Thus, due to \Cref{obs:pgreaterhalf}, in \Cref{prop:cycle,prop:outgrid} for cycles and outward grids, respectively, we assume $p \leq 1/2$ and solve for the \textsc{Difr} problem as \textsc{Difr} and $p$-\textsc{Difr} are same in this set up. % We design an algorithm that eliminates illusion by recoloring the minimum number of vertices. 

\begin{restatable}{proposition}{propcycle}\label{prop:cycle}
    The \textsc{Difr} problem on an $n$-vertex graph whose underlying undirected graph is a cycle can be solved in time $\mathcal{O}(n)$.
\end{restatable}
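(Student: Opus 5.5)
Every vertex of the underlying cycle has out-degree at most two, and a vertex under illusion (majority) is either a vertex with out-degree one whose out-neighbor is red, or a vertex with out-degree two whose two out-neighbors are both red. In either case exactly one recoloring among its red out-neighbors suffices, by \Cref{obs:recolordef}, since the deficiency is at most two. The plan is to reduce the problem to a minimum vertex cover problem on a graph of maximum degree two and solve that greedily in linear time.

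First, apply \Cref{rr:outdegone} exhaustively. Every illusion vertex of out-degree one forces its unique red out-neighbor to be recolored; these recolorings are mandatory in any solution. After this step, mark every vertex whose illusion is now eliminated. The remaining illusion vertices $x$ have $N^+(x)=\{a,b\}$ with both $a,b$ still red, and $x$ is satisfied exactly when at least one of $a,b$ is recolored. By \Cref{obs:bluerecolor}, only red vertices are recolored. Recoloring is monotone, since it only helps in-neighbors, so no new illusion arises. Hence the residual problem is: find a minimum set of red vertices hitting every pair $\{a,b\}$ coming from a remaining illusion vertex.

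Next, build the auxiliary undirected graph $H$ on the red vertices, with an edge $\{a,b\}$ for each remaining illusion vertex $x$ with $N^+(x)=\{a,b\}$. The residual problem is exactly minimum vertex cover on $H$. The key structural point is that in a cycle a vertex $x$ with two out-neighbors is a ``source'' whose out-neighbors are its two cycle-neighbors $x^-$ and $x^+$. Each red vertex $a$ has at most two cycle-neighbors, so it lies in at most two edges of $H$, and $\Delta(H)\le 2$. Thus $H$ is a disjoint union of paths and possibly one cycle. The cycle case arises only when all of $G$ alternates between sources and red sinks, giving an even cycle. Minimum vertex cover of a path with $t$ edges has size $\lceil t/2\rceil$, and of a cycle with $t$ edges it is $\lceil t/2\rceil$ as well. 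An optimal cover is obtained by taking every second vertex along each component, so it is computable in $\mathcal{O}(n)$.

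The main obstacle is correctness of combining the forced recolorings with the vertex cover. Forced vertices from \Cref{rr:outdegone} may already cover some edges of $H$. To handle this, I would delete from $H$ every vertex already recolored, together with its incident edges (those illusion vertices are now satisfied), before computing the cover. Then the optimal solution is the forced set plus a minimum cover of the remaining graph. I would argue optimality by exchange: any solution contains the forced set, and its remaining part must cover the residual $H$. Since $H$ is built and traversed once around the cycle, the total running time is $\mathcal{O}(n)$. The same algorithm solves $p$-\textsc{Difr} for $p\le 1/2$ by \Cref{obs:pgreaterhalf}.
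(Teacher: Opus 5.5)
Your argument is correct, but it follows a different route from the paper's proof of this proposition.

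\textbf{The paper's route.} After exhaustively applying Reduction Rule~\ref{rr:outdegone}, the paper runs a local greedy loop over illusion vertices $v$ with two red out-neighbors $r_1,r_2$. It recolors either one if neither or both of them have another illusion in-neighbor (cases (a) and (b)), and otherwise the one that does (case (c)). Optimality is justified by the local fact that one recoloring repairs at most two illusion vertices.

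\textbf{Your route.} You build the auxiliary graph $H$ explicitly, the same device the paper uses for outward grids in Proposition~\ref{prop:outgrid}. You note that a red vertex has at most two in-neighbors on the cycle, so $\Delta(H)\le 2$. Optimality then comes from a global argument: the forced set lies in every solution, the rest of any solution must cover the residual $H$, and each path or cycle component with $t$ edges has cover number exactly $\lceil t/2\rceil$.

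\textbf{What your route buys.} The global view makes the optimality proof airtight, whereas the paper's local counting for case (b) does not suffice when the loop order is arbitrary. For instance, suppose $H$ contains the path $a_0a_1a_2a_3a_4$ and the loop first handles the illusion vertex with out-neighbors $\{a_1,a_2\}$. Case (b) permits recoloring $a_2$. The edges $\{a_0,a_1\}$ and $\{a_3,a_4\}$ then still need two more recolorings, three in total, whereas $\{a_1,a_3\}$ suffices. The greedy becomes sound if case (b) is used only once every remaining component of $H$ is a cycle. That is exactly the leaf rule for vertex cover that your formulation makes explicit.

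One detail to state precisely in your write-up: on a path component $a_0a_1\cdots a_t$ you must take $a_1,a_3,\dots$ rather than $a_0,a_2,\dots$. The latter costs one extra recoloring when $t$ is even.
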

\begin{proof}
     Let $G$ be an $n$-vertex directed graph such that the underlying undirected graph is a cycle. 
    First note that a vertex can have either at most two out-neighbors, or one in-neighbor and one out-neighbor, or at most two in-neighbors.
     
    Illusion occurs in the first two cases as  a vertex with two in-neighbors and no out-neighbors is never under illusion.
    We divide the set of vertices under illusion in two categories.
    Let $X_1$ denote the set of vertices under illusion that have exactly one out-neighbor that is red.
    Let $X_2$ denote the set of vertices under illusion that have two red out-neighbors. The algorithm runs the following steps:

    \paragraph{Algorithm.}
     
 \begin{enumerate}[leftmargin= 12mm, label=Step \arabic*]
        %\item\label{it:dircyclstep1} Apply \Cref{rr:outdegone} exhaustively.
        \item\label{it:dircyclstep1} Apply \Cref{rr:outdegone} exhaustively.
        \item\label{it:dircyclstep2} While there is a vertex $v$ that has two red out-neighbors $r_1$ and $r_2$.
        \begin{enumerate}[leftmargin= 0mm,label=(\alph*)]
            \item If $r_1$ and $r_2$ have no other in-neighbor under illusion other than $v$, then recolor any one of $r_1$ and $r_2$.
            \item If $r_1$ and $r_2$ both have illusion in-neighbors other than $v$, then recolor exactly one of $r_1$ and $r_2$.
            \item  Finally, let $r \in \{r_1,r_2\}$ such that $r$ has an in-neighbor under illusion other than $v$, then recolor $r$.
            \item If applicable, then execute \Cref{rr:outdegone} exhaustively.
        \end{enumerate}
    \end{enumerate}

    Note that in both the cases \ref{it:dircyclstep2}(a) and \ref{it:dircyclstep2}(b), it does not matter which red vertex ($r_1$ or $r_2$) is recolored. We write the two cases separately for the ease of showing the correctness.
   
    \textit{Correctness.} In \ref{it:dircyclstep1}, applying \Cref{rr:outdegone} exhaustively, recolors $|X_1|$ red vertices. Its correctness follows from the safeness of \Cref{rr:outdegone}. Consequently, any minimum recoloring need to recolor $|X_1|$ red vertices. Next we show that \ref{it:dircyclstep2} eliminates illusion by recoloring minimum number of vertices.

In \ref{it:dircyclstep2}, each vertex under illusion is from the set $X_2$, i.e., has two red out-neighbors.
First observe that \ref{it:dircyclstep2} considers all three possible neighborhoods of a vertex in $X_2$ and eliminates its illusion. Thus, after \ref{it:dircyclstep2}, no vertex of $X_2$ is under illusion. We show that \ref{it:dircyclstep2} recolors minimum number of vertices. Each vertex $v$ in $X_2$ must have at least one red out-neighbor recolored. We use this property to show that in the cases (a) - (c), we always recolor optimally.

If the algorithm recolors in \ref{it:dircyclstep2}(a), then since the out-neighbors of vertex $v$ are not out-neighbors of any other illusion vertex and we need to color at least one of them to eliminate $v$'s illusion, thus, it recolors optimally. If we recolor using \ref{it:dircyclstep2}(b), then since both out-neighbors of $v$ are also out-neighbors of another illusion vertex, recoloring a vertex $r\in \{r_1,r_2\}$, eliminates illusion of two vertices, namely $v$  and the other in-neighbor of $r$. Since recoloring a vertex can eliminate the illusion of at most two vertices, we recolor optimally in case (b). The argument to show  optimality of case (c) is the same as case (b). Thus, the algorithm always optimally recolors to eliminate illusion of vertices in $X_2$.
     
    The steps run in polynomial-time. We need $\mathcal{O}(|X_1+X_2|)=\mathcal{O}(n)$ time to eliminate the illusion of the graph.
\end{proof}

Although, the problem is NP complete on grids (\Cref{thm:gridreduction}), we show that not all orientations of grid poses the hardness. We identify a class of oriented grids, called the  \emph{outward grids} where the problem is solvable in polynomial time.
% , We next design a  polynomial-time algorithm for
Recall that an \emph{outward grid} is a directed grid such that it has a planar embedding where the out-neighbors of a vertex $v$ are the vertices to the right of $v$ and below  $v$. 

\begin{restatable}{proposition}{propoutgrid}\label{prop:outgrid}
    Let $G=(V,E)$ be an $m\times n$ outward grid. Then, \textsc{Difr} on $G$  can be solved in time $\mathcal{O}(mn\sqrt{mn})$.
\end{restatable}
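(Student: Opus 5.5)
Every vertex of an outward grid has out-degree at most two (its right and lower neighbors). By \Cref{obs:bluerecolor} and the case analysis in \Cref{obs:pgreaterhalf}, a vertex $v$ is under illusion exactly when either (i) $|N^+(v)|=1$ and its unique out-neighbor is red, or (ii) $|N^+(v)|=2$ and both out-neighbors are red. In either case, recoloring any single red out-neighbor eliminates the illusion of $v$. The plan is to reduce the problem to a minimum vertex cover in a bipartite auxiliary graph and solve that by matching.

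\textbf{Step 1.} Apply \Cref{rr:outdegone} exhaustively. It is safe, so every recolored vertex belongs to every optimal solution, and afterwards no vertex of type (i) remains. Then discard every vertex of type (ii) that already has a recolored out-neighbor, since it is no longer under illusion.

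\textbf{Step 2.} For the remaining illusion vertices, define the undirected graph $H$ on the red vertices. For each remaining illusion vertex $x$ at position $(i,j)$, with both out-neighbors $(i,j+1)$ and $(i+1,j)$ red, add the edge $\{(i,j+1),(i+1,j)\}$ (here $i$ indexes rows and $j$ columns). The residual problem asks for a minimum set of red vertices meeting every such pair, which is exactly minimum vertex cover on $H$. The optimum of the whole instance is the number of recolorings forced in Step 1 plus $\tau(H)$, by safeness of the rule.

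\textbf{Step 3 (the key point).} Show that $H$ is bipartite. Every edge joins $(i,j+1)$ and $(i+1,j)$, and both endpoints have coordinate sum $i+j+1$. Hence all edges of $H$ lie within a single anti-diagonal $\{(a,b): a+b=s\}$. Within one anti-diagonal, order the vertices by row index $a$; each edge joins two vertices with consecutive row indices. So the components of $H$ are subpaths of anti-diagonals, and $H$ is a disjoint union of paths, in particular acyclic and bipartite. The bipartition can be taken by parity of the row index. This acyclicity claim is the step I expect to be the main obstacle to state cleanly, and the anti-diagonal argument above is what I would use.

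\textbf{Step 4.} By K\H{o}nig's theorem, a minimum vertex cover of the bipartite graph $H$ is obtained from a maximum matching, computed with Hopcroft--Karp in $\mathcal{O}(|E(H)|\sqrt{|V(H)|})=\mathcal{O}(mn\sqrt{mn})$. This runs within the claimed time even without exploiting the path structure; on paths one could instead take every second vertex, giving linear time. Recolor Step 1's forced set together with the cover. Correctness follows because any solution must hit every edge of $H$, while the chosen set eliminates every illusion. The Step 1 preprocessing takes $\mathcal{O}(mn)$, giving the stated bound.
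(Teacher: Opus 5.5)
Your proposal is correct and follows essentially the same approach as the paper: apply Reduction Rule~\ref{rr:outdegone}, build the auxiliary graph $H$ whose edges are the pairs of red out-neighbors of the remaining illusion vertices, and recolor a minimum vertex cover obtained via K\"{o}nig's theorem and Hopcroft--Karp. Where you differ is in proving that $H$ is acyclic. The paper rules out $3$-cycles by a small case analysis and then asserts ``similarly, no cycle''; your anti-diagonal argument (every edge joins two vertices on the same anti-diagonal with consecutive row indices, so $H$ is a disjoint union of paths) is cleaner and fully rigorous, and you also make explicit the recomputation of the illusion set after Step~1, which the paper leaves implicit.
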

\begin{proof}
    Given $G$ is an outward grid. Recall that an outward gird is defined as a directed grid such that it has an embedding where the out-neighbors of a vertex $v$ are the vertices to the right of $v$ and below  $v$. The deficiency of a vertex under illusion is either $1$ or $2$ since each vertex has at most two in-neighbors and at most two out-neighbors. Observe that a vertex with deficiency $1$ has exactly one out-neighbor. 
    
\paragraph{Algorithm.}
\begin{enumerate}[leftmargin= 12mm, label=Step \arabic*]

\item Eliminate illusion of the vertices with deficiency $1$ using \Cref{rr:outdegone}.
\item Construct an auxiliary undirected graph  $H=(R(V),E')$, where $R(V)$ is the set of red vertices, and
    $E'=\{ \{a,b\} :N^+(x)=\{a,b\} \text{ for some } x \in X\}$.
\item Find a minimum vertex cover of $H$.
\item Recolor the minimum vertex cover.
\end{enumerate}
    \textit{Correctness.} We first eliminate illusion of the vertices with deficiency $1$ using \Cref{rr:outdegone}, which states that if a vertex in the graph $G$ has only one out-neighbor $v$ that is red, then recolor $v$. 
    
    After exhaustively apply \Cref{rr:grid},
    % \SR{use  ref{}. No hard coded reference}
     each vertex of $G$ that is under illusion has deficiency $2$. We remove their illusion in the following way. First, we construct an auxiliary graph $H$. For every $x\in X$ (set of illusion vertices), we have that $N^+(x)=\{a,b\}$, for some $a,b \in R(V)$, that is, the two out-neighbors of $x$ are red. We construct an undirected graph $H=(R(V),E')$,
    where $R(V)$ is the set of red vertices, and
    $E'=\{ \{a,b\} :N^+(x)=\{a,b\} \text{ for some } x \in X\}$. That is, we get $E'$ by connecting the two red out-neighbors for each illusion vertex.
    In Step 4 of the algorithm we recolor a subset of red vertices that forms a vertex cover of $H$. We show that it removes illusion for each vertex of $X$, and does so with the minimum number of recolorings. We need to recolor exactly one red out-neighbor for every vertex under illusion since to remove illusion we must recolor at least $\lceil \frac{def(x)}{2}\rceil$ red out-neighbors for every illusion vertex $x$ (\Cref{obs:recolordef}). Therefore, by choosing a minimum vertex cover we are guaranteed that we recolor minimum number of vertices.
    % \SR{Need to argue why this is minimum.}

Finally, we show that the vertex cover of $H$ can be found in polynomial time by finding a maximum matching in $H$ as $H$ is a bipartite graph.
\begin{claim}\label{clm:Gbipartite}
    $H$ \textit{is acyclic}.
\end{claim}

 \begin{proof}[Proof of \Cref{clm:Gbipartite}] We show that $H$ has no  cycle. Observe that no cycle of length two is possible. First we show that $H$ has no $3$-cycle $(a_1,b_1,a_2)$. Thus, we show if $N^+(x_1)=\{a_1,b_1\}$, $N^+(x_2)=\{a_2,b_1\}$ for some $x_1,x_2\in X$ and $a_1,a_2,b_1\in R(V)$, then there exist no $x_3\in X$ such that $N^+(x_3)=\{a_1,a_2\}$. We show it by contradiction.

    Suppose that a vertex $x_3$ with the above property exists in $X$. Then we have the two scenarios depicted in \Cref{fig:outgrid}.

    For existence of the vertex $x_3\in X$ with neighborhood $N^+(x_3)=\{a_1,a_2\}$, we need $a_1$ and $a_2$ to be in two consecutive level, which is not possible. This proves that there is no $3$-cycles in $H$. Similarly, we can show $H$ has no cycle. Thus, we prove the claim and complete the proof of the proposition.
    \end{proof}

       \begin{figure}[b]
    \centering
    % First diagram
    \begin{minipage}[b]{0.45\linewidth}
        \centering
        \begin{tikzpicture}[
    % Styles for the nodes
    variable_node/.style={circle, draw, minimum size=0.3cm, inner sep=0pt, thick},
    rednode/.style={circle, draw=red, fill=red!5, minimum size=0.3cm, thick},
    graynode/.style={circle, draw=blue, fill=blue!5, minimum size=0.3cm, thick}, 
    font=\small,
    node distance=0.4cm % Set a standard horizontal gap
]
            % Nodes
            \node[graynode, label=left:{$x_1$}] (x1) at (0.8,0.8) {};
            \node[rednode, label=right:{$a_1$}] (a1) at (0.8,0) {};
            \node[graynode, label=left:{$x_2$}] (x2) at (1.6,1.6) {};
            \node[rednode, label=right:{$b_1$}] (b1) at (1.6,0.8) {};
            \node[graynode, label=left:{$x_3$}] (x3) at (0,0) {};
            \node[rednode, label=right:{$a_2$}] (a2) at (2.4,1.6) {};

            % Solid arrows
            \draw[->, thick] (x1) -- (a1);
            \draw[->, thick] (x1) -- (b1);
            \draw[->, thick] (x2) -- (b1);
            \draw[->, thick] (x2) -- (a2);
            \draw[->, thick] (x3) -- (a1);

            % Dashed arrows
            % \draw[dashed, thick] (x1) -- (b1);
            \draw[->, dashed, thick] (x3) -- (a2);
        \end{tikzpicture}
        \caption*{First scenario}
    \end{minipage}
    \hfill
    % Second diagram
    \begin{minipage}[b]{0.45\linewidth}
        \centering
       \begin{tikzpicture}[
    % Styles for the nodes
    variable_node/.style={circle, draw, minimum size=0.3cm, inner sep=0pt, thick},
    rednode/.style={circle, draw=red, fill=red!5, minimum size=0.3cm, thick},
    graynode/.style={circle, draw=blue, fill=blue!5, minimum size=0.3cm, thick}, 
    font=\small,
    node distance=0.4cm % Set a standard horizontal gap
]
            % Nodes
            \node[graynode, label=left:{$x_1$}] (x1) at (5,1.6) {};
            \node[rednode, label=right:{$a_1$}] (a1) at (5.8,1.6) {};
            \node[graynode, label=left:{$x_2$}] (x2) at (4.2,0.8) {};
            \node[rednode, label=right:{$b_1$}] (b1) at (5,0.8) {};
            \node[graynode, label=left:{$x_3$}] (x3) at (3.4,0) {};
            \node[rednode, label=right:{$a_2$}] (a2) at (4.2,0) {};

            % Solid arrows
            \draw[->, thick] (x1) -- (b1);
            \draw[->, thick] (x2) -- (b1);
            \draw[->, thick] (x3) -- (a2);
            \draw[->, thick] (x2) -- (a2);
            \draw[->, thick] (x1) -- (a1);

            % Dashed arrows
            \draw[->, dashed, thick] (x3) -- (a1);
            % \draw[dashed, thick] (x2) -- (a1);
        \end{tikzpicture}
        \caption*{Second scenario}
    \end{minipage}
    \caption{The two cases in the proof of \Cref{prop:outgrid}}
    \label{fig:outgrid}
\end{figure}

    \noindent
    \textit{Running time.} We Construct $H$ in polynomial time. Then, we find a maximum matching in $H$ using Hopcroft-Karp \cite{hopcroft1973n}. Finally, find a minimum vertex cover from the maximum matching (using K\"{o}nig's theorem) and recolor the red vertices in the cover. Thus, the algorithm runs in polynomial time. 
    For illusion vertices with deficiency one, we need $\mathcal{O}(mn)$ time.
    For illusion vertices with deficiency two, we need $\mathcal{O}(|E'|\sqrt{|R|})=\mathcal{O}(mn\sqrt{mn})$ time. 
    Hence, the total run time is 
    $\mathcal{O}(mn+mn\sqrt{mn}) =\mathcal{O}(mn\sqrt{mn})$.
    %, which is polynomial in input size.
    \end{proof}

 We focus on directed trees. For out-trees (all edges going away from root and from parent to children), we simply need to recolor minimum number of red children for each vertex that is under $p$-illusion. The problem is more involved when we consider arbitrary directed trees. We devise a dynamic programming algorithm. We consider an arbitrary fixed ordering of the vertices in $T$, in which in-children of a vertex precedes its out-children.
 
\begin{restatable}{theorem}{thmdt}\label{thm:dt}
       The $p$-\textsc{Difr} problem on an n-vertex directed tree can be solved in time $\mathcal{O}(n^4)$.
   \end{restatable}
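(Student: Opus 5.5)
We will root the underlying undirected tree $T$ at an arbitrary vertex $\rho$ and run a bottom-up dynamic program. By \Cref{obs:bluerecolor}, only red vertices are ever recolored. For a vertex $v$, call a child $c$ an \emph{out-child} if $(v,c)\in E$ and an \emph{in-child} if $(c,v)\in E$. The parent edge is oriented one way or the other.

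The key point is locality. The constraint of a vertex $v$ concerns only $N^+(v)$, which consists of its out-children plus possibly its parent. So the only interaction between the subtree $T_v$ and the rest of the tree passes through the edge to the parent, and it carries just two bits:
\begin{itemize}
\item the final color of $v$, which matters to the parent if the parent has $v$ as an out-neighbor;
\item whether the parent is blue in the final coloring, which matters to $v$ if $(v,\mathrm{parent})\in E$.
\end{itemize}

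We therefore define
\[
\mathrm{DP}[v][c_v][c_\pi] \qquad (c_v,c_\pi\in\{B,R\})
\]
as the minimum number of recolorings inside $T_v$ such that $v$ receives final color $c_v$ and every vertex of $T_v$ is $p$-illusion free. Here we assume the parent's final color is $c_\pi$, and this assumption is used only when $(v,\mathrm{parent})\in E$. Entries are $\infty$ if $c_v=R$ while $f(v)=B$. The cost includes $1$ if $c_v\neq f(v)$. At the root there is no parent, so the $c_\pi$ coordinate is ignored.

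To compute $\mathrm{DP}[v][\cdot][\cdot]$, we first fix $c_v$. For each in-child $u$, the constraint of $u$ depends on $c_v$ but $v$ does not care about $u$'s color. So in-children contribute $\sum_u \min_{c_u}\mathrm{DP}[u][c_u][c_v]$ independently.

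For the out-children $w_1,\dots,w_t$, we need at least
\[
\lceil p\cdot|N^+(v)|\rceil-[\text{parent is an out-neighbor and } c_\pi=B]
\]
of them to end up blue. This is where the ordering from the paper (in-children first, then out-children) becomes useful. We process the out-children sequentially with an auxiliary table $g_j[s]$, the minimum cost over the first $j$ out-children when exactly $s$ of them are blue:
\[
g_j[s]=\min\Big(g_{j-1}[s-1]+\mathrm{DP}[w_j][B][c_v],\; g_{j-1}[s]+\mathrm{DP}[w_j][R][c_v]\Big).
\]
Then
\[
\mathrm{DP}[v][c_v][c_\pi]=\text{(in-children part)}+[c_v\neq f(v)]+\min_{s\ge \text{threshold}} g_t[s].
\]
This knapsack-style step is the only non-trivial combination. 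It costs $O(t^2)$ per vertex and choice of $(c_v,c_\pi)$, giving $O(n^2)$ overall, which is well within the claimed $\mathcal{O}(n^4)$.

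Correctness follows by induction on subtree height. The main thing to argue carefully is that these two bits really decouple the subproblems:
\begin{itemize}
\item For $v$, whether $v$ is $p$-illusion free is determined by the final colors of its out-children and, if the parent is an out-neighbor, of the parent. All of these are fixed either in the table indices or in the combination step.
\item For every vertex strictly inside a child subtree, $N^+$ lies within that subtree plus its parent $v$, whose color is passed down as $c_v$.
\end{itemize}
So an optimal global recoloring restricted to $T_v$ is feasible for the corresponding entry, and conversely optimal partial solutions glue together. The main obstacle is getting the threshold right when the parent is an out-neighbor, since $|N^+(v)|$ includes the parent. This is handled by the indicator term above. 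The answer is $\min_{c}\mathrm{DP}[\rho][c][\cdot]\le k$.
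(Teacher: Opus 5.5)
Your proof is correct and is essentially the paper's argument: a bottom-up dynamic program on the rooted underlying tree that treats in-children (whose constraint depends on the current vertex's color) separately from out-children (processed one by one while counting how many end up blue), much as in the paper's table $C[u,z_u,flag_u,w_q]$. The differences are minor. The paper checks an in-child's constraint at the parent via $z+flag_u\geq \text{\textit{def}}_p(v_i)$ instead of indexing the child's table by the parent's color, while your recurrences sum over all children (the paper's are written only over children in $X$) and your accounting gives the sharper $O(n^2)$ bound, which implies the stated $\mathcal{O}(n^4)$.
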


   \begin{proof}
    Let $T=(V,E)$ be a directed rooted tree and $r$ be the root of the tree $T$. We begin with a simple rule similar to \Cref{rr:outdegone}.
\begin{restatable}{rr}{rrleafreduc}\label{rr:leafillusion}
 If a leaf node is under $p$-illusion, recolor its parent. 
 \end{restatable}
   
The safeness of \Cref{rr:leafillusion} follows from the fact that a leaf node under $p$-illusion implies that its parent is red, so we have no choice but to recolor its parent to eliminate its $p$-illusion. 
 
  After applying \Cref{rr:leafillusion} exhaustively, we now have $p$-illusion-free leaf nodes.

 We will devise a dynamic programming algorithm for trees. 

 \paragraph{Algorithm.}
  Let $\mathcal{\sigma}$ be an arbitrary fixed ordering of the vertices of $T$, in which in-children of a vertex precedes its out-children. Let $u \in V(T)$ have $\ell$ in-children, $v_1, v_2, \dots,  v_\ell$ and $s$ out-children, $w_1, w_2, \dots, w_s$. Let $T_u$ denote the set of vertices in the subtree rooted at $u$ including $u$. We abuse the notation and write $T_u \setminus \{u\}$ to denote the forest we get from $T_u$ after deleting the vertex $u$.

 We define 
     $C[u,z_u,flag_u,w_q]$ to be the minimum number of vertices recolored in $T_u$ such that no vertex in $T_u\setminus \{u,w_{q+1},... ,w_{s}\}$ has $p$-illusion where $\{w_{q+1},... ,w_{s}\}$ are the out-children of $u$ that succeed $w_q$ in the ordering $\mathcal{\sigma}$,
    and 
     \begin{align*}
         u &= \text{current root},\\
         z_u &= \text{number of out-children of $u$ that are recolored in $T_u$},\\
         flag_u &= \begin{cases}
             0 & \text{if $u$ is not recolored},\\
                1 & \text{if $u$ is recolored}
         \end{cases}\\
        w_q &= \text{$q^{th}$ out-child of $u$ in the ordering $\sigma$}.
     \end{align*}
    Then, we define the following recursive formulas:

    \begin{align}\label{eq:inchi}
     C[u,z_u,flag_u,\mathcal{\phi}]&  = \sum_{\substack{i:\\v_i\in N\!^{-}\!(u)\cap X}} \min_{z:z+flag_u \geq \text{\textit{def}}_p(v_i) }C[v_i,z,flag_{v_i},last(v_i)]
    \end{align}
    and   
    
\begin{align}\label{eq:outchi}
      C[u,z_u,flag_u,w_q]&  =\min \begin{cases} 
        \displaystyle\min_{z: w_q\in X\ and\ z\geq\text{\textit{def}}_p(w_q) } C[w_q,z,0,last(w_q)] +~C[u,z_u,flag_u,w_{q-1}] , \\
         \displaystyle\min_{z: w_q\in X\ and\ z\geq\text{\textit{def}}_p(w_q)} C[w_q,z,1,last(w_q)] +~C[u,z_u-1,flag_u,w_{q-1}]
        \end{cases} 
    \end{align}

    In \Cref{eq:inchi} and \Cref{eq:outchi}, $last(x)$ indicates the last out-child of the vertex $x$.

    Note that \Cref{eq:inchi} processes all the in-children of the vertex $u$ and \Cref{eq:outchi} processes out-children of the vertex one by one. Note that when computing $C[u,z_u,flag_u,w_1]$, we use \Cref{eq:inchi} in \Cref{eq:outchi} to evaluate the value of $C[u,\allowbreak z_u,\allowbreak flag_u,\allowbreak w_{q-1}]$ and $C[u,\allowbreak z_u-1,\allowbreak flag_u,\allowbreak w_{q-1}]$ using the convention that $w_{1-1}=w_0= \mathcal{\phi}$. We compute the entries in a bottom-up approach, and  the final goal is to compute $C[r,z_r,flag_r,last(r)]$ with $z_r\geq \text{\textit{def}}_p(r)$. We show the correctness of the algorithm in the next lemma.

   \begin{lemma}
        Our algorithm for directed trees gives a minimum recoloring at each step.    
   \end{lemma}
   
 \begin{proof} 
       We will proceed by induction on the number of vertices of the tree that has been processed. At any vertex $u$, we show that $C[u,z_u,flag_u,w_q]$ is the minimum number of vertices recolored in $T_u$ such that no vertex in $T_u\setminus \{u,w_{q+1},\dots ,w_{s}\}$ has $p$-illusion where $\{w_{q+1}, \dots ,w_{s}\}$ are the out-children of $u$ that succeed $w_q$ in the ordering $\mathcal{\sigma}$.
       % We show that no vertex in $T_u\setminus \{u,w_{q+1},\dots ,w_{s}\}$ has illusion after recoloring  $C[u,z_u,flag_u,w_q]$ red vertices  where $\{w_{q+1},... ,w_{s}\}$ are the out-child of $u$ that succeed $w_q$ in the ordering $\mathcal{\sigma}$ and this is a minimum recoloring.
       
    \textit{Base Case.} Suppose that the leaf vertices has been processed. For the leaf vertices that are in-children of a vertex, by \Cref{rr:leafillusion}, we already have a minimum recoloring to remove its $p$-illusion. Note that no leaf vertex that is an out-child of a vertex can be under $p$-illusion, as it has no out-neighbor. Thus, we have a minimum recoloring eliminating the $p$-illusion of the leaf vertices.

\begin{sloppypar}

\textit{Inductive Step.} We show the correctness of the recursive step using the two cases for \Cref{eq:inchi,eq:outchi}.
\begin{itemize}[label=, leftmargin=0pt] 
    \item \textit{Case 1:} Computing the entry $C[u,z_u,flag_u,\emptyset]$, i.e., we are processing in-children of $u$.

    \begin{enumerate}[label=Case 1.\arabic*:,align=left]
     \item  $flag_u=1$.
     
 That is, $u$ is recolored.
     Note that on the right-hand side of \Cref{eq:inchi} $z$ denotes the number of out-children of $v_i$ (in-child of $u$) that are recolored. Thus, total $z+1$ out-neighbors of $v_i$ is recolored. Moreover, in recurrence, the entry $C[u,z_u,flag_u,\mathcal{\phi}]$ only considers the minimum value of $z$ in the summation such that at most $z+1$ vertices need to be recolored for $v_i$ to be $p$-illusion-free. 

    Therefore, we recolor the minimum number of vertices in $T_u$ to remove the $p$-illusion of in-children of $u$.

    \item  $flag_u=0$.

That is $u$ is not recolored.
    Then in order for $v_i$ to be $p$-illusion free (if ever possible by recoloring only the out-children of $v_i$), at least $\text{\textit{def}}_p(v_i)$ out-children of $v_i$ must be recolored. Now, similar to Case 1.1, we can say, that we are doing a minimum  recolorings of the out-children of $v_i$ in \Cref{eq:inchi} and recolor at least $\text{\textit{def}}_p(v_i)$ many to remove $p$-illusion of in-children of $u$. %in $T_u\setminus \{u,w_{1},... ,w_{s}\}$.
    \end{enumerate}
    Now, by the induction hypothesis, the entries $C[v_i,z_u,flag_u,last(v_i)]$ for all $v_i$ that is a in-child of $u$ have been correctly computed. Thus, recoloring $C[u,z_u,flag_u,\emptyset]$ vertices eliminates $p$-illusion for all vertices in the forest $T_u\setminus \{u,w_{1},... ,w_{s}\}$.

    \item \textit{Case 2:} Computing the entry $C[u,z_u,flag_u,w_q]$.
    
    We show that $C[u,z_u,flag_u,w_q]$ is a minimum $p$-illusion-free recoloring of $T_u\setminus \{u,w_{q+1},... ,w_{s}\}$. Note that by the induction hypothesis, we have already computed $C[w_q, z, 0, last(w_q)]$ and $C[w_q, z, 1, last(w_q)]$ optimally. Recall, $z$ denotes the out-neighbors of $w_q$ recolored. The minimum value of $z$ with $z\geq \text{\textit{def}}_p(w_q)$ is considered in the $1^{st}$ term (when $w_q$ is not recolored) or in the $3^{rd}$ term (when $w_q$ is recolored) of RHS of \Cref{eq:outchi}. Therefore, in both cases, we are doing a minimum recoloring. 
    Finally, by the induction hypothesis, the entries $C[u,z_u,flag_u,w_k]$ for all $w_k$ that precedes $w_q$ in $\sigma$ have been correctly computed. Thus, recoloring $C[u,z_u,flag_u,w_q]$ vertices eliminates $p$-illusion for all vertices in the forest $T_u\setminus \{u,w_{q+1},... ,w_{s}\}$.
    \end{itemize}
    \end{sloppypar}

    Finally, we compute $C[r,z_r,flag_r,last(r)]$ with $z_r\geq \text{\textit{def}}_p(r)$ ensuring $r$ is $p$-illusion-free. This completes the proof of the lemma.

    \end{proof}

    Hence, the entry $C[r,z_r,flag_r,last(r)]$ at the root vertex stores the minimum number of recoloring required for the tree $T$ to be $p$-illusion-free.

    The runtime depends on the number of entries to compute and the time needed to compute each entry. The total number of entries is at most $2\times n\times n\times n=2\times n^3$. To compute each entry, we are doing $\mathcal{O}(n)$ work. Hence, the runtime of the algorithm is $2\times n^3\times \mathcal{O}(n)=\mathcal{O}(n^4)$, polynomial in the instance size. 
\end{proof}

   \section{Parameterized Algorithms}

This section establishes the fixed-parameter tractability of $p$-\textsc{Difr} under two different parameterizations. We first present an FPT algorithm parameterized by the treewidth of the graph and the maximum deficiency. We present a PTAS for the $p$-\textsc{Difr} problem on planar graphs in Appendix ~\ref{sec:PTAS}.
%, and a result for we get a $\lambda$-outerplanar graphs. 
Subsequently, we provide FPT algorithms parameterized by the number of vertices under $p$-illusion and the treedepth of the constraints matrix of the ILP formulation of $p$-\textsc{Difr} separately. 

\subsection{FPT Algorithm Parameterized by Treewidth and Maximum Deficiency}

The NP-hardness  even on bipartite DAGs, eliminate the possibility of designing FPT algorithm parameterized by most directed width parameters including directed treewidth, directed cutwidth,  digraph bandwidth, degreewidth as all of these measure the distance from acyclicity.

We begin with the detailed construction of the treewidth-based algorithm.

%The polynomial-time algorithm for directed trees (\Cref{thm:dt}) motivates us to focus on the  parameters treewidth and maximum degree.  We show an FPT algorithm with respect to this combined parameters.
 
We design a dynamic-programming algorithm, using a nice tree decomposition of the underlying undirected graph, that tracks recoloring decisions for  each bag of the decomposition  and illusion status for the vertices in the bag to obtain the following result. For definitions of treewidth and tree decomposition see \cite{cygan2015parameterized}.

\begin{restatable}{theorem}{thmtreewidthfpt}\label{thm:treewidthfpt}
    Let $G$ be an $n$-vertex directed graph with deficiency at most $D$ and tree width \textit{tw}. Then the $p$-\textsc{Difr} problem on $G$ is solvable in time $\mathcal{O}((2 D)^{\textit{tw}})\cdot n^{\mathcal{O}(1)}$.
\end{restatable}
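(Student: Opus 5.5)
We will run a standard dynamic program over a nice tree decomposition $(\mathcal{T},\{X_t\}_t)$ of the underlying undirected graph of $G$, of width $\textit{tw}$. One can be computed (or approximated within a constant factor, which only changes the constant in the exponent) in time $2^{\mathcal{O}(\textit{tw})}n^{\mathcal{O}(1)}$. We may use introduce-edge nodes, so that every arc $(u,v)$ of $G$ is introduced exactly once at a node whose bag contains both $u$ and $v$.

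By \Cref{obs:bluerecolor} only red vertices are ever recolored, so a solution is a set $S\subseteq R(V)$. By \Cref{obs:pdef}, a vertex $v$ is $p$-illusion-free under $S$ if and only if $|N^+(v)\cap S|\ge \textit{def}_p(v)$, and $\textit{def}_p(v)\le D$ by assumption. For a node $t$, let $V_t$ be the union of bags in the subtree of $t$ and $E_t$ the arcs introduced there. A state at $t$ assigns to each $v\in X_t$ a pair $(s_v,c_v)$:
\begin{itemize}
\item $s_v\in\{0,1\}$ records whether $v\in S$; it is forced to $0$ if $v$ is blue.
\item $c_v\in\{0,1,\dots,\textit{def}_p(v)\}$ records how many recolored out-neighbors of $v$ have been counted via arcs in $E_t$, capped at $\textit{def}_p(v)$.
\end{itemize}
The entry $C[t,(s,c)]$ is the minimum of $|S\cap V_t|$ over partial solutions $S\subseteq V_t\cap R(V)$ that agree with $s$ on $X_t$, that have the counts $c$ on $X_t$, and such that every forgotten vertex $v\in V_t\setminus X_t$ satisfies $|N^+(v)\cap S|\ge\textit{def}_p(v)$. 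The forgotten condition is well defined because all arcs at a forgotten vertex have already been introduced. The key point for the bound is this: if $v$ has deficiency $0$, then $c_v$ has one value; otherwise $c_v$ has at most $D+1$ values, and then $s_v$ is forced to $0$ except when $v$ itself is red. Hence each vertex has $\mathcal{O}(D)$ states, at most $2D$ after a little bookkeeping, and there are $(2D)^{\mathcal{O}(\textit{tw})}$ states per bag. I expect the exact constant in the exponent to require care; with a careful count of the states per vertex this should give the stated $\mathcal{O}((2D)^{\textit{tw}})$ up to polynomial factors.

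The transitions are as follows:
\begin{itemize}
\item \textbf{Leaf:} the empty state has value $0$.
\item \textbf{Introduce $v$:} set $c_v=0$, choose $s_v$ (only $0$ if $v$ is blue), and add $s_v$ to the cost.
\item \textbf{Introduce arc $(u,w)$:} if $s_w=1$, replace $c_u$ by $\min(c_u+1,\textit{def}_p(u))$.
\item \textbf{Forget $v$:} keep only states with $c_v=\textit{def}_p(v)$ and minimize over $s_v$.
\item \textbf{Join with children $t_1,t_2$:} $s$ must agree on both sides, and $c_v=\min(c^1_v+c^2_v,\textit{def}_p(v))$. The costs add, minus $\sum_{v\in X_t}s_v$ to avoid double counting.
\end{itemize}
Correctness follows by the usual induction on $\mathcal{T}$. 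Each arc is counted exactly once, so the capped counts are exact up to the cap, and capping is harmless because only the threshold $\textit{def}_p(v)$ matters. The answer is the value at the empty root bag, compared with $k$.

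The main obstacle is the join node. Naively combining all pairs $(c^1_v,c^2_v)$ gives $(D+1)^2$ combinations per vertex, hence $D^{2\textit{tw}}$ work overall, which exceeds the claimed bound. I would first try a capped subset-sum (max-min) convolution, processing bag vertices one coordinate at a time so that each step costs $\mathcal{O}(D)$ times the table size. If that fails, I would switch to a "still needed" representation in which a child's requirement is a lower bound, making the tables monotone. The join then becomes a min-plus convolution of monotone tables computed coordinate-wise. This keeps the total running time at $(2D)^{\mathcal{O}(\textit{tw})}n^{\mathcal{O}(1)}$, and with careful accounting it matches the form in \Cref{thm:treewidthfpt}.
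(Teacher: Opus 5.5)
\textbf{The algorithm matches the paper's; the running-time bound is not established.} Your dynamic program is essentially the paper's. A state records which red bag vertices are recolored, plus a counter per bag vertex of recolored out-neighbours already accounted for. Forget nodes demand at least $\textit{def}_p(v)$, and joins subtract the doubly counted $|S|$. Your use of introduce-edge nodes and counters capped at $\textit{def}_p(v)$ is a cleaner version of the paper's bookkeeping. The paper instead counts bag-internal arcs at introduce-vertex nodes, corrects by $|S\cap N^+(u)|$ at joins, and does not cap explicitly. The correctness part of your argument is fine.

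\textbf{The gap is the running time.} You locate it correctly but do not close it. With the naive join, your algorithm spends up to $\prod_{v\in X_t}(1+[v\in R(V)])(\textit{def}_p(v)+1)^2\le (2(D+1)^2)^{\textit{tw}+1}$ time per join node, which gives $D^{\mathcal{O}(\textit{tw})}n^{\mathcal{O}(1)}$.

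The paper's proof uses exactly this naive join, since it enumerates $z^1$. It then rewrites $2^{\textit{tw}+1}D^{\textit{tw}+1}\cdot D^{\textit{tw}+1}$ as $\mathcal{O}((2D)^{\textit{tw}})$, which is an arithmetic error. So the paper does not supply the missing idea either.

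Your two proposed repairs do not work as stated:
\begin{enumerate}
\item \emph{State count.} A red vertex of deficiency $D$ has $2(D+1)$ states: recolored or not, times counter values $0,\dots,D$. These are pairwise inequivalent with respect to extensions, so no bookkeeping reduces them to $2D$. For $D=1$ the claimed bound is $2^{\textit{tw}}$. Yet all-red instances with symmetric arcs, padded with blue sink out-neighbours so that every deficiency is exactly $1$, are precisely \textsc{Total Dominating Set}, whose standard treewidth dynamic program has base $4$.
\item \emph{Join.} A min-plus convolution over $\prod_v\{0,\dots,\textit{def}_p(v)\}$ does not factor coordinate by coordinate. After one coordinate is handled, the intermediate table is still indexed by both children's remaining vectors, so you pay about $D^{2\textit{tw}}$ again. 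Switching to monotone ``at least'' tables does not change this. A coordinate-wise join needs an invertible per-coordinate transform, and that exists only after moving to a ring. For example, index by solution size and count partial solutions, then apply a ``don't care'' state change plus (cyclic) convolution with a filter on the total count, in the style of van Rooij, Bodlaender and Rossmanith. This should give roughly $(2D+2)^{\textit{tw}}n^{\mathcal{O}(1)}$, not $(2D)^{\textit{tw}}$.
\end{enumerate}

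\textbf{What is actually proved.} Your proposal, and equally the paper's proof, establishes fixed-parameter tractability in $\textit{tw}+D$ with running time $D^{\mathcal{O}(\textit{tw})}n^{\mathcal{O}(1)}$. The specific base $2D$ in the statement is unsupported. You should either state the weaker bound, or the $(2D+2)^{\textit{tw}}$ bound with an explicit fast-join argument, rather than assert that careful accounting recovers $(2D)^{\textit{tw}}$.
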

    
\begin{proof}

   Let $\cT=(T,\{X_t\}_{t\in V(T)})$ be a tree decomposition of the underlying undirected graph of the input $n$-vertex directed graph $G$ that has width at most $tw$ and has $p$-deficiency at most $D$.  It's a well-known result that we can get a nice tree decomposition of $G$ in polynomial time from a given tree decomposition by Lemma $7.4$ of \cite{cygan2015parameterized}. By abuse of notation, let us say that $\cT$ is a nice tree decomposition of $G$. Since we define a nice tree decomposition for rooted trees, let us assume that $T$ is rooted at some node $r$. In the nice tree decomposition of the graph, we will consider the orientations of the edges among vertices in the bags as they are in the original directed graph.

    For a node $t$ of the decomposition $T$, let $X_t$ be the set of vertices of $G$ in the bag of node $t$. Let $V_t$ be the union of all the bags present in the subtree of $T$ rooted at $t$, including $X_t$. The subgraph induced by $V_t$ can communicate with the rest of the graph only via bag $X_t$.
    
    We would like to define subproblems depending on the interaction between the solution and the bag $X_t$. 
    
    For every node $t$, every $S\subseteq X_t\cap R(V)$, and each vector $z$ of length $|X_t|$, called as the \textit{state vector}, with each component $z_v \in \{0,1, \dots,|X_t\cap R(V)|\}$, we define the following value:

$C_t[S,z]$= minimum number of vertices that need to be recolored such that \begin{enumerate}[label=(\roman*)]
    \item $S\subseteq X_t\cap R(V)$ are the only vertices recolored from $X_t$,
    \item For a vertex $v_i\in X_t$, we have $z_{v_i}$  red out-neighbors that have been recolored in $V_t$, and 
    \item every vertex in $V_t\setminus X_t$ is $p$-illusion-free.
\end{enumerate}
\medskip

In the definition of $C_t[S,z]$, the set of vertices $V_t\setminus X_t$ is $p$-illusion free means that we are ensuring the fact that the vertices in the subtree rooted at $t$ that are forgotten (i.e, not in the bag $X_t$) are $p$-illusion-free.
The vector $z$ accounts for the number of red out-neighbors recolored in $V_t$ for each vertex in the bag $X_t$. When the bag is empty, an all zero vector $z$ is denoted by $\emptyset$. We compute the entries recursively and show it correctly computes a minimum recoloring by induction on the tree decomposition.

    \textbf{Leaf node.} If $t$ is a leaf node, then $X_t=\phi$ and we have only one value $C_t[\phi,\phi]=0$.

\textbf{Introduce node.} Suppose $t$ is an introduce node with child $t'$ such that $X_t=X_{t'}\cup \{v\}$ for some $v\notin X_{t'}$. Let $S$ and $z$ be as defined above. We claim that the following formula holds:

\begin{equation}\label{eq:intronode}
    C_t[S,z]=
    \begin{cases}
        C_{t'}[S,z'] & \text{if $v\notin S$}\\
        C_{t'}[S\setminus \{v\},z']+1 &\text{if $v\in S$}
    \end{cases}
\end{equation}
where $z_v = |S \cap N^+(v)|$ and for all $u\in X_{t'}$, we have

$z_u=
\begin{cases}
    z_{u}'+1 & \text{if $v\in S\cap N^+(u)$}\\
    z_{u}' & \text{otherwise.}
\end{cases}$

To prove formally that this formula holds, consider first the case when $v\notin S$. In this case we show that the optimal value is computed by the recursive formula. As $v \notin S$, we have not recolored any new vertex when considering $X_t$. This immediately implies that $z_u'=z_u$ for each $u \in X_t$. By induction hypothesis, we know $C_{t'}[S,z']$ is given the minimum number of recolorings so far in $V_{t'}$ with $S$ being recolored in $X_{t'}$ such that $V_{t'}\setminus X_{t'}$ is $p$-illusion free. Thus, from the definition the entry $C_t[S,z]=C_{t'}[S,z']$. 

Suppose $v\in S$ and consider $C_t[S,z]$ .
First we show that the LHS is at most the value computed by the recursive computation. We have considered $S$ in $X_{t}$, so it follows that, $S'=S\setminus\{v\}$ is a set considered in the bag $X_{t'}$. As $v\notin X_{t'}$, so by including $v$ in $S'$, we increase the number of recoloring by 1 for every vertex that has $v$ as an out-neighbor in $X_{t'}$ and $z'_u$ is replaced by $z_u'+1$ for every $u$ with $v\in S\cap N^+(u)$ and otherwise remains unchanged. So $C_t[S,z]$ is at most $C_{t'}[S\setminus \{v\},z']+1$.

This gives, 

$C_t[S,z]\leq C_{t'}[S\setminus \{v\},z']+1$, 

with 
$z_u'=
\begin{cases}
    z_{u}-1 & \text{if $v\in S\cap N^+(u)$}\\
    z_{u} & \text{otherwise.}
\end{cases}$

Next we show that the recursive formula on the right hand side gives the minimum number of vertices recolored such that $V_t \setminus X_t$ is $p$-illusion free and each vertex $u$ in $X_t$ has $z_u$ recolored out-neighbors. Let $C_{t'}[S\setminus \{v\},z']$ gives a minimum recoloring with the required property for $X_{t'}$. Now as we are recoloring $v$ (that is, $v\in S$) in $X_t$, so  $C_t[S,z]-1$ is at least $C_{t'}[S\setminus \{v\},z']$ and $z$ should be consistent with $z'$. That is, $C_t[S,z]-1\geq C_{t'}[S\setminus \{v\},z']$ with

$z_u=
\begin{cases}
    z_{u}'+1 & \text{if $v\in S\cap N^+(u)$}\\
    z_{u}' & \text{otherwise.}
\end{cases}$

Thus, we show the equality in \Cref{eq:intronode}.

\textbf{Forget node.} Suppose $t$ is a forget node with child $t'$ such that $X_t=X_{t'}\setminus \{v\}$ for some $v\in X_{t'}$. We claim that the following formula holds:

\begin{equation}\label{eq:fornode}
    C_t[S,z]= \displaystyle\min_{\substack{z':z_u'=z_u\ \forall\ u\in X_{t'} \\ z_v'\geq \textit{def}_p(v) }} 
    \begin{cases}
        C_{t'}[S,z'], \\ C_{t'}[S\cup \{v\},z']
    \end{cases}
\end{equation}

Note that, when we forget $v$, we ensure that all out-neighbors and in-neighbors of $v$ have been processed and all their contributions have been locked in $z$. So, we never let deficiency of any vertex increase or decrease at forget.

We now give a formal proof of this formula. Since we forget $v$, we need to ensure that $v$ is $p$-illusion-free, as $v$ will never appear above the bag $X_{t'}$. First, consider $C_t[S,z]$, where $S\subseteq X_t\cap R(V)$ and $z$ is the corresponding state vector for $X_t$. Since we are at a forget node $v\in X_{t'}$, the DP state vector $z'$ still tracked the component $z'_v$. As $v\notin X_t$, $v$ does not appear in the state vector $z$. To compute the entry for $X_t$, we need to consider both possibilities for $v$, whether it is recolored or not in $V_{t'}$ and then take the minimum over the two possibilities. 

Note that, by definition, the minimum number of recolored vertices is stored in $C_t[S,z]$. When $S$ is   considered in the definition of $C_{t'}[S,z']$ with $z'$ that is consistent with $z$, we have $C_{t'}[S,z']\leq C_t[S,z]$ (by definition of $C_{t'}[S,z']$). If not, then we must have considered $S\cup \{v\}$ at bag $X_{t'}$, because  $X_{t'}\setminus X_t=\{v\}$ and $C_{t'}[S\cup\{v\},z']\leq C_t[S,z]$. Moreover, since we only consider the values of $z'_v\geq \textit{def}_p(v)$ in the minimization, the vertex   $v$ is $p$-illusion free in both the cases. Additionally, using induction hypothesis, each vertex in $V_t\setminus X_{t'}$ is $p$-illusion free. Therefore, when we consider $C_t[S,z]$, each vertex in $V_t \setminus X_t$ is $p$-illusion free. Thus, $C_t[S,z]$ stores the minimum number of recolored  vertices such that each vertex $u$ in $X_t$ has $z_u$ recolored out-neighbors and every vertex in $V_t\setminus X_t$ is $p$-illusion free. This shows that

$C_t[S,z]\geq \displaystyle\min_{\substack{z':z_u'=z_u\ \forall\ u\in X_{t'} \\ z_v'\geq \textit{def}_p(v) }}
    \begin{cases}
        C_{t'}[S,z'], \\ C_{t'}[S\cup \{v\},z']
    \end{cases}$

\medskip
Now we will show the other direction, that is, $C_t[S,z]$ is at most the minimum of two quantities on the right-hand side of \cref{eq:fornode} such that $V_t\setminus X_t$ becomes $p$-illusion free and each vertex $u$ in $X_t$ has $z_u$ recolored out-neighbors. Note that $C_{t'}[S,z']$ ensures minimum number of recolorings in the subtree $V_{t'}$ such that $V_{t'}\setminus X_{t'}$ becomes $p$-illusion free. As $V_t\setminus X_t=V_{t'}\setminus X_{t'}\cup \{v\}$, so to make sure that $V_t\setminus X_t$ becomes $p$-illusion free it is enough to make $v$ and  $V_{t'}\setminus X_{t'}$ $p$-illusion free. This gives that, either $C_{t}[S,z]\leq C_{t'}[S,z']$ or $C_t[S,z] \leq C_{t'}[S\cup\{v\},z']$ provided that $z'_v\geq \textit{def}_p(v)$ and $z_u'=z_u$ for all $u\in X_{t'}$. So we get,

$C_t[S,z]\leq \displaystyle\min_{\substack{z':z_u'=z_u\ \forall\ u\in X_{t'} \\ z_v'\geq \textit{def}_p(v)} }
    \begin{cases}
        C_{t'}[S,z'], \\ C_{t'}[S\cup \{v\},z']
    \end{cases}$

    Thus, we prove \Cref{eq:fornode}.

\textbf{Join node.} Suppose that $t$ is a join node with children $t_1, t_2$ such that $X_t=X_{t_1}=X_{t_2}$. The claimed recursive formula is as follows:

$C_t[S,z]= \displaystyle\min_{z^1} \{C_{t_1}[S,z^1]+C_{t_2}[S,z^2]-|S|\}$,

where \begin{equation} \label{eq:svector}
    z_u^2=z_u-z_u^1+|S\cap N^+(u)|
\end{equation} for all $u\in X_t$.

We now give a proof of this formula.

First, we will show that RHS gives a feasible  solution for $C_t[S,z]$. Take any choice of $z^1$ such that $z^2$ defined by the \Cref{eq:svector} satisfies the required constraints in $V_{t_2}$. We have taken both subtrees $V_{t_1}$ and $V_{t_2}$ to have exactly the same recoloring $S$ in $X_t$. For each $u\in X_t$, the total number of red out-neighbors recolored across the entire subtree $V_t$ is:

$z_u^1+z_u^2-|S\cap N^+(u)|
=z_u^1+(z_u-z_u^1+|S\cap N^+(u)|)-|S\cap N^+(u)|
=z_u$.

Moreover, any vertex in $V_t\setminus X_t$ belongs to either $V_{t_1}\setminus X_t$ or $V_{t_2}\setminus X_t$, and are guaranteed to be $p$-illusion free by the subproblems $C_{t_1}$ and $C_{t_2}$. So, $V_{t}\setminus X_t$ is $p$-illusion free. Therefore, the total recoloring cost is $C_{t_1}[S,z^1]+C_{t_2}[S,z^2]-|S|$, here we subtract $|S|$ to avoid double-counting of recolored vertices in $X_t$. This gives,

$C_t[S,z]\leq \displaystyle\min_{z^1} \{C_{t_1}[S,z^1]+C_{t_2}[S,z^2]-|S|\}$, with $z_u^2=z_u-z_u^1+|S\cap N^+(u)|$. 

Conversely, suppose we are given an optimal solution for $C_t[S,z]$. For each $u\in X_t$, let $z_u^1$ be the contribution from $V_{t_1}$ and $z_u^2$ from $V_{t_2}$, ensuring $z_u^1+z_u^2-|S\cap N^+(u)|=z_u$. 
This formula is valid, because, if a vertex $u\in X_t$ has an out-neighbor $a$ outside $X_t$, then the edge ($u,a$) belongs to exactly one subtree $V_{t_1}$ or $V_{t_2}$, as outside of $X_t$ the two subtrees are disjoint (the set of bags containing $v$ forms a connected subtree). So any contribution from such an edge is counted exactly once, by the subtree that contains it. If the out-neighbor $a\in X_t$, the edge ($u, a$) belongs to both the subtrees, because both $u$ and $a$ are in $X_t$. Hence it is counted twice, once in $V_{t_1}$ and once in  $V_{t_2}$. To correct this over-counting we subtract $|S\cap N^+(u)|$, where $S\cap N^+(u)$ counts the recolored out-neighbors of $u$ within $X_t$. 

Again, every vertex in $V_t\setminus X_t$ is $p$-illusion free. These vertices belongs to exactly one subtree (as bags with a common vertex are connected). This ensures that $V_{t_1}\setminus X_{t_1}$ and $V_{t_2}\setminus X_{t_2}$ are $p$-illusion free.  Hence we get,
$C_t[S,z]\geq \displaystyle\min_{z^1} \{C_{t_1}[S,z^1]+C_{t_2}[S,z^2]-|S|\}$. This completes the proof of the join node recurrence. 

By applying the formulas in a bottom-up manner on $T$ we will finally compute $C_r[\phi, \phi]$, where $r$ is the root. 

Let us analyze the runtime of this algorithm. Total runtime = (number of entries to compute) $\times$ (time taken to compute one entry). Total number of entries=$2^{\textit{tw}+1} \times {D}^{\textit{tw}+1} \times n^{\mathcal{O}(1)}$,where  $2^{\textit{tw}+1}$ denotes the number of subsets of $X_t$, ${D}^{\textit{tw}+1}$ denotes the number of vectors $z$ for $X_t$ and $n^{\mathcal{O}(1)}$ denotes the number of bags. Runtime to compute one entry is ${D}^{\textit{tw}+1}$. Therefore, the total runtime is $2^{\textit{tw}+1} \times {D}^{\textit{tw}+1} \times n^{\mathcal{O}(1)}\times {D}^{\textit{tw}+1}=\mathcal{O}(2D)^{\textit{tw}}\cdot n^{\mathcal{O}(1)}$. This completes the proof.
\end{proof}

The above theorem has applications to classes of planar graphs, called the outerplanar graphs. Thus, we shift out focus to planar graphs. We begin with the definition of $\lambda$-outerplanar graphs from \cite{bodlaender1998partial}.

\begin{definition}[$\lambda$-Outerplanar Graphs] An embedding of a graph $G = (V,E)$ is $1$-outerplanar, if it is planar, and all vertices lie on the exterior face. For $\lambda \geq 2$, an embedding of a graph $G = ( V,E)$ is $\lambda$-outerplanar, if it is planar, and when all vertices on the outer face are deleted, then a $(\lambda - 1)$-outerplanar embedding of the resulting graph is obtained. A graph is $\lambda$-outerplanar, if it has a $\lambda$-outerplanar embedding.   
\end{definition}

Moreover, the treewidth of a $\lambda$-outerplanar graph $G = (V, E)$ is at most $3\lambda- 1$ \cite{bodlaender1998partial}. We have the following corollary.

\begin{restatable}{corollary}{corkouterplanar}\label{cor:k-outerplanar}
    The $p$-\textsc{Difr} problem can be solved in polynomial-time on an $n$-vertex $\lambda$-outerplanar graphs. 
\end{restatable}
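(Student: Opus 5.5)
The plan is to combine the treewidth bound for $\lambda$-outerplanar graphs quoted just above with \Cref{thm:treewidthfpt}, treating $\lambda$ as a fixed constant. The corollary only claims polynomial time for each fixed $\lambda$, i.e.\ an XP-type bound in $\lambda$. I would proceed in three steps.

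\emph{Step 1: treewidth.} The underlying undirected graph of $G$ is $\lambda$-outerplanar, so by \cite{bodlaender1998partial} its treewidth is at most $3\lambda-1$. Orientations are irrelevant here, because the algorithm of \Cref{thm:treewidthfpt} works on a tree decomposition of the underlying undirected graph.

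\emph{Step 2: computing a decomposition.} A tree decomposition of width $\mathcal{O}(\lambda)$ can be obtained in polynomial time. One option is Bodlaender's construction from a $\lambda$-outerplanar embedding, where the embedding itself is computable in polynomial time for fixed $\lambda$. Another is any algorithm that, for constant $k$, decides treewidth at most $k$ and outputs a decomposition in polynomial (indeed linear) time. Lemma~7.4 of \cite{cygan2015parameterized} then converts it into a nice tree decomposition, as in the proof of \Cref{thm:treewidthfpt}.

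\emph{Step 3: bounding $D$.} \Cref{thm:treewidthfpt} gives running time $\mathcal{O}((2D)^{\textit{tw}})\cdot n^{\mathcal{O}(1)}$, and here $D$ need not be bounded. However, the $p$-deficiency of any vertex $v$ satisfies $\textit{def}_p(v)\le \lceil p\,|N^+(v)|\rceil \le n-1$, so $D\le n$. Substituting $D\le n$ and $\textit{tw}\le 3\lambda-1$ gives
\[
\mathcal{O}\bigl((2n)^{3\lambda-1}\bigr)\cdot n^{\mathcal{O}(1)} = n^{\mathcal{O}(\lambda)},
\]
which is polynomial for fixed $\lambda$.

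The main obstacle is checking that the state space of the dynamic program in \Cref{thm:treewidthfpt} really is bounded in terms of $D$ (or $n$), since its definition lets each component $z_v$ range over $\{0,\dots,|X_t\cap R(V)|\}$. I would observe that each $z_v$ may be capped at $\textit{def}_p(v)\le D$: the forget-node recurrence only tests $z'_v\ge \textit{def}_p(v)$, so all larger values behave identically. Even without the cap, each component takes at most $n+1$ values, so there are at most $(n+1)^{\textit{tw}+1}$ state vectors per bag, again polynomial for constant $\lambda$. Either way the corollary follows.
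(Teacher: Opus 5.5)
Your proposal is correct and follows the paper's argument: substitute $\textit{tw}\le 3\lambda-1$ (Bodlaender) and $D\le\Delta\le n$ into \Cref{thm:treewidthfpt}, which gives $n^{\mathcal{O}(\lambda)}$ time for fixed $\lambda$. Your additional points are sound refinements that the paper omits: computing the decomposition in polynomial time, and noting that even uncapped state vectors number at most $(n+1)^{\textit{tw}+1}$. Your $n^{\mathcal{O}(\lambda)}$ bound is also more accurate than the paper's stated $\mathcal{O}(n^{3\lambda-1})$, which drops the $n^{\mathcal{O}(1)}$ factor.
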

\begin{proof}
    Follows immediately from \Cref{thm:treewidthfpt}. Since the treewidth of a $\lambda$-outerplanar graph is at most $3\lambda-1$, so we can solve the $p$-\textsc{Difr} problem in time $\mathcal{O}(n^{3\lambda-1})$ (note that here the maximum degree, $\triangle\leq n$).
\end{proof}

\subsection{Formulation of ILP and the FPT algorithms}\label{sec:ILP}
We show that the $p$-\textsc{Difr} problem is fixed-parameter tractable with respect to the number of vertices under $p$-illusion. We first formulate the problem as an Integer Linear Program (ILP), which serves as the basis for developing an FPT algorithm.

\textit{The ILP formulation of the $p$-\textsc{Difr} problem}:

Let us define a binary variable for every red vertex in the graph.

Let $x_u=\begin{cases}
    1, \text{if}\ u\ \text{is recolored}\\
    0, \text{otherwise}
\end{cases}$
% We define the following integer linear programme.

\[\texttt{ILP}\text{-}p\text{-}\texttt{\textsc{Difr}: }\quad minimize\quad\sum_{u\in R(V)}  x_u\qquad\qquad\]
% \[ {\mathrm ({\tt IP-1a})}\qquad \qquad \max \quad \sum\limits_{e\in E^*(G^*)} -y_e  \label{bad:obj}
Subject to the constraints,

\begin{align}
\sum_{u\in N^+(v)} x_u \geq  \text{\textit{def}}_p(v) &\quad \forall \ v \in X_p,\label{ineq:classcontribution}\tag{$p$-illusion free}\\
x_u\in \{0, 1\}\ &\quad \forall\ u \in R(V).\label{ineq:classbound}\tag{feasibility}
\end{align}

Inequalities \eqref{ineq:classcontribution} say that among the red out-neighbors of $v$, at least $\text{\textit{def}}_p(v)$ must be recolored. The inequalities in \eqref{ineq:classbound} represent the binary choice of whether or not to recolor a given red vertex. 

\begin{restatable}{observation}{obsILPfeasibility}\label{obs:ILPfeasibility}
    The ILP formulated above correctly characterizes the $p$-\textsc{Difr} problem, and its optimal solution corresponds to an optimal recoloring.
\end{restatable}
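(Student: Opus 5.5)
The plan is a bijection between feasible ILP solutions and illusion-free recolorings (restricted to red vertices), under which objective value equals recoloring size. Minima then coincide. By \Cref{obs:bluerecolor} we may assume every recoloring changes only red vertices to blue. Any such recoloring is determined by a set $S\subseteq R(V)$, and we associate to $S$ its indicator vector $x^S$ with $x_u=1$ iff $u\in S$. This is a bijection between subsets of $R(V)$ and vectors satisfying \eqref{ineq:classbound}, and $\sum_u x^S_u=|S|$.

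\textbf{Key equivalence.} The recoloring by $S$ is $p$-illusion-free if and only if $x^S$ satisfies \eqref{ineq:classcontribution}. Fix a vertex $v$. Recoloring $S$ changes no out-degree, and it raises the number of blue out-neighbors of $v$ from $b_v$ to $b_v+|S\cap N^+(v)|$. Here the sum $\sum_{u\in N^+(v)}x_u$ is read as ranging over red out-neighbors, since variables exist only for red vertices; it equals $|S\cap N^+(v)|$. So $v$ is $p$-illusion-free after recoloring iff $b_v+|S\cap N^+(v)|\ge\lceil p|N^+(v)|\rceil$. This holds iff $|S\cap N^+(v)|\ge \lceil p|N^+(v)|\rceil-b_v$, and since the left side is nonnegative, iff $|S\cap N^+(v)|\ge \textit{def}_p(v)$.
\begin{itemize}
\item For $v\notin X_p$ we have $\textit{def}_p(v)=0$, so $v$ is automatically illusion-free. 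This justifies imposing constraints only for $v\in X_p$.
\item For $v\in X_p$ the condition is exactly the corresponding constraint. This mirrors \Cref{obs:pdef}: each recolored red out-neighbor lowers $\textit{def}_p(v)$ by one.
\end{itemize}

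\textbf{Conclusion.} Feasible ILP points correspond exactly to solutions of $p$-\textsc{Difr} that recolor only red vertices, with equal cost. By \Cref{obs:bluerecolor}, restricting to such solutions loses no optimality, because undoing any blue-to-red change in a solution keeps it feasible and does not increase its size. Hence an optimal ILP solution yields a minimum recoloring, and conversely; in particular the instance is a yes-instance iff the ILP optimum is at most $k$.

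The only delicate point is the justification that blue vertices need never be recolored. Recoloring a blue vertex red only decreases blue counts of its in-neighbors, so reverting it preserves feasibility, which is exactly the content of \Cref{obs:bluerecolor}. Everything else is the one-line arithmetic above.
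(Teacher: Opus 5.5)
Your proposal is correct and follows essentially the paper's route: the constraint for $v$ is equivalent to $v$ being $p$-illusion-free because each recolored red out-neighbor lowers $\text{\textit{def}}_p(v)$ by one (\Cref{obs:pdef}), and minimizing the objective then gives a minimum recoloring. You are more careful than the paper in proving both directions of this equivalence and in justifying the restriction to red-only recolorings via \Cref{obs:bluerecolor}; the paper additionally notes that the feasible set is nonempty (recolor everything blue), which in your bijection is the case $S=R(V)$.
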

\begin{proof}
    The \Cref{obs:pdef} implies that if in-equation \eqref{ineq:classcontribution} is satisfied for a vertex $v$, then $v$ is $p$-illusion free.  Observe that the feasibility set of the ILP is always non-empty as there always exists an $p$-illusion free recoloring by labeling all vertices to blue. Thus, by minimizing the objective function, the above ILP produces an optimal solution.
\end{proof}
    
Using the defined ILP, we obtain the following two results regarding $p$-\textsc{Difr}.

\begin{restatable}{corollary}{corpsolutionsizefpt}\label{cor:psolutionsizefpt}
    The $p$-\textsc{Difr} problem can be solved in time $2^{|X_p|\log{|X_p|}} n^{\mathcal{O}(1)}$, where $X_p$ is the set of vertices under $p$-illusion. 
\end{restatable}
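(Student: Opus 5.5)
The plan is to exploit the ILP \texttt{ILP}-$p$-\texttt{\textsc{Difr}}, which Observation~\ref{obs:ILPfeasibility} shows is correct. It has one variable per red vertex but only $|X_p|$ nontrivial constraints. The number of variables can be unbounded, so Lenstra-type algorithms do not apply directly. Instead I would compress the variables by their \emph{type}.

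\textbf{Step 1: types.} For each red vertex $u$, define its type $\tau(u)=N^-(u)\cap X_p\subseteq X_p$. Two red vertices of the same type appear in exactly the same constraints \eqref{ineq:classcontribution} with coefficient $1$, so they are interchangeable. Red vertices with $\tau(u)=\emptyset$ never need recoloring. There are at most $2^{|X_p|}$ types, and for a type $T$ let $n_T$ be the number of red vertices of type $T$.

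\textbf{Step 2: aggregated ILP.} Introduce integer variables $y_T$ with $0\le y_T\le n_T$. Minimize $\sum_T y_T$ subject to
\[
\sum_{T\ni v} y_T\ge \mathit{def}_p(v)\quad\text{for all } v\in X_p .
\]
Any feasible $y$ is realized by recoloring any $y_T$ vertices of type $T$, and conversely any recoloring induces such a $y$. So the two optima coincide, and an optimal recoloring is recovered in polynomial time.

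\textbf{Step 3: solving it.} This ILP has $p'\le 2^{|X_p|}$ variables. Invoking Lenstra/Kannan directly would give $p'^{O(p')}$, which is doubly exponential in $|X_p|$ and too weak for the claimed bound. So I would instead use the fact that the constraint matrix has only $m=|X_p|$ rows with $0/1$ entries. The Eisenbrand--Weismantel / Jansen--Rohwedder algorithm for ILPs in standard form with upper bounds solves such programs in time $(m\Delta)^{O(m)}\cdot \mathrm{poly}(p',\log \|b\|)$, where $\Delta=1$ bounds the entries. After adding slack variables, this gives $m^{O(m)}\cdot 2^{m}\cdot n^{O(1)}=2^{O(|X_p|\log|X_p|)}n^{O(1)}$.

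A simpler alternative avoids aggregation entirely: apply the same row-sparse ILP result directly to \texttt{ILP}-$p$-\texttt{\textsc{Difr}}. That ILP already has $|X_p|$ rows (plus slacks and variable bounds), $0/1$ coefficients, and $n$ variables, so it is solved in $(|X_p|)^{O(|X_p|)}n^{O(1)}$ time. I would present this as the main route and use the type compression only to explain why the dependence on $n$ is polynomial.

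\textbf{Main obstacle.} The delicate part is matching the precise exponent $2^{|X_p|\log|X_p|}$ rather than $2^{O(|X_p|\log|X_p|)}$. This requires being careful about how the chosen ILP theorem treats the upper bounds $x_u\le 1$ and the right-hand sides $\mathit{def}_p(v)\le n$. The upper bounds may need to be handled by the Eisenbrand--Weismantel proximity variant for bounded variables. The right-hand sides contribute only polynomially, since $\|b\|_\infty\le n$. I would state the bound with the $O$ absorbed, as is customary.
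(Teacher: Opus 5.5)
Your main route is the paper's own: the paper also feeds \texttt{ILP}-$p$-\texttt{\textsc{Difr}}, with $m=|X_p|$ rows and $\Delta=1$, directly into the Jansen--Rohwedder bound. However, the point you flag as a technicality is a genuine gap, and your proposed fix does not close it.

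\textbf{Why the citation does not apply.} The $(m\Delta)^{O(m)}$-type algorithms (Eisenbrand--Weismantel's standard-form result, and Jansen--Rohwedder's $O(\sqrt{m}\Delta)^{2m}$ bound) are for $\max\{c^Tx : Ax=b,\ x\ge 0,\ x\in\Z^n\}$ with \emph{no} upper bounds on $x$.
\begin{itemize}
\item Surplus variables are harmless: each adds a column $-e_v$, so there are still $m$ rows and $\Delta=1$.
\item The constraints $x_u\le 1$ cannot be added as rows, since that raises the row count to $|X_p|+|R(V)|$.
\item They also cannot be dropped, because the unbounded program may count one red vertex several times. Take $\mathit{def}_p(v)=\mathit{def}_p(w)=2$ with $N^+(v)\cap R(V)=\{a,b\}$ and $N^+(w)\cap R(V)=\{a,c\}$. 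The true optimum is $3$, but $x_a=2$ is feasible with cost $2$ once the bounds are removed.
\item Type compression does not help, since $0\le y_T\le n_T$ are again upper bounds. It is also not needed for polynomial dependence on $n$, which these algorithms already have.
\end{itemize}
The fallback you suggest, Eisenbrand--Weismantel for bounded variables, runs in $(m\Delta)^{O(m^2)}n^{O(1)}$ time, i.e.\ $2^{O(|X_p|^2\log|X_p|)}n^{O(1)}$ here. So your sentence that this algorithm handles standard form with upper bounds in $(m\Delta)^{O(m)}$ time is the false step.

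\textbf{What your argument does establish.} In general it gives FPT in $|X_p|$ with running time $2^{O(|X_p|^2\log|X_p|)}n^{O(1)}$. The claimed $2^{O(|X_p|\log|X_p|)}n^{O(1)}$ holds when every $\mathit{def}_p(v)\le 1$, as in all the paper's hardness reductions. In that case any $x_u\ge 2$ in a solution of the unbounded program can be lowered to $1$ without violating a constraint, so the bounds are redundant and Jansen--Rohwedder applies verbatim.

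For general deficiencies, your ILP is exactly a set multicover instance over the universe $X_p$ with each set usable once. That is a binary program with $|X_p|$ rows of $0/1$ entries. To my knowledge the best known algorithms for such programs take $2^{\tilde O(m^2)}\mathrm{poly}(n)$ time, and whether the exponent can be made subquadratic in $m$ is open; Rohwedder and W\k{e}grzycki relate this question to several other problems. So reaching the stated bound would need a genuinely new idea, not a more careful citation.

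Note that the paper's one-line proof invokes the same theorem without addressing $x_u\in\{0,1\}$. The gap is therefore inherited from the paper's argument rather than introduced by you.
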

\begin{proof}
We observe that our ILP formulation consists of a set of variables representing the potential recoloring of red vertices and $|X_p|$ linear constraints. A key property of this formulation is that the largest coefficient in any constraint is $1$. Following the complexity result by \cite{jansen2023integer}, an ILP with $m$ constraints and maximum coefficient $\Delta$ can be solved in $2^{m \log m + 2m \log \Delta} \cdot n^{O(1)}$ time. Substituting $m = |X_p|$ and $\Delta = 1$ into this bound yields an algorithm for $p$-\textsc{Difr} with a running time of $2^{|X_p| \log |X_p|} \cdot n^{O(1)}$, confirming the problem is in FPT parameterized by $|X_p|$.
\end{proof}

Our second application of the  ILP is using the concept of primal treedepth of the constraint matrix $A$ of the ILP. We need the following definitions (\cite{eisenbrand2019algorithmic}).

\begin{definition} \label{primalgraph}({Primal and Dual graph of $A$})

Given a matrix $A \in {\Z}^{m \times n}$, its \textit{primal graph} $G_P(A) = (V_P, E_P)$ is defined as $V_P = [n]$ and 
\[
E_P = \left\{ \{i, j\} \in \binom{[n]}{2} \middle|\ \exists k \in [m] : A_{k,i}, A_{k,j} \neq 0 \right\}.
\]
In other words, its vertices are the columns of $A$ and two vertices are connected if there is a row with non-zero entries at the corresponding columns. The \textit{dual graph} of $A$ is defined as $G_D(A) := G_P(A^\top)$, that is, the primal graph of the transpose of $A$.
\end{definition}

we always assume that $G_P(A)$ and $G_D(A)$ are connected; otherwise, $A$ has (up to row and column permutations) a block-diagonal structure 
\[
A = \begin{pmatrix} 
A_1 & & \\ 
& \ddots & \\ 
& & A_d 
\end{pmatrix}
\]
and solving the ILP amounts to solving $d$ smaller ILP instances independently.

\begin{definition}\label{treedepth}({Treedepth})
    The \textit{closure} $\text{cl}(F)$ of a rooted tree $F$ is the graph obtained from $F$ by making every vertex adjacent to all of its ancestors. We consider both $F$ and $\text{cl}(F)$ as undirected graphs. The \textit{height} of a tree $F$, denoted $\text{height}(F)$, is the maximum number of vertices on any root--leaf path. The \textit{treedepth} $\text{td}(G)$ of a connected graph $G$ is the minimum height of a tree $F$ such that $G \subseteq \text{cl}(F)$. A \textit{td-decomposition} of $G$ is a tree $F$ such that $G \subseteq \text{cl}(F)$. A td-decomposition $F$ of $G$ is \textit{optimal} if $\text{height}(F) = td(G)$. 
\end{definition}
The treedepth $td$ of a graph $G$ with an optimal td-decomposition $F$ can
be computed in time $2^{td^2}|V(G)|$ \cite{reidl2014faster}.

\begin{proposition}\label{prop:treedepthFPT}{\cite{eisenbrand2019algorithmic}}
    Let $I$ be an instance of an ILP given by $A\in \Z^{m\times n}$ and $td(G_P)$ be the treedepth of $G_P(A)$. Then, $I$ is solvable in time FPT parameterized by $td(G_P)$.
\end{proposition}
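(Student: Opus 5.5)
This proposition is quoted from \cite{eisenbrand2019algorithmic}, so in the paper it is enough to invoke it. If I were to reprove it, I would follow the Graver-basis augmentation framework. Two points need care. The parameter implicitly includes the largest absolute coefficient $\|A\|_\infty$; for \texttt{ILP}-$p$-\texttt{\textsc{Difr}} this is $1$. The variables are box-constrained; here they are $0/1$. Both are harmless for our application.

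\textbf{Step 1 (initial solution).} Obtain a feasible starting point. For \texttt{ILP}-$p$-\texttt{\textsc{Difr}} this is immediate: set $x_u=1$ for all $u\in R(V)$, as in \Cref{obs:ILPfeasibility}. In general I would introduce slack variables and solve an auxiliary ILP whose primal graph has treedepth at most $td(G_P)+1$.

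\textbf{Step 2 (bounded Graver elements).} This is the main obstacle. I would prove by induction along an optimal td-decomposition $F$ of $G_P(A)$ that every element $g$ of the Graver basis of $A$ satisfies $\|g\|_\infty\le h(td(G_P),\|A\|_\infty)$ for some computable $h$. The induction uses the fact that once the variables on the root path of $F$ are fixed, the constraints split into independent blocks, one per child subtree. A conformal (sign-compatible) decomposition argument, via Steinitz-type rearrangement, then bounds each block's contribution in terms of the bound for depth one less.

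\textbf{Step 3 (finding an augmenting step).} Given the current feasible $x$, find a step $\lambda g$ with $Ag=0$ and $\|g\|_\infty\le h$, with $x+\lambda g$ still within the bounds, that decreases the objective the most, up to a constant-factor approximation of the best Graver step. I would do this by dynamic programming over $F$. The state at a node is the assignment of the at most $td$ ancestor variables, each ranging over $[-h,h]$, so there are $(2h+1)^{td}$ states per node, which is an FPT quantity.

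\textbf{Step 4 (number of steps).} By the standard Graver-best augmentation bound, the number of augmentations is polynomial in $n$ and in the logarithm of the objective and the variable bounds. Together with Steps 2 and 3 this gives total running time $f(td(G_P),\|A\|_\infty)\cdot \mathrm{poly}(n,\langle I\rangle)$.

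For our ILP, $\|A\|_\infty=1$ and the bounds are $0/1$, so this is FPT in $td(G_P)$ alone.
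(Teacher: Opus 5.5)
Your proposal is consistent with the paper, which gives no proof of its own and simply invokes Eisenbrand et al.; your Graver-basis augmentation outline is exactly the framework of that reference: Graver elements of bounded $\|\cdot\|_\infty$ along a td-decomposition, a dynamic program over the decomposition for the augmentation step, and the polynomial bound on the number of Graver-best steps. Your caveat is also correct and worth keeping: the cited result is FPT in $td(G_P)$ and $\|A\|_\infty$ combined, not in $td(G_P)$ alone as the proposition is literally phrased, which is harmless here because the constraint matrix of \texttt{ILP}-$p$-\texttt{\textsc{Difr}} is $0/1$ (and adding slack variables to reach equality form raises the primal treedepth by at most one).
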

Since our constraint matrix $A$ in \texttt{ILP}-$p$-\texttt{\textsc{Difr}} is a binary matrix, we use the above proposition to obtain the following.
\begin{restatable}{corollary}{cortreedepthfpt}\label{cor:treedepthfpt}
    $p$-\textsc{Difr} can be solved in time FPT parameterized by the primal treedepth of \texttt{ILP}-$p$-\texttt{\textsc{Difr}}.
\end{restatable}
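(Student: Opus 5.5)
The plan is to apply \Cref{prop:treedepthFPT} directly to \texttt{ILP}-$p$-\texttt{\textsc{Difr}}, after putting the program into the standard form that proposition assumes, namely $\min\{w^\top x : Ax=b,\ \ell\le x\le u,\ x\in\Z^n\}$. Along the way I need to check that this rewriting does not destroy the parameter: the primal treedepth, and the largest absolute coefficient $\|A\|_\infty$, which typically enters the running time of \cite{eisenbrand2019algorithmic}.

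\textbf{Step 1 (the constraint matrix).} The matrix $A$ has one row for each $v\in X_p$ and one column for each $u\in R(V)$. Its entry $A_{v,u}$ is $1$ if $u\in N^+(v)$ and $0$ otherwise. The right-hand side is $b_v=\textit{def}_p(v)$, and the bounds are $0\le x_u\le 1$. So $A$ is a $0/1$ matrix and $\|A\|_\infty=1$.

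\textbf{Step 2 (converting to equality form).} Each inequality $\sum_{u\in N^+(v)}x_u\ge \textit{def}_p(v)$ becomes an equality by adding a slack variable $s_v$ with coefficient $-1$ and bounds $0\le s_v\le |N^+(v)|$. The slack is an integer, so integrality is preserved, and the slack gets weight $0$ in the objective. The coefficients now lie in $\{-1,0,1\}$. In the primal graph, each new column $s_v$ is adjacent only to the columns $x_u$ with $u\in N^+(v)\cap R(V)$. These columns already form a clique in $G_P(A)$.

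\textbf{Step 3 (bounding the treedepth after the conversion).} Adding $s_v$ therefore adds a vertex whose neighbourhood is a clique $K_v$ in the old primal graph. Take an optimal td-decomposition $F$. Every clique of $G_P(A)$ lies on a single root-to-leaf path of $F$. Attach $s_v$ as a new leaf below the deepest vertex of $K_v$. This raises the height by at most one, so the new primal treedepth is at most $td(G_P(A))+1$. If a zero row or column breaks connectivity, I handle it with the block-decomposition remark already made in the paper.

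\textbf{Step 4 (applying the proposition and recovering the solution).} \Cref{prop:treedepthFPT} now gives a running time of $g(td(G_P))\cdot \mathrm{poly}(n,m)$. Any dependence on $\|A\|_\infty$ is a constant here, and the bounds on the variables are polynomial in the input size. By \Cref{obs:ILPfeasibility}, an optimal ILP solution is an optimal recoloring: recolor exactly the red vertices $u$ with $x_u=1$.

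The main obstacle is Step 3, together with making sure I use the exact hypotheses of the cited result. If the version in \cite{eisenbrand2019algorithmic} already accepts inequality constraints, or accepts the equality form with a bounded number of extra slack columns, then Step 3 is unnecessary. Otherwise, the leaf-attachment argument above is the point where care is needed.
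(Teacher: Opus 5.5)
Your proposal is correct and takes the same route as the paper. The paper's proof simply invokes \Cref{prop:treedepthFPT}, having noted that the constraint matrix of the ILP is binary. Your conversion to equality form with slack variables, together with the argument that attaching each $s_v$ as a leaf below the deepest vertex of its clique raises the primal treedepth by at most one, supplies a detail the paper leaves implicit.
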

\begin{proof}
    Follows immediately from \Cref{prop:treedepthFPT}.
\end{proof}

\section{Conclusion}
In this work, we introduce and study majority illusion elimination on directed graphs, by recoloring a minimum number of vertices so that no vertex remains under illusion. We establish that the problem is NP-complete on grids and contrast it with the positive result on an oriented grids, called outward grid. We extend the concept to $p$-illusion and show it remains W[2]-hard even on DAGs when deficiencies are one. On the other hand, we show a polynomial time algorithm for trees and a FPT algorithm parameterized by treewidth and deficiency. We also showed that the problem is hard on bounded degree graphs. Our work opens several directions for future research. First, is it true that $p$-\textsc{Difr} is NP-complete for $p<\frac{\Delta-1}{\Delta}$ for each $\Delta>3$? 
% Second, can $p$-\textsc{Difr} admit FPT algorithm or polynomial kernels with respect to treewidth? 
Second, what are the best possible approximation algorithms for general graphs? Finally, which additional structures can yield more efficient algorithms, specifically for $p$-illusion elimination?

\bibliographystyle{plain}
\bibliography{arxiv}

\newpage
\appendix

\section{Appendix}

We have a \textsc{Ptas} for the $p$-\textsc{Difr} problem on  planar graphs.
\subsection{A PTAS for Planar Graphs}\label{sec:PTAS}
We use the classical layering technique introduced by \cite{baker1994approximation} for designing approximation algorithms on planar graphs. On a high level, we break the input graph into layers to solve the problem optimally in each layer. Then we take a union of these solutions to return a feasible solution for the original input. The proof of the following theorem follows directly from \cite{fioravantes2025eliminating}. We have a similar result for the \textsc{Difr} problem.

\begin{restatable}{theorem}{thmptas}\label{thm:ptas}
For any given $\varepsilon > 0$, there exists an approximation algorithm which computes the
solution of the $p$-\textsc{Difr} problem on a planar graph in $O(n^{1/\varepsilon})$ time and computes a solution
which is $(1 + \varepsilon)$-times the optimum.
\end{restatable}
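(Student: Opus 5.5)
We follow Baker's shifting technique, in the form used by Fioravantes et al.\ for the undirected problem, adapted to directed out-neighbourhoods. Set $\lambda=\lceil 1/\varepsilon\rceil$ and fix a planar embedding of the underlying undirected graph of $G$. Let $L_0,L_1,\dots$ be its BFS layers from an outer-face vertex, so every edge (in either direction) joins vertices in the same or consecutive layers. For each shift $i\in\{0,\dots,\lambda-1\}$ we cut the layer sequence at the layers $L_j$ with $j\equiv i \pmod{\lambda}$. This splits $G$ into \emph{slabs}, each a union of at most $\lambda+1$ consecutive layers; consecutive slabs share their boundary layer. Each slab is $O(\lambda)$-outerplanar.

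\textbf{Subproblem on a slab.} On a slab $H$ we solve a constrained version of $p$-\textsc{Difr} exactly. Only the vertices of the \emph{interior} layers of $H$ must become $p$-illusion-free. Their complete out-neighbourhoods lie inside $H$, because all neighbours of a vertex lie in adjacent layers. Vertices in the boundary layers are allowed to stay under illusion, but they may still be recolored.

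We solve this by the tree-decomposition DP of \Cref{thm:treewidthfpt}, changing only the forget-node test: we require $z'_v\ge \textit{def}_p(v)$ only when $v$ is interior. The slab has treewidth $O(\lambda)$ by the outerplanarity bound cited before \Cref{cor:k-outerplanar}. Deficiencies can be as large as $n$, but we apply the dynamic programme exactly as in \Cref{cor:k-outerplanar}, bounding $D\le n$. This gives running time $n^{O(\lambda)}=n^{O(1/\varepsilon)}$. If sharper bookkeeping is wanted, we can cap each $z_v$ at $\textit{def}_p(v)$, since larger values are never needed.

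\textbf{Combining the slabs.} For a fixed shift $i$, we take the union $S_i$ of the optimal slab solutions. Every vertex is interior to some slab, and its out-neighbourhood is contained in that slab. Recoloring only adds blue vertices to out-neighbourhoods (\Cref{obs:bluerecolor}), so extra recolorings made by other slabs never hurt. Hence $S_i$ is feasible for $G$. We output the smallest $S_i$ over all shifts $i$.

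\textbf{Cost bound.} Let $OPT$ be an optimal solution. Its restriction to each slab is feasible for that slab's subproblem, so $|S_i|\le |OPT|+|OPT\cap \bigcup_{j\equiv i}L_j|$; the extra term counts vertices of the shared boundary layers, which are charged twice. Summing over the $\lambda$ shifts, each layer is a boundary layer for exactly one shift. Therefore some shift has $|S_i|\le (1+1/\lambda)|OPT|\le(1+\varepsilon)|OPT|$.

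\textbf{Main obstacle.} The delicate point is keeping every vertex's full out-neighbourhood inside one slab in which that vertex is interior, while still charging only one layer per shift. If the slabs are cut so they overlap by only a single layer, a vertex on the cut layer has out-neighbours in both adjacent slabs. We therefore let slabs overlap in the cut layer and treat that layer as boundary in both slabs. Every vertex is then still interior in some slab, because the cut layers change with the shift. If this fails for some vertex, we would widen the overlap to two layers and rerun the charging argument with constant $2/\lambda$, choosing $\lambda=\lceil 2/\varepsilon\rceil$.
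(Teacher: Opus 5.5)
Your overall plan is the right one: shifted layers, a per-slab subproblem in which only interior vertices must become $p$-illusion-free, the treewidth DP to solve it, and charging via the restriction of $OPT$ to each slab. However, the version you actually commit to does not output feasible solutions.

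\textbf{The gap.} Take slabs $[i+m\lambda,\,i+(m+1)\lambda]$ sharing a single cut layer $L_c$ with $c\equiv i \pmod{\lambda}$. Every vertex of $L_c$ is a boundary vertex in both slabs that contain it, so no subproblem constrains it. In general its out-neighbourhood is even split between $L_{c-1}$, which lies only in the lower slab, and $L_{c+1}$, which lies only in the upper one. Concretely, suppose the only vertex under $p$-illusion is some $v\in L_c$ with a single red out-neighbour. Then every slab optimum is empty and $S_i=\emptyset$. Since you return the smallest $S_i$, this infeasible set is exactly what the algorithm outputs. Your remark that every vertex is interior in some slab ``because the cut layers change with the shift'' does not rescue this. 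The algorithm returns one $S_i$, its feasibility must hold for that fixed $i$, and being interior under another shift is irrelevant. So the sentence ``Every vertex is interior to some slab'' in your combining step is false, and $(1+1/\lambda)|OPT|$ bounds sets that may be infeasible.

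\textbf{The fix.} The two-layer overlap you list as a fallback is therefore not optional; it is the argument, and it is what the paper does. Its pieces span layers $i+j(k+1)$ through $i+(j+1)(k+1)+1$, and only the middle layers $i+j(k+1)+1,\dots,i+(j+1)(k+1)$ must become illusion-free. So the constrained layers partition $V$ and consecutive pieces share two layers.

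In your notation, take pieces on layers $[i+m\lambda-1,\,i+(m+1)\lambda]$ with interior $[i+m\lambda,\,i+(m+1)\lambda-1]$. Each vertex is then interior in exactly one piece, which contains its whole out-neighbourhood, so $S_i$ is feasible. Let $W_i$ be the union of the two-layer overlaps; your charging gives $|S_i|\le |OPT|+|OPT\cap W_i|$. Each layer lies in $W_i$ for exactly two of the $\lambda$ shifts (for $\lambda\ge 2$), so some shift has $|S_i|\le(1+2/\lambda)|OPT|$, and $\lambda=\lceil 2/\varepsilon\rceil$ suffices.

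Your charging step (the restriction of $OPT$ to a piece is feasible for that piece's constrained subproblem) is actually cleaner than the paper's. The paper asserts without justification that the piece optima coincide with $OPT$ on the inner layers.

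\textbf{Minor points.} With BFS layers, the $O(\lambda)$-outerplanarity of a slab needs the standard step of contracting $L_0\cup\dots\cup L_{a-1}$ to a single vertex; this is immediate for the paper's peeling layers. Also, both your analysis and the paper's give running time $n^{O(1/\varepsilon)}$ rather than literally $O(n^{1/\varepsilon})$.
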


\begin{proof}

We have the following algorithm for planar graphs.
\begin{algorithm}
\caption{PTAS on Planar Graphs}
\label{alg:ptas}
\begin{algorithmic}[1]
\Require Directed Graph $G$, function $f$, parameter $\varepsilon$
\State $k \gets 4 / \varepsilon$;
\State Find the outerplanar layers of the vertices
\For{$i = 0$ \textbf{to} $k - 1$}
    \State Find the components $G^i_1, G^i_2, \ldots$ of $G$, where each $G^i_j$ contains vertices on the layers from $i + j(k + 1)$ to $i + (j + 1)(k + 1) + 1$
    \For{$j = 0, 1, 2, \ldots$}
        \State Compute $A^i_j$, the solution of \textsc{Difr} on $G^i_j$
        \Statex \hspace{\algorithmicindent} setting $S = \bar{G}^i_j$, where $\bar{G}^i_j$ contains vertices on layers from
        $i + j(k + 1) + 1$ to $i + (j + 1)(k + 1)$
        \State $A^i \gets \bigcup_j A^i_j$;
    \EndFor
\EndFor
\State Let $A$ be the minimum solution among $\{A^0, A^1, \ldots, A^{k-1}\}$;
\State \Return $A$
\end{algorithmic}
\end{algorithm}

We show the correctness of \Cref{alg:ptas}.

\textbf{Correctness.}
Let $A^i_j$ be the solution of the problem on $G^i_j$ for each $i,j$ using \Cref{thm:treewidthfpt}. Let $\widehat{G}^i_j \subseteq G^i_j$ be the induced subgraph on the vertices of layers
$i + j(k + 1) + 2$ to $i + (j + 1)(k + 1) - 1$.
Let $\widehat{A}^i_j = A^i_j \cap \widehat{G}^i_j$.
Let $S$ be the minimum set of vertices that must be relabeled to eliminate all the illusions in $G$.
Let $\widehat{S}^i_j = S \cap \widehat{G}^i_j$.
Clearly, $\widehat{A}^i_j =\widehat{S}^i_j$ (because \Cref{thm:treewidthfpt} gives optimal solution on $G_j^i$).
Let $V^i_j \subseteq V$ be the vertices on layers $i + j(k + 1)$ and $i + j(k + 1) + 1$.
Let $S^i_j = S \cap V^i_j$ and $S^i = \bigcup_j S^i_j$.
Observe that $\widehat{S}^i_j$ and $S^i$ are disjoint and
$\sum_{i=0}^{k-1} |S^i| = 2|S|$ (because $S^i$ and $S^{(i+1)}$ have common layer $(i+1)+j(k+1)$).
Hence, there exists one $i^*$ such that $|S^{i^*}| \leq 2|S|/k$. Let $A = \bigcup_j A^{i^*}_j$ be the union
of the solution of the $G^{i}_j$'s. Then, $A$ is a potential candidate set of vertices for relabeling for the graph $G$, as for a fixed $i$ we get that the $G^i_j$'s cover the entire graph $G$. Let $A_j = A \cap V^{i^*}_j$ be the output of the algorithm restricted to the vertices on layers $i^* + j(k + 1)$ and $i^* + j(k + 1) + 1$.

We claim that for every $j$, we have that $|A_j| \leq 2|S^{i^*}_j|$. Indeed, the set $S^{i^*}_j$ contains the subset of the optimum solution restricted to the vertices on layers $i^* + j(k + 1)$ and $i^* + j(k + 1) + 1$. Hence, $S^{i^*}_j$ can be seen as the union of two sets
$\widehat{S}^{i^*}_j$ and $\widehat{S}^{i^*}_{j'}$
where $\widehat{S}^{i^*}_j$ contains the vertices on
layer $i^* + j(k + 1)$ and $\widehat{S}^{i^*}_{j'}$ contains the vertices on layer $i^* + j(k + 1) + 1$.
Moreover, the set $A_j$ can be seen as a union of four sets:
$A^{i^*}_j$, $A^{i^*}_{j'}$, $A^*_j$ and $A^*_{j'}$, where
\begin{itemize}
    \item $A^{i^*}_j$ contains vertices from the layer $i^* + j(k + 1)$ that appear in the solution of the piece $G^{i^*}_{j-1}$; and
    \item $A^{i^*}_{j'}$ contains vertices from the layer $i^* + j(k + 1) + 1$ that appear in the solution of the piece $G^{i^*}_j$; and
    \item $A^{*}_j$ contains vertices from the layer $i^* + j(k + 1)$ that appear in the solution of the piece $G^{i^*}_j$; and
    \item $A^{*}_{j'}$ contains vertices from the layer $i^* + j(k + 1) + 1$
    that appear in the solution of the piece $G^{i^*}_{j-1}$.
\end{itemize}

Now, out-neighbors of a vertex at some layer $i$
can only be present in layers $i - 1$, $i$, and $i + 1$.
Hence, vertices of $A^{i^*}_j$ are essential for the solution of $G^{i^*}_{j-1}$ and vertices of $A^*_j$ are essential for the solution of $G^{i^*}_j$. Thus, $|A^*_j \cup A^{i^*}_j| \leq 2|\widehat{S}^{i^*}_j|$. Following a similar argument, we can show that
$|A^*_{j'} \cup A^{i^*}_{j'}| \leq 2|\widehat{S}^{i^*}_{j'}|$. Together, these two inequalities imply that $|A_j| \leq 2|S^{i^*}_j|$.

To finish the proof, it suffices to set $k = \frac{4}{\varepsilon}$.
As discussed above, a solution using our
algorithm is a feasible solution for the problem.
Consider the solution $A$ which is minimum among different choices of $i$.
It holds that
\[
|A| = \sum_j |A^{i^*}_j| + \sum_j |A_j|
\leq |S| + \frac{4|S|}{k}
= (1 + \varepsilon)|S|.
\]
Hence the theorem.
\end{proof}
 
\end{document}